\documentclass[12pt]{l4dc2021} 

\usepackage{etoc}

\usepackage{mathtools}
\usepackage{cleveref}
\usepackage{textcomp}
\usepackage{booktabs}

\usepackage{parskip}

\usepackage{algorithm}
\usepackage{algorithmic}
\usepackage{etoolbox}

\newtoggle{l4dc}
\togglefalse{l4dc}  

\newcommand{\C}{\mathbb{C}}
\newcommand{\K}{\mathbb{K}}

\newcommand{\R}{\mathbb{R}}
\newcommand{\E}{\mathbb{E}}

\newcommand{\pN}{\mathcal{N}}

\newcommand{\pr}{\mathbb{P}}
\newcommand{\norm}[1]{\left\lVert#1\right\rVert}
\newcommand{\normno}[1]{\lVert#1\rVert}
\newcommand{\abs}[1]{\left\lvert#1\right\rvert}

\newcommand{\data}{\mathcal{D}}

\newcommand{\vspacefigurecaptionhigh}{\vspace{-10pt}}
\newcommand{\vspacefigurecaptionlow}{\vspace{-6pt}}

\DeclareMathOperator{\Tr}{Tr}
\DeclareMathOperator{\diag}{diag}
\DeclareMathOperator*{\argmin}{argmin}

\DeclareMathOperator*{\var}{var}

\DeclareMathOperator*{\vectorized}{vec}

\title[Learning Stabilizing Controllers for LQR]{Learning Stabilizing Controllers for Unstable Linear Quadratic Regulators from a Single Trajectory}
\usepackage{times}


\author{\Name{Lenart Treven} \Email{trevenl@ethz.ch}\\
 \Name{Sebastian Curi} \Email{scuri@inf.ethz.ch}\\
 \Name{Mojmir Mutny} \Email{mmutny@inf.ethz.ch}\\
 \Name{Andreas Krause} \Email{krausea@ethz.ch}\\
 \addr ETH Zürich}


\begin{document}

\etocdepthtag.toc{mtchapter}
\etocsettagdepth{mtchapter}{subsection}
\etocsettagdepth{mtappendix}{none}

\maketitle

\begin{abstract}%
    The principal task to control dynamical systems is to ensure their stability. When the system is unknown, robust approaches are promising since they aim to stabilize a large set of plausible systems simultaneously. We study linear controllers under quadratic costs model also known as linear quadratic regulators (LQR). We present two different semi-definite programs (SDP) which results in a controller that stabilizes all systems within an ellipsoid uncertainty set. We further show that the feasibility conditions of the proposed SDPs are \emph{equivalent}. Using the derived robust controller syntheses, we propose an efficient data dependent algorithm -- \textsc{eXploration} -- that with high probability quickly identifies a stabilizing controller. Our approach can be used to initialize existing algorithms that require a stabilizing controller as an input while adding constant to the regret. We further propose different heuristics which empirically reduce the number of steps taken by \textsc{eXploration} and reduce the suffered cost while searching for a stabilizing controller. 
\end{abstract}

\begin{keywords}%
LQR, stabilizing controller, ellipsoid credibility region
\end{keywords}

\section{Introduction}
\label{section:Introduction}

{\em Dynamical systems} are ubiquitous in real world applications, ranging from autonomous robots \citep{ExampleRobotics}, energy systems \citep{EnergyExample} to manufacturing \citep{ManufacturingExample}. Control theory \citep{trentelman2001control} seeks to find an optimal input to the system to ensure a desired behavior while suffering low cost. In particular, {\em linear} dynamical systems with quadratic costs can model a variety of practical problems \citep{Tornambe1998theory}, and enjoy an elegant solution referred to as {\em Linear Quadratic Regulator (LQR)}, whose history goes back to \citet{Kalman1960}. 

\looseness -1 
Despite the long and rich history of the LQR problem, {\em learning} dynamical systems and finding a stabilizing or optimal controller is still an actively studied problem. 
On one hand, there are systems that can be reset to an initial condition. 
For such systems, the multiple-trajectory (episodic) setting is natural and the exploration costs in unstable systems can be controlled by resetting the system. 
This setting is well studied and efficient algorithms rely on {\em certainty equivalent control} (CEC) \citep{Mania2019CertaintyEI}.
On the other hand, {\em unstable} systems that cannot be reset must be stabilized online from a {\em single} trajectory. 
After stabilization, there are different efficient algorithms that find an optimal controller \citep{simchowitz2020naive,Cohen2019SPDRelaxation,Abeille2020}.
Crucially, the algorithms that find an optimal controller require an initial stabilizing controller.
This privileged information is essential to ensure that unstable systems do not  ``explode''.
However, such prior knowledge is not always available. 

In this work, we address the problem of finding a stabilizing controller for a linear dynamical system in a single online trajectory. 
On the left plot of \Cref{figure: motivation}, we show the difference in cost between finding a stabilizing controller and an optimal one. 
When the true system is unstable, and no knowledge of a stabilizing controller is available, the system costs grow exponentially fast. 


\vspace{-1mm}

\paragraph{Contributions}
\looseness -1 
We extend the robust formulation of \citet{Dean2018OnSampleComplexity} from stabilizing all systems within some spectral norm around estimates, to the more general case when the synthesized controller stabilizes all systems within an ellipsoidal uncertainty set. We extend the obtained result to our main contribution where we prove equivalence between common robust controller synthesis algorithms. In particular we show that the controller synthesis of \citet{Umenberer2019}, which also tries to stabilize systems inside ellipsoid around estimates, and the derived SLS synthesis for ellipsoids share the same feasibility region: when one algorithm finds a stabilizing controller, so does the other one. 
Using the proposed robust controller syntheses applied to the ellipsoidal regions obtained from the Bayesian setting we propose an algorithm \textsc{eXploration}. The vanilla version synthesizes a stabilizing controller for the true underlying system in finite time with high probability. The vanilla \textsc{eXploration} approach probes the unknown system with zero-mean Gaussian actions. Additionally, we empirically show that with the robustly motivated choices of system probing which are different from zero-mean Gaussian, we reduce the length of the \textsc{eXploration} and {\em{substantially}} lower the total suffered cost. We demonstrate the practicality of our method on the standard common benchmark problems.

\begin{figure}
    \centering
    \includegraphics[width=0.8\textwidth]{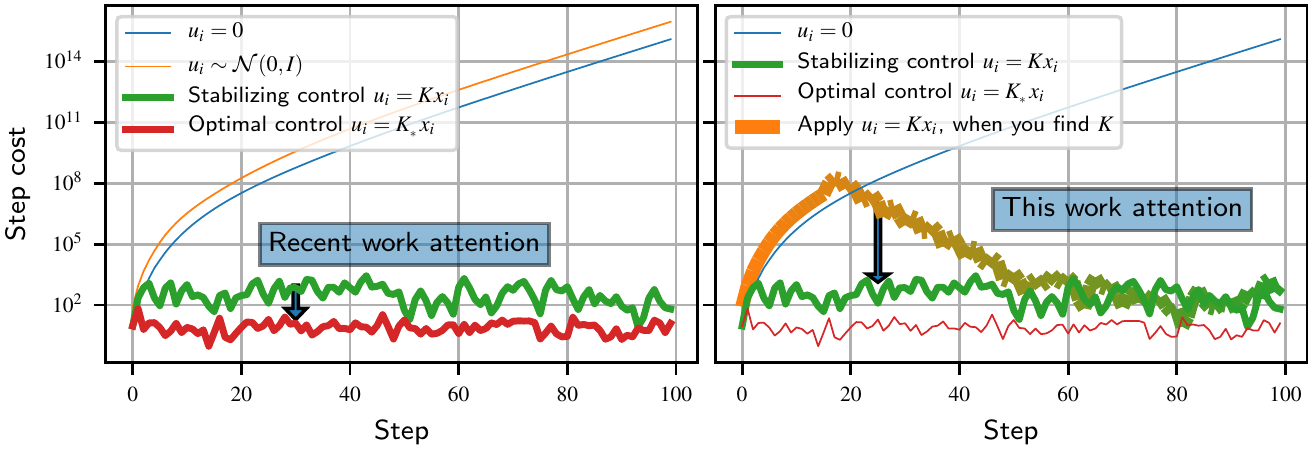}
    \vspacefigurecaptionhigh
    \caption{While recent work attention was mostly focused on how to adaptively progress from a given stabilizing controller to an optimal one, this work attention is how to find a stabilizing controller in the single-trajectory setting.\vspacefigurecaptionlow}
    \label{figure: motivation}
\end{figure}


\subsection{Related Work}
\vspace{-1mm}
Linear dynamical systems have been extensively studied in control theory (cf., \citet{Doyle1996}), nevertheless the interest reemerged in the Machine Learning community after the seminal work of \citet{abbasi2011}. \citet{Cohen2019SPDRelaxation,Mania2019CertaintyEI,simchowitz2020naive,Abeille2020,lale2020explore,faradonbeh2020optimism} and many others show that approaches relying on the optimism in the face of uncertainty (OFU) principle or certainty equivalence principle achieve $\mathcal{O}(\sqrt{T})$ regret in the online single trajectory setting with stochastic disturbances. However, most of the algorithms in this setting require the knowledge of an {\em initial stabilizing controller}, which might not always be available.
\looseness -1

\paragraph{System Identification}
The first step in \textsc{eXploration} is to {\em learn} the true system matrices $A_*, B_*$ using System Identification tools.  
\citet{simchowitz2018learning} show that the ordinary least squares (OLS) estimator attains a near optimal error rate $\mathcal{O}(1/\sqrt{i})$, where $i$ is the number of steps in the single trajectory setting for the case when $\rho(A_*) \le 1$\footnote{Spectral radius of matrix $A$ is defined as $\rho(A) = \max\{\abs{\lambda}|\exists v: Av = \lambda v\}$}. 
They further argue that more unstable systems are easier to estimate and prove exponential error decay for one dimensional unstable systems. 
\citet{FaradonbehIdentification} tackle more challenging general systems with eigenvalues everywhere but on the unit circle. 
They show that in this case for \emph{regular}\footnote{System is regular if every eigenvalue $\mu$ of $A_*$ with $\abs{\mu} >1$, has geometric multiplicity equal to 1.} systems the ordinary least squares (OLS) estimator is consistent. 
Further, \citet{Sarkar2018Identification} extend the OLS consistency to general regular systems. They show that the estimation error scales as $\mathcal{O}(1/\sqrt{i})$. 
\citet{Umenberer2019} show that the OLS is the same as the maximum (Gaussian) likelihood estimator. 
They further introduce an ellipsoid region around the estimates where the system lies with high confidence. 
We extend their idea to the Bayesian setting, where we assume a Gaussian prior on the system parameters, and show that, in this case, the maximum a-posteriori estimator is equivalent to the regularized least squares (RLS) estimator. 
Applying the analysis of \citet{Sarkar2018Identification} we show that the associated data dependent high probability credibility regions are consistent.
\looseness -1 

\paragraph{Controller Synthesis} 
The second step in \textsc{eXploration} is to synthesize a stabilizing controller for all systems in the credibility region that the System Identification step outputs. 
\citet{Dean2018OnSampleComplexity} derive a robust semi-definite program based on system level synthesis (SLS) whose  solution results in a stabilizing controller.
They use a multi-trajectory setting to build 2-ball confidence regions around the estimates.
We extend their algorithm to a tighter ellipsoidal region around the estimates.
It turns out that the SLS synthesis with an ellipsoidal region finds a stabilizing controller as the robust LQR synthesis proposed by \citet{Umenberer2019}. 
\citet{FaradonbehStabilization} propose a non-robust strategy and rely on well-known stability bounds of LQRs \citep{safonov1977gain}. 
The main contribution is that they identify the system in closed-loop by sampling different controllers from a Gaussian distribution so that they avoid irregularity of the closed loop matrix a.s. 
The main practical limitation is that one needs to specify the running time of the algorithm a-priori using unknown system parameters. 
On the other hand \textsc{eXploration} provably terminates in finite time solving a convex SDP without specifying an a-priori termination time.

\section{Problem Statement and Background}
\vspace{-1mm}
\label{section:Problem Statement}
We consider a system evolving with the following linear dynamics
\begin{align}
\label{equation:SystemEvolution}
    x_{i+1} = A_*x_i + B_*u_{i} + w_{i+1}, \quad x_0 = 0,
\end{align}
where $x_i \in \R^{d_x}$ are states, $u_i \in \R^{d_u}$ actions and $(w_i)_{i\ge 1} \stackrel{\mathclap{i.i.d.}}{\sim} \pN(0, \sigma_w^2I)$ unobserved Gaussian noise in $\R^{d_x}$. The matrices $A_*\in \R^{d_x\times d_x}, B_*\in \R^{d_x \times d_u}$ are unknown transition matrices. We sample actions $u_i$ from a policy $\pi$, which at every step $i$ maps the current history $((x_j)_{j\le i}, (u_j)_{j<i})$ to a distribution over actions. We assume that the system is \emph{stabilizable}, which means that there exists a matrix $K \in \R^{d_u \times d_x}$ such that $\rho(A_*+B_*K)<1$. At step $i$, we incur a cost $c_i$ given by
\begin{align}
\label{equation:StateCost}
    c_i = x_i^\top Q x_i + u_i^\top Ru_i,
\end{align}
where $Q \in \R^{d_x \times d_x}, R \in \R^{d_u \times d_u}$ are known positive definite matrices. 

When the system matrices $A_*, B_*$ are known, the optimal solution in the infinite horizon setting is given by the fixed map $u_i = K_*x_i$ and the optimal cost is $J_*$ \citep{bertsekas2000dynamic}. Hereby, $K_* = -(R + B_*^\top PB_*)^{-1}B_*^\top PA_*$, where $P$ is the solution to the {\em discrete algebraic Ricatti equation} of the system,  $P = DARE(A_*, B_*, Q, R)$.

While most of the recent work focuses on finding the optimal controller $K_*$ suffering the least possible cummulative cost, they require the knowledge of an initial {\em{stabilizing}} controller $K_0$. 
In this work, we focus on \textbf{finding a stabilizing controller in the single-trajectory setting}. 
If the system is unstable, the difference between using a stabilizing controller or not using a stabilizing controller results in an {\em exponential} difference in the suffered cost due to blow-up of the system. 
On the other hand, the difference between using a stabilizing controller and the optimal one results in a {\em linear} difference in the suffered cost as summarized in \Cref{figure: motivation}. 
Consequently, it is very desirable that a stabilizing controller is found quickly.

\section{Identifying A Stabilizing Controller}
\vspace{-1mm}
\label{section:Robust syntheses}
As we are trying to stabilize the system without knowing anything non-trivial about the dynamical system certain blow-up of the state from zero is inevitable. As this blow-up increases exponentially fast the state magnitude, it is essential that this period is kept very short. There are conceptually two variables we can influence as algorithm designers, a) what control signal we input to the system what we refer to as probing and b) stopping rule, which determines when we should stop probing and sufficient information about the system has been gathered to construct a stabilizing controller. 

In this work we focus on the latter and derive a {\em data-dependent stopping rule} based on feasibility of a semi-definite program. Our method is versatile and can be combined with any control inputs and we show that it terminates in finite time under zero-mean Gaussian control inputs. 
Our formalism is derived under a general assumption that we can construct estimates of the system matrices and ellipsoidal sets which with high probability contain the true system matrices or they serve as a surrogate for this task as is the case with the Bayesian approach.
More specifically,  we derive our results under the assumption that after playing a policy $\pi$ for $i$ steps we have estimates $(\widehat{A}_i, \widehat{B}_i)$ of the system $(A_*, B_*)$ and ellipsoid 
\begin{equation}\label{eq:ellipsoid}
\Theta_i = \{(A, B)| \Delta^\top D_i \Delta \preceq I, \Delta^\top = (A, B) - (\widehat{A}_i, \widehat{B}_i)\}
\end{equation}
around estimates for which we believe that $(A_*, B_*) \in \Theta_i$. 
Here $D_i$ is a data-dependent positive definite matrix. We present an example of how to construct such ellipsoidal region in \Cref{subsection: Bayesian credibility Region}.

In the following two subsections, we derive two different robust synthesis algorithms for uncertainty sets in the ellipsoidal region \eqref{eq:ellipsoid}. 
In \Cref{subsection:Equivalence}, prove that these two seemingly different approaches are actually equivalent in terms of robust stability.

\subsection{Robust System Level Synthesis (SLS)}
\label{subsection: Robust formulation from System Level Synthesis}
\vspace{-1mm}
\looseness -1 Our first stopping rule is based on a relaxation stemming from the SLS framework \citep{wang2019system}. In particular, we extend the work of \citet{Dean2018OnSampleComplexity} to ellipsoidal regions \eqref{eq:ellipsoid}. \citet{Dean2018OnSampleComplexity} show that a controller $K$ stabilizes {\em all} systems $(A, B) \in \Theta_i$ if for every $(A~B) \in \Theta_i$ we have:
\begin{align}
\label{equation:HInfinityConditionToStabilize}
    \norm{
    \Delta^\top
    \begin{pmatrix}
     I \\
     K
    \end{pmatrix}
    \left(
    zI-\widehat{A}_i-\widehat{B}_iK
    \right)^{-1}
    }_{\mathcal{H}_\infty} < 1,
\end{align}
where $\Delta^\top = (A~B)-(\widehat{A}_i~\widehat{B}_i)$.
The $\mathcal{H}_\infty$-norm for a function $f: \C \to \C^{d \times d}$  is defined as $\norm{f}_{\mathcal{H}_\infty} = \sup_{z \in \partial \mathbb{D}}\norm{f(z)}$, where $\mathbb{D} = \{z\in \C|\norm{z}<1\}$ is a unit disk in the complex plane.
With the current formulation we need to ensure that $\mathcal{H}_\infty$-norm constraint \eqref{equation:HInfinityConditionToStabilize} holds for every $(A, B) \in \Theta_i$. 
The main difference with \citet{Dean2018OnSampleComplexity} is that we apply the S-Lemma of \citet{LMIs2004} to obtain an equivalent formulation with a single $\mathcal{H}_\infty$-norm constraint using the ellipsoidal region instead of a 2-ball. Next, we transform the $\mathcal{H}_\infty$-norm constraint to a convex semi-definite constraint applying the KYP-Lemma \citep{KYPMoreReliableSource}. The equivalent feasibility problem reads:
\begin{equation}
\label{semi-definite constraint}
    \begin{aligned}
    \min_{X\succ 0, S, t\in (0, 1)} 0,  \quad
    &\qquad\text{s.t.} \quad 
     \begin{pmatrix}
    X - I & \widehat{A}_iX+\widehat{B}_iS & 0  \\
    (\widehat{A}_iX+\widehat{B}_iS)^\top & X & \begin{pmatrix} X \\ S \end{pmatrix}^\top\\
    0 & \begin{pmatrix} X \\ S \end{pmatrix} &tD \\
    \end{pmatrix}
    \succeq 0.
    \end{aligned}
\end{equation}
The stabilizing controller is extracted from the solution of the Robust SLS \eqref{semi-definite constraint} as $K=S X^{-1}$.  We show the derivation details in the \Cref{appendix: Semi Definite Program from SLS}\iftoggle{l4dc}{ of the extended paper \citep{treven2020learning}}{}.

\subsection{Robust Linear Quadratic Regulator (LQR)}
\label{subsection: Robust formulation from Semi-Definite Program}
\vspace{-1mm}
The derivation of our second robust controller synthesis is based on the reformulation of the LQR problem, which finds the optimal infinite horizon controller. This reformulations lends itself to an efficient SDP relaxation \citep{boyd1994linear}. 
We follow the exposition from \citet{Cohen2018}, assuming we know matrices $A_*, B_*$ we can obtain the optimal infinite controller $K_*$ by first solving
\begin{equation}
\label{optimal infinite hoizon policy SDP ineq}
\begin{aligned}
\min_{\Sigma \succeq 0}
    &\Tr\left(
    \begin{pmatrix}
        Q & 0 \\
        0 & R
    \end{pmatrix}
    \Sigma
    \right)\\
    &\text{s.t.} \quad
    \Sigma_{xx} \succeq (A_* ~ B_*)\Sigma (A_* ~ B_*)^\top  + \sigma_w^2 I,
\end{aligned}
\end{equation}
and then extracting the optimal controller as $K_* = \Sigma_{ux}\Sigma_{xx}^{-1}$. Here $\Sigma = \begin{pmatrix}\Sigma_{xx} & \Sigma_{xu} \\ \Sigma_{ux} & \Sigma_{uu}\end{pmatrix}$, where $\Sigma_{xx} \in \R^{d_x \times d_x}$ and $\Sigma_{uu} \in \R^{d_u \times d_u}$, represents the joint covariance matrix of the state and action. The derivation with the motivation behind the SDP \eqref{optimal infinite hoizon policy SDP ineq} is given in \Cref{appendix: Optimal Infinite Horizon via SDP}\iftoggle{l4dc}{ of the extended paper \citep{treven2020learning}}{}, where we also show in \Cref{lemma: Better covariance matrix for Epsilon region} that the semi-definite constraint in the SDP \eqref{optimal infinite hoizon policy SDP ineq} ensures that controller synthesized as $K = \Sigma_{ux}\Sigma_{xx}^{-1}$ stabilize the system $A_*, B_*$. Inspired by \citet{Umenberer2019}, the robust formulation of the SDP problem \eqref{optimal infinite hoizon policy SDP ineq} is:
\begin{equation}
\label{SDP: stabilizing SDP}
    \begin{aligned}
        \min_{\Sigma \succeq 0} &\Tr\left(
    \begin{pmatrix}
    Q & 0 \\
    0 & R
    \end{pmatrix}
    \Sigma
    \right)\\
    &\text{s.t. } \forall (A, B) \in \Theta_i: ~ 
    \Sigma_{xx} \succeq (A~B)\Sigma(A~B)^\top + \sigma_w^2 I.
    \end{aligned}
\end{equation}
As in \Cref{subsection: Robust formulation from System Level Synthesis} we have to ensure that one condition has to hold for every system in the ellipsoid $\Theta_i$. Applying the S-Lemma we reformulate problem given by \cref{SDP: stabilizing SDP} to an equivalent convex SDP:
\begin{equation}
\label{SDP: stabilizing SDP final}
    \begin{aligned}
        \min_{\Sigma \succeq 0, t \ge 0} &\Tr\left(
    \begin{pmatrix}
    Q & 0 \\
    0 & R
    \end{pmatrix}
    \Sigma
    \right)\\
    &\text{s.t. } 
    \begin{pmatrix}
        \Sigma_{xx} - (\widehat{A}_i~\widehat{B}_i)\Sigma (\widehat{A}_i~\widehat{B}_i)^\top - (t + \sigma_w^2) I & (\widehat{A}_i~\widehat{B}_i)\Sigma \\
        \Sigma (\widehat{A}_i~\widehat{B}_i)^\top & t D - \Sigma
    \end{pmatrix}\succeq 0.
    \end{aligned}
\end{equation}
The stabilizing controller is extracted from the optimal solution as $K = \Sigma_{ux}\Sigma_{xx}^{-1}$.

\subsection{Equivalence} \label{subsection:Equivalence}
At first glance one could think that we have derived two completely different control synthesis procedures, however as we will see this is not the case and the two optimization problems have the same feasible regions.

\begin{theorem}
\label{theorem: feasibility equivalence}
    The Robust SLS given by \Cref{semi-definite constraint} has a nonempty solution if and only if the Robust LQR given by \Cref{SDP: stabilizing SDP final} has a nonempty solution.
\end{theorem}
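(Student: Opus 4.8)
The plan is to reduce both feasibility problems, by Schur complements, to one and the same scalar‑parametrized matrix inequality, and then to match the two parameter ranges by a positive rescaling. Throughout write $\widehat{M}=(\widehat{A}_i~\widehat{B}_i)$, $V=\begin{pmatrix} I\\ K\end{pmatrix}$, $\widehat{A}_{cl}=\widehat{A}_i+\widehat{B}_iK$ and $E=V^\top D^{-1}V$, and define $\Phi(Y,s)=s\,\widehat{A}_{cl}(sI-YE)^{-1}Y\widehat{A}_{cl}^\top$ for $Y\succ0$, $s>0$ with $YE\prec sI$. The elementary but crucial fact is the homogeneity $\Phi(\alpha Y,\alpha s)=\alpha\,\Phi(Y,s)$ for every $\alpha>0$, which follows by pulling the scalar out of the inverse.

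First I would normalize the Robust LQR side. Because the robust constraint in \eqref{SDP: stabilizing SDP} is monotone in $\Sigma$ with respect to the positive semidefinite order, any feasible $\Sigma$ (with $\Sigma_{xx}\succ0$) may be replaced by the rank‑structured $\Sigma=V\Sigma_{xx}V^\top$ with $K=\Sigma_{ux}\Sigma_{xx}^{-1}$: indeed $\Sigma\succeq V\Sigma_{xx}V^\top$ by a Schur complement, so shrinking $\Sigma$ only relaxes $\Sigma_{xx}\succeq(A~B)\Sigma(A~B)^\top+\sigma_w^2I$ while leaving $\Sigma_{xx}$ and $K$ unchanged. Writing $X_0:=\Sigma_{xx}$ and eliminating the $(2,2)$ block $t'D-V X_0V^\top$ of \eqref{SDP: stabilizing SDP final} by Schur complement, using the push‑through identity $V^\top(t'D-VX_0V^\top)^{-1}V=E(t'I-X_0E)^{-1}$ together with $X_0+X_0E(t'I-X_0E)^{-1}X_0=t'(t'I-X_0E)^{-1}X_0$, collapses \eqref{SDP: stabilizing SDP final} to
\[ X_0-(t'+\sigma_w^2)I-\Phi(X_0,t')\succeq0,\qquad X_0E\prec t'I,\ t'>0. \]
On the Robust SLS side I would first eliminate the strictly positive definite $(3,3)$ block $tD$ (here $t\in(0,1)$ guarantees $tD\succ0$, so this step is a clean equivalence), substitute $S=KX$, and then eliminate the resulting $(2,2)$ block $X-\tfrac1t XEX$; the same two algebraic identities give
\[ X-I-\Phi(X,t)\succeq0,\qquad XE\prec tI,\ t\in(0,1). \]

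With both problems in the common form $Y-cI-\Phi(Y,s)\succeq0$, the equivalence is closed by rescaling. Given an LQR‑feasible triple $(X_0,K,t')$, set $\beta=1/(t'+\sigma_w^2)$, $X=\beta X_0$ and $t=\beta t'=t'/(t'+\sigma_w^2)\in(0,1)$; multiplying the reduced LQR inequality by $\beta$ and invoking $\beta\Phi(X_0,t')=\Phi(\beta X_0,\beta t')=\Phi(X,t)$ produces exactly the reduced SLS inequality, while $X_0E\prec t'I$ scales to $XE\prec tI$. The map is invertible: from an SLS‑feasible $(X,K,t)$ one takes $\beta=(1-t)/\sigma_w^2$, $X_0=X/\beta$ and $t'=t\sigma_w^2/(1-t)>0$ to recover an LQR‑feasible triple, so $t\mapsto t'=t\sigma_w^2/(1-t)$ is the required bijection $(0,1)\to(0,\infty)$. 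Extracting the controller $K=SX^{-1}=\Sigma_{ux}\Sigma_{xx}^{-1}$ is consistent across the correspondence, which proves both implications.

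The step I expect to be the main obstacle is the Schur‑complement reduction of the two LMIs to the single form $Y-cI-\Phi(Y,s)\succeq0$: one must verify the push‑through/Woodbury identity for $V^\top(t'D-VX_0V^\top)^{-1}V$ and the cancellation producing $\Phi$ on both sides, and—more delicately—handle the non‑strict pivots. When $X-\tfrac1t XEX$ or $t'D-VX_0V^\top$ is only positive semidefinite I would use the generalized Schur complement (Moore–Penrose inverse together with the range inclusion $\operatorname{range}(\text{off-diagonal})\subseteq\operatorname{range}(\text{pivot})$), noting that both the range conditions and the domain conditions $YE\prec sI$ are invariant under the positive scaling $Y\mapsto\beta Y$, $s\mapsto\beta s$, so they transfer intact between the two problems.
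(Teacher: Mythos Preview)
Your approach is essentially the paper's: reduce both LMIs to a common Schur-complement form (using the rank-structured parametrization $\Sigma=V\Sigma_{xx}V^\top$, i.e.\ the content of Lemma~\ref{lemma: Better covariance matrix for Epsilon region}) and then match the two by the positive rescaling $t\leftrightarrow t'=t\sigma_w^2/(1-t)$, which is exactly the bijection $(0,1)\to(0,\infty)$ the paper uses. The only difference is that the paper stops at the $2\times2$ block level---taking its single Schur complement on the strictly positive $X$ block---so it never meets the non-strict-pivot issue you correctly flag in your last paragraph; your extra reduction to the one-block form $Y-cI-\Phi(Y,s)\succeq0$ is valid but unnecessary for the argument.
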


The proof of the theorem is provided in \Cref{appendix: Proof of Theorem 1}\iftoggle{l4dc}{ of the extended paper \citep{treven2020learning}}{}. Despite the fact that the two SDPs \eqref{semi-definite constraint} and \eqref{SDP: stabilizing SDP final} have the same feasible solutions, they differ in the objective. The robust SLS \eqref{semi-definite constraint} is maximizing a constant for which we can find a stabilizing controller, whereas the Robust LQR \eqref{SDP: stabilizing SDP final} is minimizing the upper bound of the maximal infinite horizon cost of the systems in $\Theta_i$. The specific nature of the objective is not interesting for the stopping rule, however can have practically dramatic impact in downstream tasks. For different possible objective for the SLS synthesis \eqref{semi-definite constraint} please refer to \Cref{section: Controlled action selection in Phase 1}.

\subsection{Example: Bayesian credible sets} 
\label{subsection: Bayesian credibility Region}
 In this section, we show a particular design choice of estimators $\widehat{A}_i, \widehat{B}_i$ and region $\Theta_i$ which results from the Bayesian setting. Inspired by the work of \citet{Umenberer2019}, we place a Gaussian prior\footnote{Regarding the sense of this assumption and cases when this assumption fails look at \Cref{section: Data Dependent credibility Region}} on the system matrices $A_*, B_*$ i.e. $\vectorized(A_*, B_*) \sim \pN\left(0, \sigma_w^2/\lambda I\right)$. In this case we have explicit formulas for the posterior distribution of $\vectorized(A_*, B_*)|(x_j)_{j\le i}, (u_j)_{j<i}$. As derived in the \Cref{appendix: Posterior Distribution}\iftoggle{l4dc}{ of the extended paper \citep{treven2020learning}}{} it turns out the posterior is also Gaussian and the MAP estimator $\vectorized(\widehat{A}_i, \widehat{B}_i)$ is exactly the RLS estimator:
 \begin{align}
 \label{MAP estimator}
    \widehat{A}_i, \widehat{B}_i = \argmin_{A, B}\sum_{j=0}^{i-1}\norm{x_{j+1} - Ax_{j}-Bu_j}_2^2 + \lambda \norm{(A~B)}_F^2.
 \end{align}
 Moreover, the Bayesian credibility region is $\Theta_i= \{(A~B)|\Delta^\top D_i \Delta \preceq I, \Delta^\top = (A~B) - (\widehat{A}_i, \widehat{B}_i)\}$. Here $D_i$ represents the scaled inverse covariance matrix of the posterior distribution and is explicitly given as $D_i = \frac{1}{c_\delta\sigma_w^2}\left(\sum_{j=1}^{i}z_jz_j^\top + \lambda I\right)$, where $z_j^\top = (x_j^\top u_j^\top)$ and $c_\delta$ is the $(1-\delta)$-quantile of the $\chi^2$ distribution with $d_x(d_x + d_u)$ degrees of freedom. With this definition of $\Theta_i$ we have $(A_*~B_*) \in \Theta_i$ w.p. $1-\delta$.
\section{\textsc{eXploration} Algorithm}
\label{section:ALgorithm}
\looseness -1 

We now show how to use the derived results (c.f., \Cref{section:Robust syntheses}) to {\em provably} find a robust controller in the Bayesian setting. In the \Cref{appendix: Initialization of Existing Algorithms}\iftoggle{l4dc}{ of the extended paper \citep{treven2020learning}}{} we show how to initialize the algorithms, specifically OSLO \citep{Cohen2019SPDRelaxation} and CEC \citep{simchowitz2020naive}, which need a stabilizing controller as an input, with the proposed \textsc{eXploration} algorithm.

\begin{algorithm}[H]
    \caption{\textsc{eXploration}}
    \label{algorithm:FirstPart}
    \begin{algorithmic}[]
        \STATE {\textbf{Input:}}{ $x_0 =0, \lambda, \delta$}
        \FOR{$i=1, \ldots $}{
            \STATE {\color{blue} /* Probing Signal /*}
            {\STATE Play $u_{i} \sim \pi(\cdot | x_{1:i},u_{1:i-1})$ and observe state $x_{i+1}$}.
            \STATE {\color{blue} /* Stopping Rule /*}
            {\STATE Build a confidence region $\Theta_i$, such that $(A_*~B_*) \in \Theta_i$ w.p. $1-\delta$. (c.f. \Cref{subsection: Bayesian credibility Region}) 
            }
            {\STATE Solve a robust controller synthesis $\forall (A, B) \in \Theta_i$. (c.f. \Cref{subsection: Robust formulation from System Level Synthesis} or \Cref{subsection: Robust formulation from Semi-Definite Program}) }
            \STATE{\textbf{if} a controller is found \textbf{return} stabilizing controller $K_0$}
            }
        \ENDFOR
    \end{algorithmic}
\end{algorithm}
\vspace{-6mm}
The learner explores using a policy $\pi$ that only depends on the past states and inputs. 
Using the collected data, it builds an empirical estimate and a confidence region around it. 
Finally, it attempts to solve a robust controller synthesis problem. 
If it fails, the algorithm continues. 
If it succeeds, the algorithm terminates and returns a provably stabilizing controller for the {\em true} underlying system.

The credibility regions must contain the true system with probability $1-\delta$ only at the time $i^*$ in which a controller is found and not uniformly over all time steps. 
This is crucial as it allows us to use tight credibility regions. 
Then, with probability $\delta$, the algorithm might fail to return a stabilizing controller, and it will return a stabilizing controller with probability $1-\delta$.

\looseness -1 In this section, we analyze the \textsc{Vanilla eXploration} variant, 
in which the policy is to choose independent zero-mean Gaussian action, i.e., $u_i \sim \pi(\cdot | x_{1:i},u_{1:i-1}) = \mathcal{N}(0, \sigma_u^2I)$. 
Next, we prove that \textsc{Vanilla eXploration} finishes in $\widetilde{O}(1)$ time.
In \Cref{section: Controlled action selection in Phase 1}, we discuss different probing heuristics that perform well in practice but where we lose the finte time termination guarantee. 
Nevertheless, the algorithm still remains valid: if it terminates, the resulting controller provably stabilizes the system.

\begin{theorem}
\label{theorem: Algorithm termination}
Assuming the aforementioned setting, then with probability $1-\delta$ \textsc{Vanilla eXploration} returns a stabilizing controller for $(A_*, B_*)$ in time:
\begin{align}
    \widetilde{\mathcal{O}}\left(\operatorname{polylog}(\delta)(1 + \|K\|_2)^2\norm{(zI - A_*-B_*K)^{-1}}_{\mathcal{H}_\infty}^2\right) \label{eq:Finite time Guarantee},
\end{align}
where $K$ is any stabilizing static controller. 
\end{theorem}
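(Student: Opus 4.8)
The plan is to exhibit, at a sufficiently late \emph{deterministic} time, an explicit feasible point for one of the two syntheses — which, by \Cref{theorem: feasibility equivalence}, is interchangeable with feasibility of the other — and then show this time is reached within the claimed horizon. The natural candidate is precisely the fixed static controller $K$ appearing in \eqref{eq:Finite time Guarantee}, which strictly stabilizes the true system, $\rho(A_*+B_*K)<1$. Rather than manipulating the SDP variables directly, I would argue through the equivalent $\mathcal{H}_\infty$ stabilization certificate \eqref{equation:HInfinityConditionToStabilize}: it suffices to show that for all $(A,B)\in\Theta_i$,
\[
\norm{\Delta^\top \begin{pmatrix} I \\ K \end{pmatrix}\left(zI-\widehat{A}_i-\widehat{B}_iK\right)^{-1}}_{\mathcal{H}_\infty} < 1 ,
\]
since satisfying this for the single fixed $K$ certifies feasibility of \eqref{semi-definite constraint}, hence of \eqref{SDP: stabilizing SDP final}.

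Next I would bound the left-hand side by submultiplicativity as $\sup_{(A,B)\in\Theta_i}\norm{\Delta}\cdot\norm{\begin{pmatrix}I\\K\end{pmatrix}}\cdot\norm{(zI-\widehat{A}_i-\widehat{B}_iK)^{-1}}_{\mathcal{H}_\infty}$. The middle factor is at most $1+\norm{K}_2$. The first factor is the ellipsoid radius, which from \eqref{eq:ellipsoid} equals $\lambda_{\min}(D_i)^{-1/2}$. I would condition on the event $(A_*,B_*)\in\Theta_i$ (probability $\ge 1-\delta$), which \emph{simultaneously} bounds the estimation error $\norm{(\widehat{A}_i,\widehat{B}_i)-(A_*,B_*)}$ by the same radius $\lambda_{\min}(D_i)^{-1/2}$. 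A resolvent-perturbation argument then shows the last factor converges to $\norm{(zI-A_*-B_*K)^{-1}}_{\mathcal{H}_\infty}$ as the radius shrinks, exploiting the strict spectral margin $\rho(A_*+B_*K)<1$ to keep $\widehat{A}_i+\widehat{B}_iK$ stable with controlled $\mathcal{H}_\infty$ norm. Consequently the product drops below one once $\lambda_{\min}(D_i)^{-1/2}\lesssim (1+\norm{K}_2)^{-1}\norm{(zI-A_*-B_*K)^{-1}}_{\mathcal{H}_\infty}^{-1}$.

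It then remains to convert ``radius small'' into ``time large.'' Using $D_i=\frac{1}{c_\delta\sigma_w^2}\left(\sum_{j=1}^i z_jz_j^\top+\lambda I\right)$ from \Cref{subsection: Bayesian credibility Region}, I need a linear-in-$i$ lower bound $\lambda_{\min}\!\left(\sum_{j=1}^i z_jz_j^\top\right)=\Omega(i)$ holding with high probability. Granting this, $\lambda_{\min}(D_i)^{-1/2}=\mathcal{O}\!\big(\sqrt{c_\delta\sigma_w^2/i}\big)$, and combining with the threshold above gives termination at $i=\widetilde{\mathcal{O}}\big(c_\delta(1+\norm{K}_2)^2\norm{(zI-A_*-B_*K)^{-1}}_{\mathcal{H}_\infty}^2\big)$. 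Since $c_\delta$ is a $\chi^2$ quantile, $c_\delta=\operatorname{polylog}(1/\delta)$, absorbing $\sigma_w^2$ and constants into $\widetilde{\mathcal{O}}$ recovers exactly \eqref{eq:Finite time Guarantee}.

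The main obstacle is the minimum-eigenvalue lower bound on the Gram matrix $\sum_j z_jz_j^\top$. The process $z_j=(x_j,u_j)$ is correlated and, for unstable $A_*$, its state block blows up, so naive i.i.d. matrix concentration does not apply. I would invoke a block-martingale small-ball / self-normalized argument in the style of the cited identification results (Simchowitz et al.; Sarkar \& Rust) to certify excitation uniformly across directions; the unstable modes only help by contributing large eigenvalues, so the binding constraint is the noise-driven excitation of the stable and marginal directions, which the nondegenerate noise $\sigma_w^2 I$ together with the Gaussian probing $u_i\sim\mathcal{N}(0,\sigma_u^2 I)$ guarantees. A secondary subtlety is making the $\mathcal{H}_\infty$-perturbation bound quantitative near the stability boundary and combining the two high-probability events — ellipsoid containment at the stopping time and Gram-matrix growth — into one clean $1-\delta$ guarantee.
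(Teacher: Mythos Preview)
Your proposal is correct and follows essentially the paper's route: reduce feasibility of \eqref{semi-definite constraint} for the fixed $K$ to a lower bound on $\lambda_{\min}(D_i)$ via the $\mathcal{H}_\infty$ certificate \eqref{equation:HInfinityConditionToStabilize}, then lower-bound the Gram matrix linearly in $i$ by invoking \citet{Sarkar2018Identification}. The paper shortcuts your submultiplicativity-plus-resolvent-perturbation step by directly citing Lemma~4.2 of \citet{Dean2018OnSampleComplexity}, which already states the sufficient condition in terms of the resolvent at the \emph{true} system; the one ingredient you do not spell out is that the paper leans on the Gaussian prior on $(A_*,B_*)$ to ensure regularity (and stabilizability) almost surely, and then uses a short measure-theoretic argument to convert the system-dependent Gram-growth constant $C_{A_*,B_*}$ into a single $\delta$-dependent one.
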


\paragraph{Proof Sketch:}
\looseness -1 First, note that since we assume that entries of $A_*, B_*$ are sampled from independent Gaussian, system $(A_*, B_*)$ is stabilizable a.s. If $K$ is a stabilizing controller we derive in the \Cref{appendix: Proof of Theorem 2}\iftoggle{l4dc}{ of the extended paper \citep{treven2020learning}}{}, extending the results of \citet{Dean2018OnSampleComplexity}, that the SLS synthesis \eqref{semi-definite constraint} is feasible if 
\begin{align}
\label{lower bound for D}
    \mathcal{O}\left((1 + \norm{K}_2)^2\norm{(zI - A_*-B_*K)^{-1}}_{\mathcal{H}_\infty}^2\right) \le \lambda_{min}(D).  
\end{align}
\looseness -1To obtain the best bound we choose a stabilizing controller $K$ such that the left hand side of \Cref{lower bound for D} is minimized. From \Cref{lower bound for D} follows that as soon as the smallest eigenvalue of the matrix $D$ is large enough, the robust synthesis will be feasible. At the same time from the analysis of \citet{Sarkar2018Identification} follows that $\widetilde{\Omega}(i)I \preceq D_i$ for every regular system. Again, since we assume a Gaussian prior on $(A_*, B_*)$, the system is regular a.s. Assembling the pieces together we arrive at the result, for which we provide more detailed proof in the \Cref{appendix: Proof of Theorem 2}\iftoggle{l4dc}{ of the extended paper \citep{treven2020learning}}{}.
\subsection{Different probing signals with \textsc{eXploration}}
\label{section: Controlled action selection in Phase 1}

\looseness -1 The \textsc{VANILLA eXploration} approach takes random actions $u_i \sim \pN(0, \sigma_u^2I)$. For such a choice we can guarantee that \Cref{algorithm:FirstPart} terminates after constant time, depending only on the system parameters. 
However, as we demonstrate in our experiments (c.f., \Cref{section: Additional Experiments} of the extended paper \citep{treven2020learning}), the states grow {\em exponentially} during this phase, which can be highly problematic for certain applications. 
We now propose improved, {\em data-dependent} policies to counteract this blow-up. In particular, we consider playing $u_i \sim \pN(K_ix_i, \sigma_u^2I)$, where $K_i$ is a controller picked at time $i$. With such a controller, we generally lose the theoretical guarantee that the \Cref{algorithm:FirstPart} will terminate. However, the data dependent credibility region on estimation errors from \cref{subsection: Bayesian credibility Region} (and thus the validity of the stopping condition) is still valid and we can run \Cref{algorithm:FirstPart}. 
With data dependent inputs, we cannot guarantee that the minimum eigenvalue of $D_i$ grows as $\tilde{\Omega}(i)$. 
Next, we discuss different choices for controller $K_i$ that we study in our experiments.

\paragraph{\textsc{CEC}} 
As first possibility, we could act as if the estimators $\widehat{A}_i, \widehat{B}_i$ are the true system matrices and we compute the controller $K_i$ as the optimal controller: 
\begin{equation}
    K_i = -(R + \widehat{B}_i^\top P\widehat{B}_i)^{-1}\widehat{B}_i^\top P\widehat{A}_i, \label{eq:CEC}
\end{equation}
where $P_i = \operatorname{DARE}(\widehat{A}_i, \widehat{B}_i, Q, R)$, i.e., we act using  Certainty Equivalent Control (\textsc{CEC}).

\paragraph{\textsc{MinMax}} 
For the second $K_i$ we consider controller which minimizes the maximal closed loop norm of the systems in $\Theta_i$. At every time step we synthesize the controller $K_i$ as
\begin{align}
\label{minmax controller definition}
    K_i = \argmin_K\max_{(A, B) \in \Theta_i} \norm{A+BK}_2
\end{align}
The controller defined in \Cref{minmax controller definition} can be efficiently computed via a convex SDP. We derive the convex SDP formulation of the min max problem given by \Cref{minmax controller definition} in \Cref{appendix: Convex SDP formulation for MinMax} of the extended paper \citep{treven2020learning}. 

\paragraph{\textsc{RelaxedSLS}} 
As a third alternative we relax the constraint $t \in (0, 1)$ to $t \ge 0$ in the SDP feasibility problem \eqref{semi-definite constraint}, and minimize the value of $t$, i.e.:

\begin{equation}
\label{relaxedSLS}
    \begin{aligned}
    &\min_{X\succ 0, S, t \ge 0}t
    &\quad\text{s.t. semi-definite constraint \eqref{semi-definite constraint}} 
    \end{aligned}
\end{equation}
The controller is then synthesized as $K_i = SX^{-1}$.
With such relaxation, the SDP is always feasible. 
The interpretation of this relaxation is that when $t \ge 1$ we find a controller that stabilizes all systems $(A, B)$ in a {\em smaller} confidence region around the estimates $(\widehat{A}_i, \widehat{B}_i)$.
Furthermore, this algorithms returns a provably stabilizing controller  when $t < 1$. 
Although in principle we could also increase $D_i$ in the Robust LQR synthesis in \eqref{SDP: stabilizing SDP final}, this requires a tedious exponential line search, whereas the \textsc{RelaxedSLS} synthesis does this automatically. 

\section{Experiments}
\label{section: Experiments}

\looseness -1 In this section, we critically evaluate the different components of \textsc{eXploration} empirically. 
In \Cref{section: Data Dependent credibility Region}, we investigate when the credibility regions are correct and when do they fail on a fixed system $A_*, B_*$. 
In particular, the algorithm fails when the prior parameter $\lambda$ is too large. 
To overcome this issue, we suggest a way of selecting the prior parameter $\lambda$, given some mild privileged information. 
In \Cref{section: eXploration Performance}, we compare the time it takes to find a stabilizing controller and the total cost suffered using different probing signals. 
Although \textsc{Vanilla eXploration} provably terminates, the heuristic variants perform better in practice. 
In all the considered examples the cost matrices $Q$ and $R$ are equal to the identity matrix of the appropriate dimensions. 
The scales of unobserved and played noise covariance matrices are $\sigma_w^2=\sigma_u^2 = 1$. 
We set the probability of failure to $\delta = 0.1$.


\subsection{Data Dependent Credibility Region}
\label{section: Data Dependent credibility Region}
\looseness -1 To illustrate how \textsc{eXploration} builds the credibility regions, we consider a one dimensional system $A_* = 1.5, B_* = 1.8$. 
We select $\lambda = \frac{1}{4}$ and $\lambda = 3$ and show consecutive credibility regions in \Cref{figure: moving credibility regions}. 
As we can see, the credibility region $\Theta_i$ shrinks as we see more data. 
For both choices of $\lambda$, Robust SLS and Robust LQR become feasible after 4 iterations and the algorithm terminates. 
Crucially, when $\lambda$ is too large (i.e. too small variance of the prior), it may happen that the true parameters $A_*, B_*$ are not inside the credibility region as we can see on in the middle subfigure of \Cref{figure: moving credibility regions}. 

\begin{figure}[ht]
    \centering
    \includegraphics[width=\textwidth]{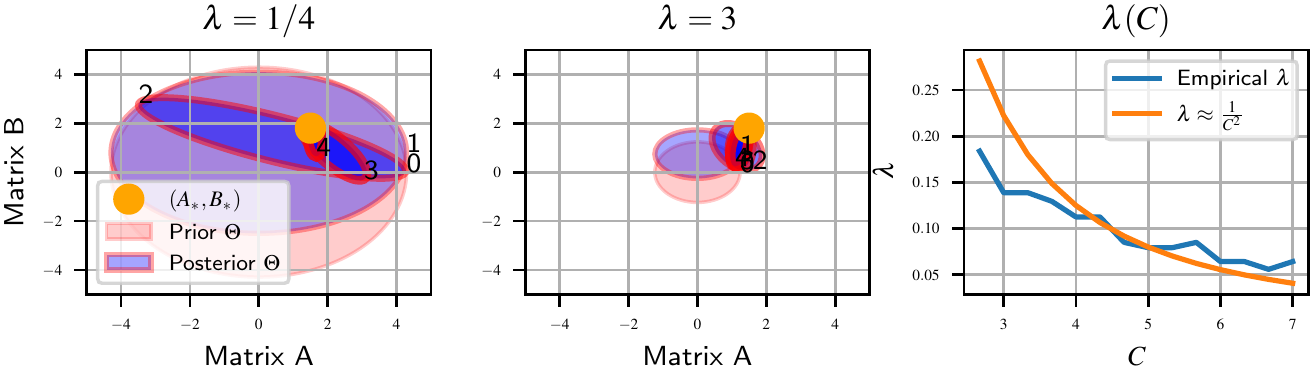}
    \vspacefigurecaptionhigh
    \vspace{-4mm}
    \caption{When credibility region $\Theta_i$ is small enough SDP \eqref{SDP: stabilizing SDP final} finds a controller which stabilizes every system in $\Theta_i$. If we choose a Gaussian prior with too small covariance matrix the credibility regions $\Theta_i$ might not contain the true system and the resulting controller will not stabilize it. If we know a bound $C$ on the Frobenious norm of the system, experiments show that a reasonable choice for $\lambda$ is $\lambda \approx \frac{1}{C^2}$.\vspacefigurecaptionlow}
    \label{figure: moving credibility regions}
\end{figure}

This means that selecting $\lambda$ is a problem-dependent quantity (as any prior). 
However, if we assume that we have of a constant $C$ such that $\norm{(A_*~ B_*)}_F \le C$,  then in the \Cref{section: Bounds failure} \iftoggle{l4dc}{ of the extended paper \citep{treven2020learning}}{} we suggest that selecting $\lambda \approx 1/C^2$, results in the credibility regions which empirically contain the true system with probability at least $1-\delta$.
On the right most subfigure of \Cref{figure: moving credibility regions} we plot the largest $\lambda$ for which we empirically observe that the one dimensional systems with $\norm{(A_*~B_*)} \le C$ are inside regions $\Theta_i$ with empirical probability at least $1-\delta$.

\subsection{\textsc{eXploration} Performance}
\label{section: eXploration Performance}

Next we will illustrate the cost suffered and time until we find a stabilizing controller on a system
\begin{align}
    \label{equation:SystemFromDean2017}
    A_* =
    \begin{pmatrix}
    1.01 & 0.01 & 0 \\
    0.01 & 1.01 & 0.01 \\
    0 & 0.01 & 1.01
    \end{pmatrix}, \quad
    B_* = I
\end{align}
introduced by \cite{Dean2018OnSampleComplexity}, here we use $\lambda=1$. We show a sample run with different heuristics in the \Cref{section: Additional Experiments}\iftoggle{l4dc}{ of the extended paper \citep{treven2020learning}}{}, here we show in \Cref{table: comparison of steps and cost on Dean system} more in-depth analysis of the \textsc{eXploration} performances on system~\eqref{equation:SystemFromDean2017}.

During exploration the agent suffers quadratic costs \Cref{equation:StateCost} and at the \textsc{eXploration} termination leaves the system in state $x_T$. We define the total cost of exploration as
\begin{equation}
    \text{Cost} = \sum_{i=1}^{T} \left(x_i^\top Q x_i + u_i^\top R u_i\right) + x_T^\top Px_T,
\end{equation}
where $P = \operatorname{DARE}(A_*, B_*, Q, R)$, and $T$ is the termination time of \textsc{eXploration}. Since the cost can grow exponential with time we will report the logarithm of the Cost. 

We compare robust synthesis with ellipsoidal bounds to two benchmarks. The first is robust synthesis with 2-ball estimation bounds of \citet{Dean2018OnSampleComplexity}. For the second benchmark we compare robust controller to the CEC which was used as a stabilizing controller in e.g. \citet{FaradonbehStabilization,simchowitz2020naive}. In particular we analyze how large region around estimates they stabilize. For both controllers we use the tightest ellipsoidal bounds. To compute the stopping time when CEC stabilizes all systems inside the ellipsoidal bound we sample 1000 systems from the ellipsoid boundary and if CEC stabilizes all we stop \textsc{eXploration}\footnote{Note that it can still happen that CEC does not stabilize all systems inside ellipsoid, however already with this approximation robust controller stabilizes larger ellipsoidal region.}. 

Using ellipsoidal compared to 2-ball bounds significantly reduces the number of steps of \textsc{eXploration}. Consequently we also suffer much less exploration cost. CEC naturally stabilizes some region around the estimates, however as we can see on the \Cref{table: comparison of steps and cost on Dean system} CEC stabilizes smaller region compared to the robust controller, which is paramount in the case when the cost grows exponentially.
\begin{table}[H]
\centering
\vspace{-2mm}
\caption{Using ellipsoidal confidence regions significantly shortens the exploration time and consequently the cost. Robust controller stabilizes larger region than CEC which is crucial when the cost grows exponentially. We report median $\pm$ standard deviation.}
\label{table: comparison of steps and cost on Dean system}
\begin{tabular}{@{}lllllll@{}}
                  \toprule
                  & \multicolumn{2}{c}{Ellipsoidal region} & \multicolumn{2}{c}{2-ball region} & \multicolumn{2}{c}{CEC as stopping time} \\ \midrule
       & Steps          &  $\log(\text{Cost})$ & Steps        & $\log(\text{Cost})$ & Steps         & $\log(\text{Cost})$ \\ \midrule
\textsc{Vanilla} & $\mathbf{44\pm8.2}$   & $\mathbf{8.7\pm0.58}$  & $84\pm18$   & $11\pm1.1$  & $50\pm12$     & $9.3\pm0.73$     \\
\textsc{CEC}     & $\mathbf{25\pm8.5}$ & $\mathbf{6.2\pm1.1}$  & $110\pm32$ & $7.4\pm0.62$ & $53\pm14$ & $6.8\pm0.56$ \\
\textsc{MinMax} & $\mathbf{24\pm9}$   & $\mathbf{7.1\pm3.2}$  & $120\pm39$ & $7.9\pm3.1$  & $58\pm22$ & $7.8\pm7.9$  \\
\textsc{RelaxedSLS}       & $\mathbf{26\pm8.9}$   & $\mathbf{7.2\pm3}$     & $170 \pm 86$  & $9.1\pm2.9$ & $78\pm36$     & $8.6\pm3.2$      \\ \bottomrule
\end{tabular}
\end{table}
\section{Discussion and Conclusions}
\label{section: Conclusions}
In Section \ref{section:Robust syntheses} we presented two seemingly different relaxation techniques for solving Riccati equation under uncertainty. The two relaxations one in Z-transform space and the other convex relaxation in the SDP formulation lead to the same feasible regions. We instantiated our stopping rule with uncertainty regions over the system matrices constructed via Bayesian means, however the stopping rule is more general and can be used with confidence estimates constructed without prior assumptions assumption once we can guarantee uncertainty sets otherwise. To the best of our knowledge, provable anytime adaptive consistent confidence estimates for unstable linear systems are not known, but should these be constructable our stopping rule can be used with them.

\small

\bibstyle{apalike}
\bibliography{refs.bib}

\begin{thebibliography}{29}
\providecommand{\natexlab}[1]{#1}
\providecommand{\url}[1]{\texttt{#1}}
\expandafter\ifx\csname urlstyle\endcsname\relax
  \providecommand{\doi}[1]{doi: #1}\else
  \providecommand{\doi}{doi: \begingroup \urlstyle{rm}\Url}\fi

\bibitem[Abbasi-Yadkori and Szepesvári(2011)]{abbasi2011}
Yasin Abbasi-Yadkori and Csaba Szepesvári.
\newblock Regret bounds for the adaptive control of linear quadratic systems.
\newblock volume~19 of \emph{Proceedings of Machine Learning Research}, pages
  1--26, Budapest, Hungary, 09--11 Jun 2011. JMLR Workshop and Conference
  Proceedings.
\newblock URL \url{http://proceedings.mlr.press/v19/abbasi-yadkori11a.html}.

\bibitem[Abeille and Lazaric(2020)]{Abeille2020}
Marc Abeille and Alessandro Lazaric.
\newblock Efficient optimistic exploration in linear-quadratic regulators via
  lagrangian relaxation.
\newblock In \emph{Proceedings of Machine Learning and Systems 2020}, pages
  7388--7396. 2020.

\bibitem[ApS(2020)]{mosek}
MOSEK ApS.
\newblock \emph{MOSEK Optimizer API for Python 9.2.4}, 2020.
\newblock URL \url{https://docs.mosek.com/9.2/pythonapi/index.html}.

\bibitem[Bart et~al.(2018)Bart, ter Horst, Ran, and
  Woerdeman]{KYPMoreReliableSource}
Harm Bart, Sanne ter Horst, Andr{\'{e}}~C.M. Ran, and Hugo~J. Woerdeman,
  editors.
\newblock \emph{Operator Theory, Analysis and the State Space Approach}.
\newblock Springer International Publishing, 2018.
\newblock \doi{10.1007/978-3-030-04269-1}.
\newblock URL \url{https://doi.org/10.1007%2F978-3-030-04269-1}.

\bibitem[Bertsekas(2000)]{bertsekas2000dynamic}
Dimitri~P. Bertsekas.
\newblock \emph{Dynamic Programming and Optimal Control}.
\newblock Athena Scientific, 2nd edition, 2000.
\newblock ISBN 1886529094.

\bibitem[Boyd et~al.(1994)Boyd, El~Ghaoui, Feron, and
  Balakrishnan]{boyd1994linear}
Stephen Boyd, Laurent El~Ghaoui, Eric Feron, and Venkataramanan Balakrishnan.
\newblock \emph{Linear matrix inequalities in system and control theory}.
\newblock SIAM, 1994.

\bibitem[Cohen et~al.(2018)Cohen, Hasidim, Koren, Lazic, Mansour, and
  Talwar]{Cohen2018}
Alon Cohen, Avinatan Hasidim, Tomer Koren, Nevena Lazic, Yishay Mansour, and
  Kunal Talwar.
\newblock Online linear quadratic control.
\newblock volume~80 of \emph{Proceedings of Machine Learning Research}, pages
  1029--1038, Stockholmsmässan, Stockholm Sweden, 10--15 Jul 2018. PMLR.
\newblock URL \url{http://proceedings.mlr.press/v80/cohen18b.html}.

\bibitem[Cohen et~al.(2019)Cohen, Koren, and Mansour]{Cohen2019SPDRelaxation}
Alon Cohen, Tomer Koren, and Yishay Mansour.
\newblock Learning linear-quadratic regulators efficiently with only $\sqrt{T}$
  regret.
\newblock In Kamalika Chaudhuri and Ruslan Salakhutdinov, editors,
  \emph{Proceedings of the 36th International Conference on Machine Learning},
  volume~97 of \emph{Proceedings of Machine Learning Research}, pages
  1300--1309, Long Beach, California, USA, 09--15 Jun 2019. PMLR.
\newblock URL \url{http://proceedings.mlr.press/v97/cohen19b.html}.

\bibitem[Dean et~al.(2019)Dean, Mania, Matni, Recht, and
  Tu]{Dean2018OnSampleComplexity}
Sarah Dean, Horia Mania, Nikolai Matni, Benjamin Recht, and Stephen Tu.
\newblock On the sample complexity of the linear quadratic regulator.
\newblock \emph{Foundations of Computational Mathematics}, Aug 2019.
\newblock ISSN 1615-3383.
\newblock \doi{10.1007/s10208-019-09426-y}.
\newblock URL \url{https://doi.org/10.1007/s10208-019-09426-y}.

\bibitem[Faradonbeh et~al.(2018{\natexlab{a}})Faradonbeh, Tewari, and
  Michailidis]{FaradonbehIdentification}
Mohamad Kazem~Shirani Faradonbeh, Ambuj Tewari, and George Michailidis.
\newblock Finite time identification in unstable linear systems.
\newblock \emph{Automatica}, 96:\penalty0 342--353, 2018{\natexlab{a}}.

\bibitem[Faradonbeh et~al.(2018{\natexlab{b}})Faradonbeh, Tewari, and
  Michailidis]{FaradonbehStabilization}
Mohamad Kazem~Shirani Faradonbeh, Ambuj Tewari, and George Michailidis.
\newblock Finite-time adaptive stabilization of linear systems.
\newblock \emph{IEEE Transactions on Automatic Control}, 64\penalty0
  (8):\penalty0 3498--3505, 2018{\natexlab{b}}.

\bibitem[Faradonbeh et~al.(2020)Faradonbeh, Tewari, and
  Michailidis]{faradonbeh2020optimism}
Mohamad Kazem~Shirani Faradonbeh, Ambuj Tewari, and George Michailidis.
\newblock Optimism-based adaptive regulation of linear-quadratic systems.
\newblock \emph{IEEE Transactions on Automatic Control}, 2020.

\bibitem[Haddad et~al.(2005)Haddad, Chellaboina, and Nersesov]{EnergyExample}
Wassim~M. Haddad, VijaySekhar Chellaboina, and Sergey~G. Nersesov.
\newblock \emph{Thermodynamics: A Dynamical Systems Approach}.
\newblock Princeton University Press, 2005.
\newblock ISBN 9780691123271.
\newblock URL \url{http://www.jstor.org/stable/j.ctt7s1k3}.

\bibitem[Kalman(1960)]{Kalman1960}
Rudolph~Emil Kalman.
\newblock A new approach to linear filtering and prediction problems.
\newblock \emph{Transactions of the ASME--Journal of Basic Engineering},
  82\penalty0 (Series D):\penalty0 35--45, 1960.

\bibitem[Lale et~al.(2020)Lale, Azizzadenesheli, Hassibi, and
  Anandkumar]{lale2020explore}
Sahin Lale, Kamyar Azizzadenesheli, Babak Hassibi, and Anima Anandkumar.
\newblock Explore more and improve regret in linear quadratic regulators.
\newblock \emph{arXiv preprint arXiv:2007.12291}, 2020.

\bibitem[Luo et~al.(2004)Luo, Sturm, and Zhang]{LMIs2004}
Zhi-Quan Luo, Jos~F. Sturm, and Shuzhong Zhang.
\newblock Multivariate nonnegative quadratic mappings.
\newblock \emph{SIAM Journal on Optimization}, 14\penalty0 (4):\penalty0
  1140--1162, 2004.
\newblock \doi{10.1137/S1052623403421498}.
\newblock URL \url{https://doi.org/10.1137/S1052623403421498}.

\bibitem[Mania et~al.(2019)Mania, Tu, and Recht]{Mania2019CertaintyEI}
Horia Mania, Stephen Tu, and Benjamin Recht.
\newblock Certainty equivalence is efficient for linear quadratic control.
\newblock In H.~Wallach, H.~Larochelle, A.~Beygelzimer, F.~d\textquotesingle
  Alch\'{e}-Buc, E.~Fox, and R.~Garnett, editors, \emph{Advances in Neural
  Information Processing Systems 32}, pages 10154--10164. Curran Associates,
  Inc., 2019.
\newblock URL
  \url{http://papers.nips.cc/paper/9205-certainty-equivalence-is-efficient-for-linear-quadratic-control.pdf}.

\bibitem[Ribeiro et~al.(2017)Ribeiro, Lopes, Maia, Ribeiro, Os{\'o}rio, Roriz,
  and Ferreira]{ExampleRobotics}
Fernando Ribeiro, Gil Lopes, Tiago Maia, H{\'e}lder Ribeiro, Pedro Os{\'o}rio,
  Ricardo Roriz, and Nuno Ferreira.
\newblock Motion control of mobile autonomous robots using non-linear dynamical
  systems approach.
\newblock In Paulo Garrido, Filomena Soares, and Ant{\'o}nio~Paulo Moreira,
  editors, \emph{CONTROLO 2016}, pages 409--421, Cham, 2017. Springer
  International Publishing.
\newblock ISBN 978-3-319-43671-5.

\bibitem[Safonov and Athans(1977)]{safonov1977gain}
Michael Safonov and Michael Athans.
\newblock Gain and phase margin for multiloop lqg regulators.
\newblock \emph{IEEE Transactions on Automatic Control}, 22\penalty0
  (2):\penalty0 173--179, 1977.

\bibitem[Sarkar and Rakhlin(2019)]{Sarkar2018Identification}
Tuhin Sarkar and Alexander Rakhlin.
\newblock Near optimal finite time identification of arbitrary linear dynamical
  systems.
\newblock In Kamalika Chaudhuri and Ruslan Salakhutdinov, editors,
  \emph{Proceedings of the 36th International Conference on Machine Learning},
  volume~97 of \emph{Proceedings of Machine Learning Research}, pages
  5610--5618, Long Beach, California, USA, 09--15 Jun 2019. PMLR.
\newblock URL \url{http://proceedings.mlr.press/v97/sarkar19a.html}.

\bibitem[Simchowitz and Foster(2020)]{simchowitz2020naive}
Max Simchowitz and Dylan~J Foster.
\newblock Naive exploration is optimal for online lqr.
\newblock \emph{arXiv preprint arXiv:2001.09576}, 2020.

\bibitem[Simchowitz et~al.(2018)Simchowitz, Mania, Tu, Jordan, and
  Recht]{simchowitz2018learning}
Max Simchowitz, Horia Mania, Stephen Tu, Michael~I. Jordan, and Benjamin Recht.
\newblock Learning without mixing: Towards a sharp analysis of linear system
  identification.
\newblock In S\'ebastien Bubeck, Vianney Perchet, and Philippe Rigollet,
  editors, \emph{Proceedings of the 31st Conference On Learning Theory},
  volume~75 of \emph{Proceedings of Machine Learning Research}, pages 439--473.
  PMLR, 06--09 Jul 2018.
\newblock URL \url{http://proceedings.mlr.press/v75/simchowitz18a.html}.

\bibitem[Singh(2010)]{ManufacturingExample}
Tarunraj. Singh.
\newblock \emph{Optimal reference shaping for dynamical systems: theory and
  applications}.
\newblock CRC Press, Boca Raton, 2010.

\bibitem[Tornamb{\`e} et~al.(1998)Tornamb{\`e}, Conte, and
  Perdon]{Tornambe1998theory}
A.~Tornamb{\`e}, G.~Conte, and A.M. Perdon.
\newblock \emph{Theory and Practice of Control and Systems: Proceedings of the
  6th IEEE Mediterranean Conference, Alghero, Sardinia, Italy, 9-11 June 1998}.
\newblock World Scientific, 1998.
\newblock ISBN 9789810236687.
\newblock URL \url{https://books.google.ch/books?id=BkGYGwAACAAJ}.

\bibitem[Trentelman et~al.(2001)Trentelman, Stoorvogel, and
  Hautus]{trentelman2001control}
H.~Trentelman, A.A. Stoorvogel, and M.~Hautus.
\newblock \emph{Control Theory for Linear Systems}.
\newblock Communications and Control Engineering. Springer London, 2001.
\newblock ISBN 9781852333164.
\newblock URL \url{https://books.google.si/books?id=1KmPMQEACAAJ}.

\bibitem[Treven et~al.(2020)Treven, Curi, Mutny, and
  Krause]{treven2020learning}
Lenart Treven, Sebastian Curi, Mojmir Mutny, and Andreas Krause.
\newblock Learning controllers for unstable linear quadratic regulators from a
  single trajectory.
\newblock \emph{arXiv preprint arXiv:2006.11022}, 2020.

\bibitem[Umenberger et~al.(2019)Umenberger, Ferizbegovic, Sch\"{o}n, and
  Hjalmarsson]{Umenberer2019}
Jack Umenberger, Mina Ferizbegovic, Thomas~B Sch\"{o}n, and H\aa~kan
  Hjalmarsson.
\newblock Robust exploration in linear quadratic reinforcement learning.
\newblock In H.~Wallach, H.~Larochelle, A.~Beygelzimer, F.~d\textquotesingle
  Alch\'{e}-Buc, E.~Fox, and R.~Garnett, editors, \emph{Advances in Neural
  Information Processing Systems 32}, pages 15336--15346. Curran Associates,
  Inc., 2019.
\newblock URL
  \url{http://papers.nips.cc/paper/9668-robust-exploration-in-linear-quadratic-reinforcement-learning.pdf}.

\bibitem[Wang et~al.(2019)Wang, Matni, and Doyle]{wang2019system}
Yuh-Shyang Wang, Nikolai Matni, and John~C Doyle.
\newblock A system-level approach to controller synthesis.
\newblock \emph{IEEE Transactions on Automatic Control}, 64\penalty0
  (10):\penalty0 4079--4093, 2019.

\bibitem[Zhou et~al.(1996)Zhou, Doyle, and Glover]{Doyle1996}
Kemin Zhou, John~C. Doyle, and Keith Glover.
\newblock \emph{Robust and Optimal Control}.
\newblock Prentice-Hall, Inc., USA, 1996.
\newblock ISBN 0134565673.

\end{thebibliography}

\normalsize

\newpage 
\appendix

\renewcommand*\contentsname{Contents of Appendix}
\etocdepthtag.toc{mtappendix}
\etocsettagdepth{mtchapter}{none}
\etocsettagdepth{mtappendix}{subsection}
\tableofcontents

\newpage

\section{Posterior Distribution}
\label{appendix: Posterior Distribution}

In this section we will show how to obtain regions around RLS estimates $\widehat{A}, \widehat{B}$ where $A_*, B_*$ lies with high probability. The regions will be of the form 
\begin{align}
\label{high probability region ellipsoid}
    \left\{(A, B) | \Delta^\top D \Delta \preceq I, \Delta^\top = (A~B) - (\widehat{A}~\widehat{B})\right\},
\end{align}
where $D$ is a positive definite matrix which depends on the observed past states and played actions. To derive the posterior distribution we were inspired by the work of \citet{Umenberer2019} that assumed the uninformative "improper" prior. Although majority of the steps are identical to their derivation we include it for the sake of completeness. 
The region given by the \cref{high probability region ellipsoid} represents an ellipsoid around $(\widehat{A}~\widehat{B})$. First we show lemma which converts the \cref{equation:SystemEvolution} to a form which we will use in the derivation of a posterior belief.

\begin{lemma}
 Denoting $\Phi_i = (x_i^\top~u_i^\top) \otimes I_{d_x}$ and $\vartheta_* = \vectorized(A_*, B_*)$ we can rewrite \cref{equation:SystemEvolution} as:
 \begin{align}
     x_{i+1} = \Phi_i\vartheta_* + w_{i+1}
 \end{align}
\end{lemma}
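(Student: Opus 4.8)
The plan is to reduce the claim to the standard vectorization identity $\vectorized(MXN) = (N^\top \otimes M)\vectorized(X)$ for conformable matrices. First I would collect the two linear terms in \eqref{equation:SystemEvolution} into a single matrix-vector product by stacking the state and action: writing $z_i = \begin{pmatrix} x_i \\ u_i \end{pmatrix} \in \R^{d_x + d_u}$ and $M_* = (A_*~B_*) \in \R^{d_x \times (d_x+d_u)}$, we have $A_* x_i + B_* u_i = M_* z_i$, so that \eqref{equation:SystemEvolution} reads $x_{i+1} = M_* z_i + w_{i+1}$.

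The key step is then to rewrite $M_* z_i$ as a linear function of $\vartheta_* = \vectorized(M_*)$ rather than of $M_*$. Since $M_* z_i$ is itself a column vector, it equals its own vectorization, and I would apply the identity above with $M = I_{d_x}$, $X = M_*$, and $N = z_i$ (viewed as a $(d_x+d_u)\times 1$ matrix):
\[
M_* z_i = \vectorized(I_{d_x}\, M_*\, z_i) = (z_i^\top \otimes I_{d_x})\vectorized(M_*) = \Phi_i \vartheta_*,
\]
where the last equality uses $z_i^\top = (x_i^\top~u_i^\top)$ together with the definitions of $\Phi_i$ and $\vartheta_*$. Substituting back yields $x_{i+1} = \Phi_i \vartheta_* + w_{i+1}$, as claimed.

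The computation is essentially routine; the only points requiring genuine care are the vectorization convention and the matching of the concatenation to the Kronecker factor. Specifically, one must use column-major stacking so that $\vectorized(A_*, B_*)$ denotes the vectorization of the horizontal concatenation $(A_*~B_*)$, i.e.\ $\begin{pmatrix}\vectorized(A_*)\\ \vectorized(B_*)\end{pmatrix}$, which is exactly the convention under which the identity places $I_{d_x}$ as the left Kronecker factor. A quick dimension check confirms consistency: $\Phi_i \in \R^{d_x \times d_x(d_x+d_u)}$ multiplies $\vartheta_* \in \R^{d_x(d_x+d_u)}$ to return a vector in $\R^{d_x}$. No substantive obstacle arises, so I would expect the lemma to follow in a few lines once the conventions are fixed.
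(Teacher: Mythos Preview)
Your proposal is correct and follows essentially the same approach as the paper: collapse $A_* x_i + B_* u_i$ into $(A_*~B_*)\begin{pmatrix}x_i\\u_i\end{pmatrix}$, note that a column vector equals its own vectorization, and apply the identity $\vectorized(MXN)=(N^\top\otimes M)\vectorized(X)$ with $M=I_{d_x}$. The paper's proof is the same four-line computation; your extra remarks on the column-major convention and dimension check are helpful but not needed for the argument.
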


\begin{proof}
Compute:
    \begin{align*}
        x_{i+1} &= (A_* ~ B_*)
        \begin{pmatrix}
        x_i \\ 
        u_i
        \end{pmatrix}
        + w_{i+1} \\
        &= \vectorized\left(
        (A_* ~ B_*)
        \begin{pmatrix}
        x_i \\ 
        u_i
        \end{pmatrix}
        \right) + w_{i+1}\\
        &= \left((x_i^\top~u_i^\top) \otimes I_{d_x} \right) \vectorized((A_*~B_*)) + w_{i+1} \\
        &=  \Phi_i\vartheta_* + w_{i+1}
    \end{align*}
\end{proof}

\subsection{Exact posterior}
\label{paragraph: Computation of exact posterior}
To compute the posterior distribution we first observe:
\begin{align*}
    p(\vartheta_*|\data) \propto p(\data|\vartheta_*)p(\vartheta_*).
\end{align*}
We first compute $p(\data |\vartheta_*)$. By the product rule we have:
\begin{align*}
    p(\data|\vartheta_*) \propto \prod_{j=1}^ip(x_j|x_{j-1}, u_{j-1}, \vartheta_*).
\end{align*}
Since $p(x_j|x_{j-1}, u_{j-1}, \vartheta_*)$ is the density of $\pN(\Phi_{j-1}\vartheta_*, \sigma_w^2I)$ we further have:
\begin{align*}
    p(\data|\vartheta_*) &\propto \prod_{j=1}^i e^{-\frac{1}{2\sigma_w^2}(x_j - \Phi_{j-1}\vartheta_*)^\top(x_j - \Phi_{j-1}\vartheta_*)} \\
    &= \exp\left(-\frac{1}{2\sigma_w^2}\sum_{j=1}^i\norm{x_j - \Phi_{j-1}\vartheta_*}^2\right) \\
    &\propto \exp\left(-\vartheta_*^\top \left(\frac{1}{2\sigma_w^2}\sum_{j=1}^i\Phi_{j-1}^\top\Phi_{j-1}\right)\vartheta_* + \left(\frac{1}{\sigma_w^2}\sum_{j=1}^ix_j^\top \Phi_{j-1}\right)\vartheta_* \right).
\end{align*}
Together with prior
\begin{align*}
    p(\vartheta_*) \propto \exp\left(-\vartheta_*^\top \left(\frac{\lambda}{2\sigma_w^2}I\right)\vartheta_*\right)
\end{align*}
we obtain:
\begin{align*}
     p(\vartheta_*|\data) \propto \exp\left(-\vartheta_*^\top \left(\frac{1}{2\sigma_w^2}\sum_{j=1}^i\Phi_{j-1}^\top\Phi_{j-1}+ \frac{\lambda}{2\sigma_w^2}I\right)\vartheta_* + \left(\frac{1}{\sigma_w^2}\sum_{j=1}^ix_j^\top \Phi_{j-1}\right)\vartheta_* \right).
\end{align*}
Matching the coefficients we realize that $\vartheta_*|\data \sim \pN(\mu, \Sigma)$, where:
\begin{align*}
    \Sigma^{-1} &= \frac{1}{\sigma_w^2}\sum_{j=1}^i\Phi_{j-1}^\top\Phi_{j-1}+ \frac{\lambda}{\sigma_w^2}I \\
    \mu &= \left(\frac{1}{\sigma_w^2}\sum_{j=1}^i\Phi_{j-1}^\top\Phi_{j-1}+ \frac{\lambda}{\sigma_w^2}I\right)^{-1}\left(\frac{1}{\sigma_w^2}\sum_{j=1}^i \Phi_{j-1}^\top x_j\right) \\
    &= \left(\sum_{j=1}^i\Phi_{j-1}^\top\Phi_{j-1}+ \lambda I\right)^{-1}\left(\sum_{j=1}^i\Phi_{j-1}^\top x_j\right)
\end{align*}
For the estimators $\widehat{A}, \widehat{B}$ we than take MAP, for which we have $\vectorized((\widehat{A}~\widehat{B})) = \mu$. We will now derive explicit value of $(\widehat{A}~ \widehat{B})$. We denote by $z_j = (x_j^\top~u_j^\top)^\top$ and compute:
\begin{align*}
\vectorized((\widehat{A}~\widehat{B})) &= \left(\sum_{j=1}^i\Phi_{j-1}^\top\Phi_{j-1}+ \lambda I\right)^{-1}\left(\sum_{j=1}^i \Phi_{j-1}^\top x_j\right) \\
&= \left(\sum_{j=1}^i(z_{j-1}\otimes I_{d_x})(z_{j-1}^\top \otimes I_{d_x} )+ \lambda I\right)^{-1}\left(\sum_{j=1}^i(z_{j-1}\otimes I_{d_x})x_j\right) \\
&=  \left(\left(\sum_{j=1}^iz_{j-1}z_{j-1}^\top + \lambda I_{d_x+d_u}\right)\otimes I_{d_x} \right)^{-1}\vectorized\left(\sum_{j=1}^ix_jz_{j-1}^\top\right) \\
&=  \left(\left(\sum_{j=1}^iz_{j-1}z_{j-1}^\top + \lambda I_{d_x+d_u}\right)^{-1}\otimes I_{d_x}\right) \vectorized\left(\sum_{j=1}^ix_jz_{j-1}^\top\right) \\
&= \vectorized\left( \left(\sum_{j=1}^ix_jz_{j-1}^\top\right)\left(\sum_{j=1}^iz_{j-1}z_{j-1}^\top + \lambda I_{d_x+d_u}\right)^{-1}\right).
\end{align*}
Hence we obtained that MAP estimator satisfy: 
\begin{align*}
    (\widehat{A}~\widehat{B}) = \left(\sum_{j=1}^ix_jz_{j-1}^\top\right)\left(\sum_{j=1}^iz_{j-1}z_{j-1}^\top + \lambda I_{d_x+d_u}\right)^{-1}
\end{align*}
which is the same if we would compute the RLS estimator with regularizing parameter $\lambda$.

\subsection{High probability regions}
\label{paragraph: High probability error bounds}
 Using the exact posterior distribution computed in the \Cref{paragraph: Computation of exact posterior} we have that $(A_*, B_*) \in E = \{(A, B)| \theta = \vectorized((A~B)), (\theta - \mu)^\top\Sigma^{-1}(\theta - \mu) \le c_{\delta}\}$ w.p. at least $1-\delta$, where
$c_\delta$ is chosen in such a way that for $Z\sim \chi^2_{d_x^2 + d_xd_u}$ we have: $\mathbb{P}(Z\ge c_\delta) = \delta$. For matrices $A, B$ denote by $\Delta^\top = (A~B) - (\widehat{A}~\widehat{B})$. For $(A, B) \in E$ we have:
\begin{align*}
    1 &\ge \vectorized(\Delta^\top)^\top \left(\frac{1}{\sigma_w^2c_\delta}\sum_{j=1}^i\Phi_{j-1}^\top\Phi_{j-1} + \frac{\lambda}{ c_\delta\sigma_w^2}I\right)\vectorized(\Delta^\top) \\
    &= \vectorized(\Delta^\top)^\top \left(\left(\frac{1}{\sigma_w^2c_\delta}\sum_{j=1}^iz_{j-1}z_{j-1}^\top+\frac{\lambda}{ c_\delta\sigma_w^2}I_{d_x+d_u}\right)\otimes I_{d_x}\right)\vectorized(\Delta^\top)  \\
    &= \Tr(\Delta^\top D \Delta) \ge \lambda_{\max}(\Delta^\top D \Delta). \\
\end{align*}
where $D = \frac{1}{c_\delta\sigma_w^2}\left(\sum_{j=1}^iz_{j-1}z_{j-1}^\top+\lambda I_{d_x+d_u}\right)$. Hence with probability at least $1-\delta$ matrix $(A_*, B_*)$ lies in the set $\Theta = \{(A, B)| \Delta^\top D \Delta \preceq I, \Delta^\top = (A~B) - (\widehat{A}~\widehat{B})\}$.
\newpage 
\section{Semi Definite Program from SLS}
\label{appendix: Semi Definite Program from SLS}

In the following we will derive a semi-definite constraint which deals with the ellipsoid $\Theta$. To formalize, we denote $\Delta^\top = (A~B)-(\widehat{A}~\widehat{B})$ and would like to solve the following problem:
\begin{equation}
\label{before kyp and s lemma}
    \begin{aligned}
    &\text{find }K \\
    &\forall \Delta \text{ with } \Delta^\top D \Delta \preceq I: \\
    &\qquad\norm{\Delta^\top
    \begin{pmatrix}
    I \\
    K
    \end{pmatrix}
    (zI - \widehat{A} - \widehat{B}K)^{-1}
    }_{\mathcal{H}_\infty} < 1.
    \end{aligned}
\end{equation}

In the problem posed by the \cref{before kyp and s lemma} there are two issues which we need to solve. First there is the condition that we would like to find controller $K$ for which a $\mathcal{H}_\infty$ constraint holds for every $\Delta$ with $\Delta^\top D \Delta \preceq I$. First we will apply the S-lemma (c.f. \citep{LMIs2004}) and obtain a single $\mathcal{H}_\infty$ constraint. Next we will transform the $\mathcal{H}_\infty$ norm constraint to the semi-definite constraint by application of KYP lemma (c.f. \citep{KYPMoreReliableSource}).

\paragraph{First S then KYP lemma}
\label{paragraph:First S then KYP lemma}
The constraint
\begin{align*}
    \norm{\Delta^\top
    \begin{pmatrix}
    I \\
    K
    \end{pmatrix}
    (zI - \widehat{A} - \widehat{B}K)^{-1}
    }_{\mathcal{H}_\infty} < 1
\end{align*}
is equivalent to the constraint that for every $z \in \partial \mathbb{D}$:
\begin{align*}
    \norm{\Delta^\top
    \begin{pmatrix}
    I \\
    K
    \end{pmatrix}
    (zI - \widehat{A} - \widehat{B}K)^{-1}
    }_2 < 1.
\end{align*}
The latter constraint is equivalent to:
\begin{align*}
    \Delta^\top
    \begin{pmatrix}
    I \\
    K
    \end{pmatrix}
    (zI - \widehat{A} - \widehat{B}K)^{-1}(zI - \widehat{A} - \widehat{B}K)^{-\top}\begin{pmatrix}
    I \\
    K
    \end{pmatrix}^\top \Delta \prec I
\end{align*}
which is further equivalent to:
\begin{align*}
    \begin{pmatrix}
    I & (zI - \widehat{A} - \widehat{B}K)^{-\top}\begin{pmatrix}
    I \\
    K
    \end{pmatrix}^\top \Delta \\
    \Delta^\top
    \begin{pmatrix}
    I \\
    K
    \end{pmatrix}
    (zI - \widehat{A} - \widehat{B}K)^{-1} & I
    \end{pmatrix}
    \succ 0.
\end{align*}
The problem given by \cref{before kyp and s lemma} is therefore equivalent to:
\begin{equation}
\label{double forall formulation}
    \begin{aligned}
        &\forall \Delta \text{ with } \Delta^\top D \Delta \preceq I, \forall z \in \partial \mathbb{D}:\\
        &\qquad
        \begin{pmatrix}
    I & (zI - \widehat{A} - \widehat{B}K)^{-\top}\begin{pmatrix}
    I \\
    K
    \end{pmatrix}^\top \Delta \\
    \Delta^\top
    \begin{pmatrix}
    I \\
    K
    \end{pmatrix}
    (zI - \widehat{A} - \widehat{B}K)^{-1} & I
    \end{pmatrix}
    \succ 0.
    \end{aligned}
\end{equation}
By S lemma of \citet{LMIs2004}, \cref{double forall formulation} is further equivalent to:
\begin{equation}
\label{forall t from 0 to infinity}
    \begin{aligned}
    &\forall z \in \partial \mathbb{D}, \exists t \in (0, \infty) \text{ s.t.}: \\
    &\qquad\begin{pmatrix}
    I & 0 & (zI - \widehat{A}- \widehat{B}K)^{-\top}\begin{pmatrix} I \\ K \end{pmatrix}^\top \\
    0 & (1-t)I & 0 \\
    \begin{pmatrix} I \\ K \end{pmatrix}(zI - \widehat{A}-\widehat{B}K)^{-1} & 0 & tD
    \end{pmatrix}
    \succ 0.
    \end{aligned}
\end{equation}
Observe that \cref{forall t from 0 to infinity} is then equivalent to:
\begin{equation}
\label{forall t from 0 to 1}
    \begin{aligned}
    &\forall z \in \partial \mathbb{D}, \exists t \in (0, 1) \text{ s.t.}: \\
    &\qquad\begin{pmatrix}
    I & (zI - \widehat{A}- \widehat{B}K)^{-\top}\begin{pmatrix} I \\ K \end{pmatrix}^\top \\
    \begin{pmatrix} I \\ K \end{pmatrix}(zI - \widehat{A}-\widehat{B}K)^{-1}  & tD
    \end{pmatrix}
    \succ 0.
    \end{aligned}
\end{equation}
Next observe that in \cref{forall t from 0 to 1} if the positive definite constraint holds for one $t$ it will also hold for all $t' \in [t, 1)$. Therefore instead of searching at every $z \in \partial \mathbb{D}$ for suitable $t$ we can equivalently search uniformly in $t$ -- we can take the supremum. Hence \cref{forall t from 0 to 1} is equivalent to:
\begin{equation}
\label{t from 0 to 1 forall z}
    \begin{aligned}
    & \exists t \in (0, 1) \text{ s.t. } \forall z \in \partial \mathbb{D}: \\
    &\qquad\begin{pmatrix}
    I & (zI - \widehat{A}- \widehat{B}K)^{-\top}\begin{pmatrix} I \\ K \end{pmatrix}^\top \\
    \begin{pmatrix} I \\ K \end{pmatrix}(zI - \widehat{A}-\widehat{B}K)^{-1}  & tD
    \end{pmatrix}
    \succ 0.
    \end{aligned}
\end{equation}
Matrix $D$ is positive definite hence $D^{-\frac{1}{2}}$ exists. Observe that conjugating the positive definite constraint in \cref{t from 0 to 1 forall z} with matrix $\diag(I, D^{-\frac{1}{2}})$ the \cref{t from 0 to 1 forall z} is equivalent to:
\begin{equation}
\label{t from 0 to 1 forall z before  norm}
    \begin{aligned}
    & \exists t \in (0, 1) \text{ s.t. } \forall z \in \partial \mathbb{D}: \\
    &\qquad\begin{pmatrix}
    I & \frac{1}{\sqrt{t}}(zI - \widehat{A}- \widehat{B}K)^{-\top}\begin{pmatrix} I \\ K \end{pmatrix}^\top D^{-\frac{1}{2}} \\
    \frac{1}{\sqrt{t}}D^{-\frac{1}{2}}\begin{pmatrix} I \\ K \end{pmatrix}(zI - \widehat{A}-\widehat{B}K)^{-1}  & I
    \end{pmatrix}
    \succ 0,
    \end{aligned}
\end{equation}
which is by using Schur complement lemma further equivalent to:
\begin{equation}
\label{t from 0 to 1 forall z 2 norm}
    \begin{aligned}
    & \exists t \in (0, 1) \text{ s.t. } \forall z \in \partial \mathbb{D}: \\
    &\qquad \norm{\frac{1}{\sqrt{t}}D^{-\frac{1}{2}}\begin{pmatrix} I \\ K \end{pmatrix}(zI - \widehat{A}-\widehat{B}K)^{-1}}_2 < 1.
    \end{aligned}
\end{equation}
By the definition of $\mathcal{H}_\infty$ norm this is further equivalent to:
\begin{equation}
\label{t from 0 to 1 forall z H infinity norm}
    \begin{aligned}
    & \exists t \in (0, 1) \text{ s.t.}: \\
    &\qquad \norm{\frac{1}{\sqrt{t}}D^{-\frac{1}{2}}\begin{pmatrix} I \\ K \end{pmatrix}(zI - \widehat{A}-\widehat{B}K)^{-1}}_{\mathcal{H}_\infty} < 1.
    \end{aligned}
\end{equation}
Now we are in the position to apply discrete KYP lemma of \citet{KYPMoreReliableSource}. This yields that the \cref{t from 0 to 1 forall z H infinity norm} is equivalent to:
\begin{equation}
\label{t from 0 to 1 P postive definite}
    \begin{aligned}
    & \exists t \in (0, 1), \exists X \succ 0 \text{ s.t.}: \\
    &\qquad 
    \begin{pmatrix}
    \widehat{A} + \widehat{B}K & I \\
    \frac{1}{\sqrt{t}}D^{-\frac{1}{2}}\begin{pmatrix} I \\ K \end{pmatrix} & 0
    \end{pmatrix}
    \begin{pmatrix}
    X & 0 \\
    0 & I
    \end{pmatrix}
    \begin{pmatrix}
    \widehat{A} + \widehat{B}K & I \\
    \frac{1}{\sqrt{t}}D^{-\frac{1}{2}}\begin{pmatrix} I \\ K \end{pmatrix} & 0
    \end{pmatrix}^\top
    \preceq
    \begin{pmatrix}
    X & 0 \\
    0 & I
    \end{pmatrix}.
    \end{aligned}
\end{equation}
Applying Schur complement lemma we observe that the positive definite constraint given by \cref{t from 0 to 1 P postive definite} is equivalent to:
\begin{equation}
\label{one matrix inequality step 1}
    \begin{aligned}
    \begin{pmatrix}
    X & 0 & \widehat{A}+\widehat{B}K & I \\
    0 & I & \frac{1}{\sqrt{t}}D^{-\frac{1}{2}}\begin{pmatrix} I \\ K \end{pmatrix} & 0 \\
    (\widehat{A}+\widehat{B}K)^\top & \frac{1}{\sqrt{t}}\begin{pmatrix} I \\ K \end{pmatrix}^\top D^{-\frac{1}{2}} & X^{-1} & 0 \\
    I & 0 & 0 & I
    \end{pmatrix}
    \succeq 0
    \end{aligned}
\end{equation}
Conjugating with matrix $\diag(I, \sqrt{t}I, X, I)$ and denoting $S = KX$ we obtain that \cref{one matrix inequality step 1} is equivalent to:
\begin{equation}
\label{one matrix inequality step 2}
    \begin{aligned}
    \begin{pmatrix}
    X & 0 & \widehat{A}X+\widehat{B}S & I \\
    0 & tI & D^{-\frac{1}{2}}\begin{pmatrix} X \\ S \end{pmatrix} & 0 \\
    (\widehat{A}X+\widehat{B}S)^\top & \begin{pmatrix} X \\ S \end{pmatrix}^\top D^{-\frac{1}{2}} & X & 0 \\
    I & 0 & 0 & I
    \end{pmatrix}
    \succeq 0
    \end{aligned}
\end{equation}
Taking Schur complement lemma again we obtain that the \cref{one matrix inequality step 2} is equivalent to:
\begin{equation}
\label{one matrix inequality step 3}
    \begin{aligned}
    \begin{pmatrix}
    X - I & 0 & \widehat{A}X+\widehat{B}S  \\
    0 & tI & D^{-\frac{1}{2}}\begin{pmatrix} X \\ S \end{pmatrix}\\
    (\widehat{A}X+\widehat{B}S)^\top & \begin{pmatrix} X \\ S \end{pmatrix}^\top D^{-\frac{1}{2}} & X \\
    \end{pmatrix}
    \succeq 0
    \end{aligned}
\end{equation}
Conjugating by matrix
\begin{align*}
 \begin{pmatrix}
    I & 0 & 0 \\
    0 & 0 & I \\
    0 & D^{\frac{1}{2}} & 0
\end{pmatrix}
\end{align*}
we obtain that \cref{one matrix inequality step 3} is further equivalent to:
\begin{equation}
\label{one matrix inequality step 4}
    \begin{aligned}
    \begin{pmatrix}
    X - I & \widehat{A}X+\widehat{B}S & 0  \\
   (\widehat{A}X+\widehat{B}S)^\top  & X & \begin{pmatrix} X \\ S \end{pmatrix}^\top \\
    0 & \begin{pmatrix} X \\ S \end{pmatrix} &tD \\
    \end{pmatrix}
    \succeq 0.
    \end{aligned}
\end{equation}
We derived that the problem given by \cref{before kyp and s lemma} is equivalent to:
\begin{equation}
\label{convex sdp constraint}
    \begin{aligned}
    &\text{find } t \in (0, 1), X \succ 0,S \\
    &\qquad\text{s.t.} 
     \begin{pmatrix}
    X - I & \widehat{A}X+\widehat{B}S & 0  \\
   (\widehat{A}X+\widehat{B}S)^\top & X & \begin{pmatrix} X \\ S \end{pmatrix}^\top \\
    0 & \begin{pmatrix} X \\ S \end{pmatrix} &tD \\
    \end{pmatrix}
    \succeq 0.
    \end{aligned}
\end{equation}
Hence we can solve the convex feasibility problem:
\begin{equation}
\label{convex sdp program}
    \begin{aligned}
    \min_{t \in (0, 1), X \succ 0,S} 0 \\
    &\qquad\text{s.t.} 
     \begin{pmatrix}
    X - I & \widehat{A}X+\widehat{B}S & 0  \\
    (\widehat{A}X+\widehat{B}S)^\top & X & \begin{pmatrix} X \\ S \end{pmatrix}^\top\\
    0 & \begin{pmatrix} X \\ S \end{pmatrix} &tD \\
    \end{pmatrix}
    \succeq 0.
    \end{aligned}
\end{equation}
From the optimal solution of SDP given by \cref{convex sdp program} we obtain the stabilizing controller via $K = SX^{-1}$.

\newpage 
\section{Optimal Infinite Horizon via SDP}
\label{appendix: Optimal Infinite Horizon via SDP}
In this section we first motivate how we can find the optimal infinite horizon cost and controller from a SDP, we further pose the robust variant of the proposed SDP which we transform to a convex SDP using S lemma of \citet{LMIs2004}.

\subsection{Optimal Infinite Horizon Controller via SDP}
\label{[paragraph: Optimal Infinite horizon cost via SDP}
Assuming that the optimal strategy to choose actions is given via fixed measurable function $u_i = f(x_i)$ and further assuming that via this action selection the limit distribution of state exists we obtain that in the limit we have:
\begin{align}
\label{limit distribution}
    x = A_*x + B_*u + w,
\end{align}
where $x$ is the limit distribution of state and by $u = f(x)$. Zero-mean Gaussian noise with covariance matrix $\sigma_w^2I$ is denoted by $w$. In the following we will compute the variance of the \Cref{limit distribution}. 
\begin{align*}
    \Sigma_{xx} = \var(x) = \var(A_*x + B_*u + w) = (A_*~B_*)\Sigma(A_*~B_*)^\top + \sigma_w^2I,
\end{align*}
where we denoted $\Sigma_{xx} = \var(x)$ and $\Sigma = \var\left(\begin{pmatrix}x\\u\end{pmatrix}\right)$. At the same time under the given assumptions the infinite horizon cost is given by:
\begin{align*}
    \lim_{T \to \infty}\E\left[\frac{1}{T} J_T(\pi)\right] &= \E\left[x^\top Q x + u^\top R u\right] \\
    &= \E\left[ \begin{pmatrix}x \\ u\end{pmatrix}^\top \begin{pmatrix}Q & 0 \\ 0 & R\end{pmatrix}  \begin{pmatrix}x \\ u\end{pmatrix}\right] \\
    &= \Tr \left( \begin{pmatrix}Q&0\\ 0&R \end{pmatrix} \E \left[\begin{pmatrix}x \\ u\end{pmatrix}\begin{pmatrix}x \\ u\end{pmatrix}^\top\right]   \right)\\
    &= \Tr \left( \begin{pmatrix}Q&0\\ 0&R \end{pmatrix} \Sigma \right).
\end{align*}
Assembling the results together we see that the optimal infinite horizon cost can be computed from the SDP:
\begin{equation}
\label{optimal infinite hoizon policy SDP}
\begin{aligned}
\min_{\Sigma \succeq 0} \quad
    &\Tr\left(
    \begin{pmatrix}
        Q & 0 \\
        0 & R
    \end{pmatrix}
    \Sigma
    \right)\\
    &\text{s.t.} \quad
    \Sigma_{xx} = (A_* ~ B_*)\Sigma (A_* ~ B_*)^\top  + \sigma_w^2 I,
\end{aligned}
\end{equation}
However as we can see in \Cref{lemma: SDP relaxation gives the same result}, we can also change the equality constraint to semi-definite constraint and obtain the same minimization problem:
\begin{equation}
\label{optimal infinite hoizon policy SDP relaxed}
\begin{aligned}
\min_{\Sigma \succeq 0} \quad
    &\Tr\left(
    \begin{pmatrix}
        Q & 0 \\
        0 & R
    \end{pmatrix}
    \Sigma
    \right)\\
    &\text{s.t.} \quad
    \Sigma_{xx} \succeq (A_* ~ B_*)\Sigma (A_* ~ B_*)^\top  + \sigma_w^2 I,
\end{aligned}
\end{equation}

\begin{lemma}
\label{lemma: SDP relaxation gives the same result}
 For the optimal $\Sigma$ of SDP given by \cref{optimal infinite hoizon policy SDP relaxed} we have:
 \begin{align*}
     \Sigma_{xx} = (A_* ~ B_*)\Sigma (A_* ~ B_*)^\top  + \sigma_w^2 I.
 \end{align*}
\end{lemma}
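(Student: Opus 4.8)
The plan is to argue by contradiction: assume the optimal $\Sigma^\star$ leaves the semidefinite constraint strictly slack, i.e. the gap $G := \Sigma_{xx}^\star - (A_*~B_*)\Sigma^\star(A_*~B_*)^\top - \sigma_w^2 I$ is a nonzero positive semidefinite matrix, and then exhibit a strictly cheaper feasible point, contradicting optimality. Throughout, the only properties I would use are $Q, R \succ 0$ and $\Sigma^\star \succeq 0$.

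First I would record two structural facts. Since $\Sigma^\star \succeq 0$ we have $(A_*~B_*)\Sigma^\star(A_*~B_*)^\top \succeq 0$, so the constraint forces $\Sigma_{xx}^\star \succeq \sigma_w^2 I \succ 0$; in particular $\Sigma_{xx}^\star$ is invertible and the gain $K := \Sigma_{ux}^\star(\Sigma_{xx}^\star)^{-1}$ is well defined. Writing $A_{cl} := A_* + B_* K$ and $L := \Sigma_{uu}^\star - K\Sigma_{xx}^\star K^\top \succeq 0$ (the Schur complement, nonnegative because $\Sigma^\star \succeq 0$ and $\Sigma_{xx}^\star \succ 0$), a direct block expansion gives $(A_*~B_*)\Sigma^\star(A_*~B_*)^\top = A_{cl}\Sigma_{xx}^\star A_{cl}^\top + B_* L B_*^\top$. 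Hence the constraint reads $\Sigma_{xx}^\star - A_{cl}\Sigma_{xx}^\star A_{cl}^\top \succeq B_* L B_*^\top + \sigma_w^2 I \succeq \sigma_w^2 I \succ 0$, a strict discrete Lyapunov inequality with $\Sigma_{xx}^\star \succ 0$ that certifies $\rho(A_{cl}) < 1$.

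Next I would construct the perturbation by shrinking only the state block while keeping $K$ and $L$ fixed: set $\Sigma_{xx}' = \Sigma_{xx}^\star - \epsilon E$, $\Sigma_{ux}' = K\Sigma_{xx}'$, and $\Sigma_{uu}' = L + K\Sigma_{xx}'K^\top$, for a suitable $E \succeq 0$ and small $\epsilon > 0$. By construction the Schur complement of $\Sigma'$ is again $L \succeq 0$, so $\Sigma' \succeq 0$ as soon as $\Sigma_{xx}' \succ 0$, which holds for small $\epsilon$. The objective change equals $-\epsilon\,\Tr\big((Q + K^\top R K)E\big)$, which is strictly negative whenever $E \succeq 0$, $E \neq 0$, since $Q + K^\top R K \succ 0$. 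A short computation shows the new constraint gap is $G - \epsilon\,(E - A_{cl} E A_{cl}^\top)$, so feasibility is preserved provided this matrix stays positive semidefinite.

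The crux is choosing $E$ so that the gap term cooperates even when $G$ is singular. Because $\rho(A_{cl}) < 1$, the discrete Lyapunov operator $\mathcal{L}(E) = E - A_{cl}E A_{cl}^\top$ is invertible and $E := \sum_{k\geq 0} A_{cl}^k\,G\,(A_{cl}^\top)^k$ is a well-defined positive semidefinite solution of $\mathcal{L}(E) = G$; moreover $E \succeq G \neq 0$, so $E \neq 0$. With this $E$ the new gap is exactly $G - \epsilon G = (1-\epsilon)G \succeq 0$ for every $\epsilon \in (0,1)$, so $\Sigma'$ is feasible while the objective strictly decreases, contradicting optimality of $\Sigma^\star$. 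Therefore $G = 0$ and the constraint is tight, which is the claim. The main obstacle is precisely this singular-gap case: a naive radial shrink $\Sigma' = \alpha\Sigma^\star$ only restores feasibility when $G \succ 0$, whereas aligning the perturbation with the Lyapunov preimage of $G$ keeps the iterate feasible for \emph{any} nonzero $G \succeq 0$.
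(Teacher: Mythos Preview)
Your argument is correct. The paper's proof is much shorter: writing $E$ for the slack (your $G$), it simply sets $\Sigma' = \Sigma - \begin{pmatrix} E & 0 \\ 0 & 0 \end{pmatrix}$, observes that
\[
\Sigma_{xx}-E=(A_*~B_*)\Sigma(A_*~B_*)^\top+\sigma_w^2 I\succeq (A_*~B_*)\Sigma'(A_*~B_*)^\top+\sigma_w^2 I,
\]
and concludes that $\Sigma'$ is feasible with cost smaller by $\Tr(QE)>0$. That is two lines, but the paper never checks that $\Sigma'\succeq 0$, and subtracting from the $xx$-block while leaving $\Sigma_{ux},\Sigma_{uu}$ fixed can indeed push the Schur complement negative. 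Your construction avoids this cleanly by freezing $K$ and the Schur complement $L$ and moving only $\Sigma_{xx}$, so $\Sigma'\succeq 0$ is automatic.

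That said, your Lyapunov detour is heavier than needed. Within your own framework the naive choice $E=G$ already works: the new slack is then $(1-\epsilon)G+\epsilon A_{cl}GA_{cl}^\top\succeq 0$, with no appeal to $\rho(A_{cl})<1$ or to the series $\sum_{k\ge 0}A_{cl}^kG(A_{cl}^\top)^k$. So the stability argument and the Lyapunov inverse are correct but dispensable. In short, your proof is a rigorous (and slightly over-engineered) variant of the paper's idea, and it closes a gap that the paper's argument leaves open.
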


\begin{proof}
    Assume that
    \begin{align*}
        \Sigma_{xx} = (A_* ~ B_*)\Sigma (A_* ~ B_*)^\top  +  \sigma_w^2 I + E,
    \end{align*}
    where $E \succeq 0$ and $E \ne 0$. Since
    \begin{align*}
        \Sigma_{xx} - E &=  (A_* ~ B_*)\Sigma (A_* ~ B_*)^\top  +  \sigma_w^2 I \\
        &\succeq (A_* ~ B_*)\left(\Sigma - \begin{pmatrix}E & 0 \\ 0 & 0\end{pmatrix}\right) (A_* ~ B_*)^\top  +  \sigma_w^2 I,
    \end{align*}
    also
    \begin{align*}
        \Sigma - \begin{pmatrix}E & 0 \\ 0 & 0\end{pmatrix}
    \end{align*}
    is feasible solution. And since $Q$ is positive semi-definite its cost is smaller than the one of $\Sigma$.
\end{proof}

\subsection{Robust Formulation}

As we will see in the \Cref{lemma: Better covariance matrix for Epsilon region} the semi-definite constraint in \Cref{optimal infinite hoizon policy SDP relaxed} ensures the closed loop stability of system $A_*, B_*$ with controller $K = \Sigma_{ux}\Sigma_{xx}^{-1}$. We will now write the SDP given by \cref{optimal infinite hoizon policy SDP relaxed} in a robust variant:
\begin{equation}
\label{SDP: stabilizing SDP appendix}
    \begin{aligned}
        \min_{\Sigma \succeq 0} &\Tr\left(
    \begin{pmatrix}
        Q & 0 \\
        0 & R
    \end{pmatrix}
    \Sigma
    \right)\\
    &\text{s.t. } \forall (A, B) \in \Theta: ~ 
    \Sigma_{xx} \succeq (A~B)\Sigma(A~B)^\top + \sigma_w^2 I
    \end{aligned}
\end{equation}
Next we show that from any feasible solution $\Sigma$ of the SDP given by \cref{SDP: stabilizing SDP} we can synthesize a controller $K$ which stabilizes every system in $\Theta$.

\begin{lemma}
\label{lemma: Better covariance matrix for Epsilon region}
    Let $\Sigma$ be a feasible solution of SDP given by \cref{SDP: stabilizing SDP appendix}. Then we have:
    \begin{enumerate}
        \item $\Sigma'$ of the form 
        \begin{align*}
            \Sigma' = 
            \begin{pmatrix}
            \Sigma_{xx} & \Sigma_{xx}K^\top \\
            K\Sigma_{xx} & K\Sigma_{xx}K^\top
            \end{pmatrix},
        \end{align*}
        where $K = \Sigma_{ux}\Sigma_{xx}^{-1}$, is also feasible solution of the SDP given by \eqref{SDP: stabilizing SDP} with cost at most that of $\Sigma$.
        \item For $K = \Sigma_{ux}\Sigma_{xx}^{-1}$ we have:  $\forall (A, B) \in \Theta:~ \rho(A + BK) < 1$.
    \end{enumerate}
\end{lemma}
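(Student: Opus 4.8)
\textbf{Proof proposal for \Cref{lemma: Better covariance matrix for Epsilon region}.}

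The plan is to handle the two claims in sequence, establishing the feasibility-with-lower-cost claim first and then deriving closed-loop stability from the structure of $\Sigma'$. For the first claim, I would begin by observing that since $\Sigma \succeq 0$, its Schur complement structure guarantees $\Sigma_{uu} \succeq \Sigma_{ux}\Sigma_{xx}^{-1}\Sigma_{xu} = K\Sigma_{xx}K^\top$, where $K = \Sigma_{ux}\Sigma_{xx}^{-1}$. The matrix $\Sigma'$ is built to agree with $\Sigma$ in its $xx$- and $ux$-blocks but to replace $\Sigma_{uu}$ by the smaller block $K\Sigma_{xx}K^\top$. I would first check that $\Sigma' \succeq 0$: it factors as $\binom{I}{K}\Sigma_{xx}\binom{I}{K}^\top$, which is manifestly positive semi-definite. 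Next, I would verify feasibility by noting that for any $(A,B) \in \Theta$, the quadratic form $(A~B)\Sigma'(A~B)^\top = (A+BK)\Sigma_{xx}(A+BK)^\top \preceq (A~B)\Sigma(A~B)^\top$, because $\Sigma - \Sigma' = \diag\!\big(0,\, \Sigma_{uu} - K\Sigma_{xx}K^\top\big) \succeq 0$ and conjugation by $(A~B)$ preserves the Loewner order. Hence the constraint $\Sigma_{xx} \succeq (A~B)\Sigma(A~B)^\top + \sigma_w^2 I$ satisfied by $\Sigma$ implies the same constraint for $\Sigma'$. Finally, since $R \succeq 0$ and $\Sigma_{uu} - K\Sigma_{xx}K^\top \succeq 0$, the objective $\Tr\!\big(\diag(Q,R)\,\Sigma'\big) \le \Tr\!\big(\diag(Q,R)\,\Sigma\big)$, giving the cost bound.

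For the second claim, I would specialize the feasibility constraint to the optimized covariance $\Sigma'$. Since $\Sigma'$ is itself feasible, for every $(A,B)\in\Theta$ we have $\Sigma_{xx} \succeq (A+BK)\Sigma_{xx}(A+BK)^\top + \sigma_w^2 I$. Because $\sigma_w^2 I \succ 0$, this can be rearranged to the strict Lyapunov inequality $(A+BK)\Sigma_{xx}(A+BK)^\top \prec \Sigma_{xx}$, with $\Sigma_{xx} \succ 0$ (it is positive definite because $\Sigma_{xx} \succeq \sigma_w^2 I$). The standard discrete-time Lyapunov stability criterion then yields $\rho(A+BK) < 1$ for every $(A,B)\in\Theta$.

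The only genuinely delicate point is the strict-inequality bookkeeping in the Lyapunov step: one must argue that the presence of the strictly positive term $\sigma_w^2 I$ upgrades the semi-definite constraint to a strict Lyapunov inequality, which is exactly what forces the spectral radius strictly below one rather than merely at most one. I would make this explicit by writing $(A+BK)\Sigma_{xx}(A+BK)^\top \preceq \Sigma_{xx} - \sigma_w^2 I \prec \Sigma_{xx}$, and then invoke the classical fact that existence of a positive-definite $P = \Sigma_{xx}$ solving $M P M^\top \prec P$ is equivalent to $\rho(M) < 1$ for $M = A+BK$. Everything else is routine manipulation of the Loewner order and Schur complements; the cost-monotonicity and feasibility claims follow mechanically once $\Sigma - \Sigma' \succeq 0$ is identified.
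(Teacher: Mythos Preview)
Your proposal is correct and follows essentially the same route as the paper: both identify $\Sigma-\Sigma'=\diag(0,\Sigma_{uu}-K\Sigma_{xx}K^\top)\succeq 0$ via the Schur complement, deduce feasibility of $\Sigma'$ by conjugation monotonicity, and use the constraint $\Sigma_{xx}\succeq (A+BK)\Sigma_{xx}(A+BK)^\top+\sigma_w^2 I$ together with $\Sigma_{xx}\succeq\sigma_w^2 I\succ 0$ to conclude $\rho(A+BK)<1$. The only cosmetic difference is that the paper carries out the Lyapunov step by testing against an eigenvector of $(A+BK)^\top$, whereas you invoke the discrete Lyapunov criterion as a black box; these are the same argument.
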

\begin{proof}
Since 
\begin{align*}
    \Sigma - \Sigma' = \begin{pmatrix}0 & 0 \\ 0 & \Sigma_{uu} - \Sigma_{ux}\Sigma_{xx}^{-1}\Sigma_{xu} \end{pmatrix}
\end{align*}
and $\Sigma_{uu} - \Sigma_{ux}\Sigma_{xx}^{-1}\Sigma_{xu}$ is Schur complement of $\Sigma$ we have $\Sigma_{uu} - \Sigma_{ux}\Sigma_{xx}^{-1}\Sigma_{xu} \succeq 0$ and consequently $\Sigma \succeq \Sigma'$. Now fix aribtrary $(A, B) \in \Theta$. We have $\Sigma_{xx} \succeq (A~B)\Sigma(A~B)^\top + \sigma_w^2 I \succeq (A~B)\Sigma'(A~B)^\top + \sigma_w^2 I$, therefore $\Sigma'$ is feasible. Next we will show $\rho(A+BK)<1$. The semi-definite inequality
\begin{align*}
    \Sigma_{xx} \succeq (A~B)\Sigma'(A~B)^\top + \sigma_w^2 I
\end{align*}
is equivalent to:
\begin{align*}
    \Sigma_{xx} \succeq (A +BK)\Sigma_{xx}(A+BK)^\top + \sigma_w^2 I.
\end{align*}
Let $\mu, v$ be eigenpair of $(A+BK)^\top$. We have:
\begin{align*}
    v^{H}\Sigma_{xx}v \ge \abs{\mu}^2v^H\Sigma_{xx}v + \sigma_w^2\norm{v}^2 > \abs{\mu}^2v^H\Sigma_{xx}v.
\end{align*}
Hence $\abs{\mu} < 1$.
\end{proof}

In the following we will rewrite SDP given by \cref{SDP: stabilizing SDP} to a convex SDP using S lemma of \citet{LMIs2004}. Inserting $(A~B) = \Delta^\top + (\widehat{A}~\widehat{B})$ to \cref{SDP: stabilizing SDP} we obtain that SDP given by \cref{SDP: stabilizing SDP} is equivalent to:
\begin{equation}
\label{SDP: stabilizing SDP with X}
    \begin{aligned}
        \min_{\Sigma \succeq 0} &\Tr\left(
    \begin{pmatrix}
        Q & 0 \\
        0 & R
    \end{pmatrix}
    \Sigma
    \right)\\
    &\text{s.t. } \forall (A, B) \in \Theta: \\
    &
    \Sigma_{xx} - \sigma_w^2 I - \Delta^\top \Sigma \Delta - \Delta^\top \Sigma (\widehat{A}~\widehat{B})^\top - (\widehat{A}~\widehat{B})\Sigma \Delta - (\widehat{A}~\widehat{B})\Sigma (\widehat{A}~\widehat{B})^\top  \succeq 0
    \end{aligned}
\end{equation}
The latter is by S lemma equivalent to:
\begin{equation}
\label{SDP: stabilizing SDP final appendix}
    \begin{aligned}
        \min_{\Sigma \succeq 0, t \ge 0} &\Tr\left(
    \begin{pmatrix}
        Q & 0 \\
        0 & R
    \end{pmatrix}
    \Sigma
    \right)\\
    &\text{s.t. } 
    \begin{pmatrix}
        \Sigma_{xx} - (\widehat{A}~\widehat{B})\Sigma (\widehat{A}~\widehat{B})^\top - (t + \sigma_w^2) I & (\widehat{A}~\widehat{B})\Sigma \\
        \Sigma (\widehat{A}~\widehat{B})^\top & t D - \Sigma
    \end{pmatrix}\succeq 0.
    \end{aligned}
\end{equation}
This is a convex formulation of SDP and we can solve it using e.g. MOSEK \citep{mosek}.
\newpage 
\section{Proof of \texorpdfstring{\Cref{theorem: feasibility equivalence}}{Theorem 1}}
\label{appendix: Proof of Theorem 1}

Even though the way we obtained SDP \eqref{semi-definite constraint} and SDP \eqref{SDP: stabilizing SDP final} are different we show in this section that in fact they are the same, meaning that as soon as one SDP is feasible the other is feasible as well. To see this first note that semi-definite constraint in \cref{semi-definite constraint} can be rewritten as:
\begin{align}
\label{SDP to SLS step 1}
    \begin{pmatrix}
    X - I & (\widehat{A}~\widehat{B})\begin{pmatrix}I \\ K\end{pmatrix}X & 0 \\
    X\begin{pmatrix}I \\ K\end{pmatrix}^\top (\widehat{A}~\widehat{B})^\top & X & X\begin{pmatrix}I \\ K\end{pmatrix}^\top \\
    0 & \begin{pmatrix}I \\ K\end{pmatrix}X & tD
    \end{pmatrix} \succeq 0.
\end{align}
Conjugating by matrix
\begin{align*}
    \begin{pmatrix}
    I & 0 & 0 \\
    0 & 0 & I \\
    0 & I & 0
    \end{pmatrix}
\end{align*}
we obtain that \cref{SDP to SLS step 1} is equivalent to:
\begin{align}
\label{SDP to SLS step 2}
    \begin{pmatrix}
    X - I & 0 & (\widehat{A}~\widehat{B})\begin{pmatrix}I \\ K\end{pmatrix}X \\
    0 & tD &  \begin{pmatrix}I \\ K\end{pmatrix}X\\
    X\begin{pmatrix}I \\ K\end{pmatrix}^\top (\widehat{A}~\widehat{B})^\top & X\begin{pmatrix}I \\ K\end{pmatrix}^\top & X
    \end{pmatrix} \succeq 0.
\end{align}
We can rewrite \cref{SDP to SLS step 2} using Schur complements lemma to:
\begin{align}
    \label{SDP to SLS step 3}
    \begin{pmatrix}
    X - I & 0 \\
    0 & tD
    \end{pmatrix}
    - 
    \begin{pmatrix}
    (\widehat{A}~\widehat{B})\begin{pmatrix}I \\ K\end{pmatrix}X \\
    \begin{pmatrix}I \\ K\end{pmatrix}X\\
    \end{pmatrix}
    X^{-1}
    \begin{pmatrix}
     X\begin{pmatrix}I \\ K\end{pmatrix}^\top (\widehat{A}~\widehat{B})^\top & X\begin{pmatrix}I \\ K\end{pmatrix}^\top
    \end{pmatrix}
    \succeq 0,
\end{align}
which is, by multiplying the matrices, further equivalent to:
\begin{align}
    \label{SDP to SLS step 4}
    \begin{pmatrix}
    X - (\widehat{A}~\widehat{B})\begin{pmatrix}I \\ K\end{pmatrix}X\begin{pmatrix}I \\ K\end{pmatrix}^\top (\widehat{A}~\widehat{B})^\top - I & -(\widehat{A}~\widehat{B})\begin{pmatrix}I \\ K\end{pmatrix}X\begin{pmatrix}I \\ K\end{pmatrix}^\top \\
    -\begin{pmatrix}I \\ K\end{pmatrix}X\begin{pmatrix}I \\ K\end{pmatrix}^\top(\widehat{A}~\widehat{B})^\top & tD - \begin{pmatrix}I \\ K\end{pmatrix}X\begin{pmatrix}I \\ K\end{pmatrix}^\top
    \end{pmatrix}
    \succeq 0
\end{align}
We know by \Cref{lemma: Better covariance matrix for Epsilon region} that the optimal solution of SDP \eqref{SDP: stabilizing SDP final} is parametrized as 
\begin{align*}
    \Sigma = 
    \begin{pmatrix}
    \Sigma_{xx} & \Sigma_{xx}K^\top \\
    K \Sigma_{xx} & K \Sigma_{xx} K^\top
    \end{pmatrix}
    = 
    \begin{pmatrix}I \\ K\end{pmatrix}\Sigma_{xx}\begin{pmatrix}I \\ K\end{pmatrix}^\top
\end{align*}
Hence by denoting $U = \begin{pmatrix}I \\ K\end{pmatrix}X\begin{pmatrix}I \\ K\end{pmatrix}^\top$ we obtain that \cref{SDP to SLS step 4} can be rewritten as:
\begin{align}
    \label{SDP to SLS step 5}
    \begin{pmatrix}
    U_{xx} - (\widehat{A}~\widehat{B})U (\widehat{A}~\widehat{B})^\top - I & -(\widehat{A}~\widehat{B})U \\
    -U(\widehat{A}~\widehat{B})^\top & tD - U
    \end{pmatrix}
    \succeq 0,
\end{align}
which is further equivalent to:
\begin{align}
    \label{SDP to SLS step 6}
    \begin{pmatrix}
    U_{xx} - (\widehat{A}~\widehat{B})U (\widehat{A}~\widehat{B})^\top - I & (\widehat{A}~\widehat{B})U \\
    U(\widehat{A}~\widehat{B})^\top & tD - U
    \end{pmatrix}
    \succeq 0
\end{align}
To show that SDP \eqref{semi-definite constraint} is feasible if and only if SDP \eqref{SDP: stabilizing SDP final} is feasible is then equivalent to show that 
\begin{equation}
\begin{aligned}
\label{SLS part of equivalence}
    \exists U \succeq 0, s \in (0, 1) &\text{ s.t.:}\\
    &\begin{pmatrix}
    U_{xx} - (\widehat{A}~\widehat{B})U (\widehat{A}~\widehat{B})^\top - I & (\widehat{A}~\widehat{B})U \\
    U(\widehat{A}~\widehat{B})^\top & sD - U
    \end{pmatrix}
    \succeq 0
\end{aligned}
\end{equation}
is equivalent to:
\begin{equation}
\begin{aligned}
\label{SDP part of equivalence}
    \exists \Sigma \succeq 0, t \ge 0 &\text{ s.t.:}\\
    &\begin{pmatrix}
    \Sigma_{xx} - (\widehat{A}~\widehat{B})\Sigma (\widehat{A}~\widehat{B})^\top - (t + \sigma_w^2)I & (\widehat{A}~\widehat{B})\Sigma \\
    \Sigma(\widehat{A}~\widehat{B})^\top & tD - \Sigma
    \end{pmatrix}
    \succeq 0
\end{aligned}
\end{equation}
Assume that we have \cref{SLS part of equivalence}. Multiply semi-definite constraint in \cref{SLS part of equivalence} with $\frac{\sigma_w^2}{1-s}$ and denote $t = \frac{s\sigma_w^2}{1-s}, \Sigma = \frac{\sigma_w^2}{1-s}U$. With such a notation we have:
\begin{equation}
\begin{aligned}
    \begin{pmatrix}
    \Sigma_{xx} - (\widehat{A}~\widehat{B})\Sigma (\widehat{A}~\widehat{B})^\top - (t + \sigma_w^2)I & (\widehat{A}~\widehat{B})\Sigma \\
    \Sigma(\widehat{A}~\widehat{B})^\top & tD - \Sigma
    \end{pmatrix}
    \succeq 0.
\end{aligned}
\end{equation}
Since $\Sigma = \frac{\sigma_w^2}{1-s}U \succeq 0$ and $t = \frac{s\sigma_w^2}{1-s} \ge 0$ we see that condition given by \cref{SDP part of equivalence} is satisfied. To show the equivalence in other direction assume that we have \cref{SDP part of equivalence}. Multiplying semi-definite constraint in \cref{SDP part of equivalence} with $\frac{1}{t + \sigma_w^2}$ and denoting $s = \frac{t}{t+\sigma_w^2}, U = \frac{1}{t + \sigma_w^2}\Sigma$ we obtain:
\begin{equation}
\begin{aligned}
    \begin{pmatrix}
    U_{xx} - (\widehat{A}~\widehat{B})U (\widehat{A}~\widehat{B})^\top - I & (\widehat{A}~\widehat{B})U \\
    U(\widehat{A}~\widehat{B})^\top & sD - U
    \end{pmatrix}
    \succeq 0.
\end{aligned}
\end{equation}
Since $U = \frac{1}{t + \sigma_w^2}\Sigma \succeq 0$ and $s = \frac{t}{t + \sigma_w^2} < 1$ we obtain that \cref{SLS part of equivalence} is satisfied. Hence we obtained that as soon as one of the SDP \cref{semi-definite constraint} or SDP \cref{SDP: stabilizing SDP final} is feasible, the other is feasible as well.
\newpage 
\section{Initialization of Existing Algorithms}
\label{appendix: Initialization of Existing Algorithms}

 We have seen how we can find a controller which with high probability stabilizes the system $A_*, B_*$ in time which depends only on the system parameters. Here we will show how we can initialize the existing algorithms, such as \textsc{OSLO} \citep{Cohen2019SPDRelaxation} or \textsc{CEC} \citep{simchowitz2020naive}, which require a stabilizing controller as an input, with \textsc{eXploration}. Both algorithms, \textsc{OSLO} and \textsc{CEC}, consist of two parts. In the first part, which we call \emph{warm up phase}, they utilize the stabilizing controller to obtain tight estimates of system matrices $A_*, B_*$, which knowledge they utilize in the second part, where they choose actions optimistically (\textsc{OSLO}) or greedily (\textsc{CEC}). Together with \textsc{eXploration} as initialization we obtain two 3-phased algorithms which we call \textsc{X-OSLO} and \textsc{X-CEC}. 

The second phase of \textsc{X-OSLO} and \textsc{X-CEC} is given in \cref{algorithm: utilize the stabilizing controller}. Parameter $\sigma_{init}^2$ is different for both algorithms, also the number of steps we run \cref{algorithm: utilize the stabilizing controller} differs between \textsc{OSLO} and \textsc{CEC}.

\begin{algorithm}[H]
    \caption{Utilize the stabilizing controller}
    \label{algorithm: utilize the stabilizing controller}
    \begin{algorithmic}[1]
        \STATE {\textbf{Input:}}{ Controller $K$ with $\rho(A_*+B_*K)<1$}
        \FOR{$i=1, \ldots $}{
            {\STATE {\textbf{observe} state $x_i$}}
            {\STATE {\textbf{play} $u_i\sim \mathcal{N}(Kx_i, \sigma_{init}^2 I)$}}
            }
        \ENDFOR
    \end{algorithmic}
\end{algorithm}

\subsection{Initialization of \textsc{OSLO}}
\label{section: Initialization of OSLO}

In the second phase of \textsc{X-OSLO} we set $\sigma_{init}^2 = 2\sigma_w^2\kappa_0^2$, where $\kappa_0$ is the first of the so called \emph{strongly stable} (c.f. \citep{Cohen2018}) parameters of the controller $K$. 

\begin{definition}
\label{definition: Strong stability}
    A controller $K$ is $(\kappa, \gamma)$-strongly stable for $0 < \gamma \le 1$ if:
    \begin{enumerate}
        \item $\norm{K}_2 \le \kappa$
        \item $A_* + B_*K = HLH^{-1}$, with $\norm{L}_2 \le 1-\gamma$ and $\norm{H}_2\norm{H^{-1}}_2 \le \kappa$.
    \end{enumerate}
    Here we call $\kappa$ and $\gamma$ the first and the second strongly stable parameter respectively.
\end{definition}

In the rest of this section we will show we can obtain strongly stable parameter from the controller which we obtain from SDP \eqref{semi-definite constraint} or SDP \eqref{SDP: stabilizing SDP final}.

\paragraph{Strong stability parameters from robust SDP}
\label{subsection: Strong stability parameters from SDP}
In the following we denote $K = \Sigma_{ux}\Sigma_{xx}^{-1}$, where $\Sigma$ is the optimal solution of SDP \eqref{SDP: stabilizing SDP final}. 

\begin{lemma}
    Assume we synthesize a controller with SDP \eqref{SDP: stabilizing SDP final} and let $\Sigma$ be the optimal solution of SDP \eqref{SDP: stabilizing SDP final}. Denote by $\nu= \Tr(\Sigma)$ and $\kappa^2=\frac{\nu}{\sigma_w^2}$ then controller $K$ is $(\kappa, \frac{1}{2\kappa^2})$-strongly stable.
\end{lemma}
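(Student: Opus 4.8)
The plan is to extract the strong stability parameters directly from the structure of the feasible solution $\Sigma$ and the semi-definite constraint it satisfies. Recall from \Cref{lemma: Better covariance matrix for Epsilon region} that the optimal $\Sigma$ has the form $\Sigma = \begin{pmatrix} I \\ K \end{pmatrix} \Sigma_{xx} \begin{pmatrix} I \\ K \end{pmatrix}^\top$, and that it satisfies, for the true system $(A_*, B_*) \in \Theta$, the Lyapunov-type inequality $\Sigma_{xx} \succeq (A_*+B_*K)\Sigma_{xx}(A_*+B_*K)^\top + \sigma_w^2 I$. I would take $P = \Sigma_{xx}$ as the Lyapunov certificate and work from this strict contraction inequality to read off both parameters.

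First I would verify the bound on $\norm{K}_2$. Since $\Sigma_{xx} = \sigma_w^2 I + (A_*+B_*K)\Sigma_{xx}(A_*+B_*K)^\top + (\text{something}\succeq 0) \succeq \sigma_w^2 I$, the state block is bounded below by $\sigma_w^2 I$. On the other hand $\Tr(\Sigma) = \nu$ controls the size of the action block $K\Sigma_{xx}K^\top$, so I would bound $\norm{K}_2^2 \le \norm{K\Sigma_{xx}K^\top}_2 / \lambda_{\min}(\Sigma_{xx}) \le \Tr(\Sigma)/\sigma_w^2 = \kappa^2$, using $K\Sigma_{xx}K^\top \preceq \Sigma$ so its trace (hence top eigenvalue) is at most $\nu$, together with $\Sigma_{xx} \succeq \sigma_w^2 I$. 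This gives $\norm{K}_2 \le \kappa$, the first condition in \Cref{definition: Strong stability}.

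Next I would produce the similarity transform. Set $H = \Sigma_{xx}^{1/2}$ and $L = H^{-1}(A_*+B_*K)H = \Sigma_{xx}^{-1/2}(A_*+B_*K)\Sigma_{xx}^{1/2}$, so that $A_*+B_*K = HLH^{-1}$. The inequality $\Sigma_{xx} \succeq (A_*+B_*K)\Sigma_{xx}(A_*+B_*K)^\top + \sigma_w^2 I$ becomes, after conjugating by $\Sigma_{xx}^{-1/2}$, the statement $I \succeq LL^\top + \sigma_w^2 \Sigma_{xx}^{-1}$, whence $\norm{L}_2^2 = \norm{LL^\top}_2 \le 1 - \sigma_w^2 \lambda_{\min}(\Sigma_{xx}^{-1}) = 1 - \sigma_w^2/\lambda_{\max}(\Sigma_{xx})$. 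Since $\lambda_{\max}(\Sigma_{xx}) \le \Tr(\Sigma) = \nu = \kappa^2\sigma_w^2$, I get $\norm{L}_2^2 \le 1 - 1/\kappa^2$, and using $\sqrt{1-1/\kappa^2} \le 1 - \frac{1}{2\kappa^2}$ this yields $\norm{L}_2 \le 1 - \gamma$ with $\gamma = \frac{1}{2\kappa^2}$. Finally, the conditioning bound follows from $\norm{H}_2\norm{H^{-1}}_2 = \sqrt{\lambda_{\max}(\Sigma_{xx})/\lambda_{\min}(\Sigma_{xx})} \le \sqrt{\nu/\sigma_w^2} = \kappa$.

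The main obstacle, and the step deserving the most care, is pinning down the eigenvalue comparisons that tie everything to the single scalar $\kappa^2 = \nu/\sigma_w^2$: I must be careful that $\lambda_{\max}(\Sigma_{xx}) \le \nu$ (which follows since $\Sigma_{xx}$ is a principal submatrix of the PSD matrix $\Sigma$, so its eigenvalues are bounded by $\Tr(\Sigma)$) and that $\lambda_{\min}(\Sigma_{xx}) \ge \sigma_w^2$, and then to verify the elementary inequality $\sqrt{1-x} \le 1-x/2$ used to convert the squared spectral bound into the advertised $\gamma = \frac{1}{2\kappa^2}$. Everything else is routine linear-algebra manipulation of the feasibility certificate.
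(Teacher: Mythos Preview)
Your proposal is correct and follows essentially the same argument as the paper's proof: set $H=\Sigma_{xx}^{1/2}$, conjugate the Lyapunov inequality $\Sigma_{xx}\succeq (A_*+B_*K)\Sigma_{xx}(A_*+B_*K)^\top+\sigma_w^2 I$ by $\Sigma_{xx}^{-1/2}$ to bound $\norm{L}_2$, use $\sigma_w^2 I\preceq \Sigma_{xx}\preceq \nu I$ for the conditioning of $H$, and read off $\norm{K}_2\le\kappa$ from the action block $K\Sigma_{xx}K^\top=\Sigma_{uu}$. The only cosmetic difference is that the paper bounds $\norm{K}_2$ via the Frobenius norm, $\sigma_w^2\norm{K}_F^2\le\Tr(K\Sigma_{xx}K^\top)\le\nu$, while you go through the spectral norm directly; also note your shorthand ``$K\Sigma_{xx}K^\top\preceq\Sigma$'' is dimension-inconsistent, but the intended (and correct) statement is that $K\Sigma_{xx}K^\top=\Sigma_{uu}$ is a principal block of $\Sigma$, so its trace is at most $\nu$.
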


\begin{proof}
Since for every $(A, B) \in \Theta$ (also for $(A_*, B_*)$) we have $\Sigma_{xx}\succeq (A+BK)\Sigma_{xx}(A+BK)^\top + \sigma_w^2I$ we have $\sigma_w^2I \preceq \Sigma_{xx}$. Since we know $\Sigma$, we can compute its trace $\nu = \Tr(\Sigma_{xx}) + \Tr(\Sigma_{uu})$. With such a notation we have:
$\sigma_w^2I \preceq \Sigma_{xx} \preceq \nu I$. Denote by $L = \Sigma_{xx}^{-1/2}(A_*+B_*K)\Sigma_{xx}^{1/2}$. Multiplying equation
\begin{align*}
    \Sigma_{xx} \succeq (A_*+B_*K)\Sigma_{xx}(A_*+B_*K)^\top + \sigma_w^2 I
\end{align*}
from left and right with $\Sigma_{xx}^{-1/2}$ we obtain:
\begin{align*}
    I \succeq LL^\top +\sigma_w^2\Sigma_{xx}^{-1} \succeq LL^\top + \frac{\sigma_w^2}{\nu}I.
\end{align*}
From there it follows:
\begin{align*}
    LL^\top \preceq \left(1 - \frac{\sigma_w^2}{\nu}\right)I,
\end{align*}
which yields: $\norm{L}_2 \le \sqrt{1-1/\kappa^2}\le 1 - \frac{1}{2\kappa^2}$.
In the notation of \Cref{definition: Strong stability} we have $H = \Sigma_{xx}^{1/2}$. Since $\sigma_w^2I \preceq \Sigma_{xx} \preceq \nu I$ we have: $\norm{\Sigma_{xx}^{1/2}}_2\norm{\Sigma_{xx}^{-1/2}}_2 \le \sqrt{\nu}\frac{1}{\sigma_w} = \kappa$. To finish the proof observe:
\begin{align*}
    \sigma_w^2 \norm{K}_F^2 \le \Tr(K\Sigma_{xx}K^\top) = \Tr(\Sigma_{uu}) \le \nu,
\end{align*}
from where we conclude: $\norm{K}_2 \le \norm{K}_F \le \kappa$.
\end{proof}

From the discussion in \Cref{appendix: Proof of Theorem 1} we see that we can obtain strong stability parameters also from the solution of SDP \eqref{semi-definite constraint}. If we define 
\begin{align*}
    \Sigma' = \frac{\sigma_w^2}{1-t}\begin{pmatrix}I \\K\end{pmatrix}P\begin{pmatrix}I \\K\end{pmatrix}^\top, \quad t' = \frac{t\sigma_w^2}{1-t}   
\end{align*}
then from the reformulation of SDP \eqref{semi-definite constraint} given in \Cref{appendix: Proof of Theorem 1} follows that $\Sigma', t'$ are feasible solution of SDP \eqref{SDP: stabilizing SDP final} and hence the following lemma holds:

\begin{lemma}
    Let $P, K, t$ be the parameters of the optimal soluton of SDP \eqref{semi-definite constraint}. Then for $\kappa^2 = \frac{1}{1-t}\Tr(P(I+K^\top K))$ controller $K$ is $(\kappa, \frac{1}{2\kappa^2})$ strongly stable.
\end{lemma}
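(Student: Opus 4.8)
The plan is to reduce the claim directly to the strong stability lemma already established for the Robust LQR synthesis \eqref{SDP: stabilizing SDP final}, exploiting the conversion between the two SDPs from \Cref{appendix: Proof of Theorem 1}. Concretely, the discussion preceding the statement exhibits the pair
\begin{align*}
\Sigma' = \frac{\sigma_w^2}{1-t}\begin{pmatrix}I \\ K\end{pmatrix}P\begin{pmatrix}I \\ K\end{pmatrix}^\top, \qquad t' = \frac{t\sigma_w^2}{1-t},
\end{align*}
and asserts that it is feasible for \eqref{SDP: stabilizing SDP final}. First I would record that $\Sigma'$ already carries the block structure $\Sigma' = \begin{pmatrix}I\\K\end{pmatrix}\Sigma'_{xx}\begin{pmatrix}I\\K\end{pmatrix}^\top$ with $\Sigma'_{xx} = \frac{\sigma_w^2}{1-t}P$ that \Cref{lemma: Better covariance matrix for Epsilon region} produces for optimizers of \eqref{SDP: stabilizing SDP final}; in particular the controller extracted from $\Sigma'$ equals $\Sigma'_{ux}(\Sigma'_{xx})^{-1} = K$, which is exactly the controller $K = SP^{-1}$ synthesized from \eqref{semi-definite constraint}.

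Next I would invoke the preceding strong stability lemma, but applied to $\Sigma'$ rather than to the optimal solution of \eqref{SDP: stabilizing SDP final}. The point is that its proof never uses optimality: it only uses that the covariance is feasible --- so that $\Sigma'_{xx} \succeq (A_*+B_*K)\Sigma'_{xx}(A_*+B_*K)^\top + \sigma_w^2 I$, valid since $(A_*,B_*)\in\Theta$ --- together with the block form above, from which $\sigma_w^2 I \preceq \Sigma'_{xx} \preceq \Tr(\Sigma') I$ and $\Sigma'_{uu} = K\Sigma'_{xx}K^\top$. Both ingredients hold for $\Sigma'$ by construction, so the Lyapunov/Schur argument goes through verbatim with $\nu$ replaced by $\nu' = \Tr(\Sigma')$, yielding $\norm{L}_2 \le \sqrt{1-1/\kappa^2}\le 1-\tfrac{1}{2\kappa^2}$, $\norm{H}_2\norm{H^{-1}}_2 \le \kappa$ and $\norm{K}_2 \le \kappa$ with $H = (\Sigma'_{xx})^{1/2}$ and $\kappa^2 = \nu'/\sigma_w^2$. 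Hence $K$ is $(\kappa, \frac{1}{2\kappa^2})$-strongly stable in the sense of \Cref{definition: Strong stability}.

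It then only remains to evaluate $\nu'$, which is the single short computation. By cyclicity of the trace,
\begin{align*}
\nu' = \Tr(\Sigma') = \frac{\sigma_w^2}{1-t}\bigl(\Tr(P) + \Tr(KPK^\top)\bigr) = \frac{\sigma_w^2}{1-t}\Tr\bigl(P(I+K^\top K)\bigr),
\end{align*}
so that $\kappa^2 = \nu'/\sigma_w^2 = \frac{1}{1-t}\Tr(P(I+K^\top K))$, matching the stated value. The main obstacle is not any hard estimate but the bookkeeping in the reduction: I must be careful to confirm that the earlier strong stability lemma depends only on feasibility and the block structure of the covariance, and not on its being the minimizer, so that it legitimately applies to the feasible point $\Sigma'$ rather than to an optimizer of \eqref{SDP: stabilizing SDP final}.
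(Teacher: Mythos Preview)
Your proposal is correct and is exactly the argument the paper intends: the paper does not give a separate proof but simply observes that the pair $(\Sigma',t')$ constructed from $(P,K,t)$ is feasible for \eqref{SDP: stabilizing SDP final}, so that the preceding strong stability lemma applies with $\nu = \Tr(\Sigma') = \frac{\sigma_w^2}{1-t}\Tr(P(I+K^\top K))$. Your only addition is the explicit (and correct) remark that the proof of that lemma uses just feasibility and the block form $\Sigma = \begin{pmatrix}I\\K\end{pmatrix}\Sigma_{xx}\begin{pmatrix}I\\K\end{pmatrix}^\top$, not optimality, which is precisely what is needed to apply it to $\Sigma'$.
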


Since the cost suffered during the run of \textsc{eXploration} is constant in $T$, albeit could be exponentially large in systems parameters (e.g. $\norm{A_*}$), we obtain the following theorem.

\begin{theorem}
Suppose the system matrices $A_*, B_*$ are stabilizable and regular, cost matrices $Q, R$ are positive definite and time horizon is $T$. Then by first running \textsc{eXploration}, where we synthesize the controller with Robust SLS \eqref{subsection: Robust formulation from System Level Synthesis} or Robust LQR \eqref{subsection: Robust formulation from Semi-Definite Program}, using data dependent upper bounds from the Bayesian setting, and then \textsc{OSLO} algorithm, the total regret we suffer is upper bounded with probability at least $1-\delta$ as:
\begin{align*}
    R(T, \textsc{X-OSLO}) = \mathcal{O}\left(\sqrt{T}\log^2T\right).
\end{align*}
\end{theorem}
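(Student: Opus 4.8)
The plan is to decompose the regret of \textsc{X-OSLO} into the cost accrued during the initial \textsc{eXploration} phase and the regret accrued during the subsequent \textsc{OSLO} phase, and to show that the former is a constant in $T$ while the latter carries the $\mathcal{O}(\sqrt{T}\log^2 T)$ scaling. Write $\tau$ for the (random) termination time of \textsc{Vanilla eXploration}. First I would invoke \Cref{theorem: Algorithm termination}: with probability $1-\delta$ the exploration phase returns a controller that stabilizes $(A_*, B_*)$ after $\tau = \widetilde{\mathcal{O}}(1)$ steps, where the hidden constant depends only on the fixed system parameters and not on $T$. Since each per-step cost $c_i$ is finite and $\tau$ is finite and $T$-independent, the total cost suffered up to time $\tau$ — including the terminal term $x_\tau^\top P x_\tau$ — is bounded by a constant $C_{\mathrm{exp}}$ that may be large in $\norm{A_*}$ but is constant in $T$. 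Relative to the optimal average cost $J_*$, this phase therefore contributes at most $C_{\mathrm{exp}} + \tau J_* = \mathcal{O}(1)$ to the regret.

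Next I would verify that the controller $K_0$ handed to \textsc{OSLO} satisfies the input requirement of \citet{Cohen2019SPDRelaxation}, namely that it is $(\kappa, \gamma)$-strongly stable in the sense of \Cref{definition: Strong stability}. This is exactly what the two strong-stability lemmas preceding the theorem establish: whether $K_0$ is extracted from the Robust LQR solution of \Cref{SDP: stabilizing SDP final} or, via the equivalence in \Cref{theorem: feasibility equivalence} and the reformulation of \Cref{appendix: Proof of Theorem 1}, from the Robust SLS solution of \Cref{semi-definite constraint}, one obtains explicit $(\kappa, \tfrac{1}{2\kappa^2})$ parameters with $\kappa$ a finite constant. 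Feeding $K_0$ and these parameters into \textsc{OSLO} with the prescribed $\sigma_{\mathrm{init}}^2 = 2\sigma_w^2\kappa_0^2$, the analysis of \citet{Cohen2019SPDRelaxation} yields, under its own high-probability guarantee, a regret of $\mathcal{O}(\sqrt{T}\log^2 T)$ for the remaining $T-\tau$ steps started from state $x_\tau$.

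Finally I would combine the two bounds by a union bound over the failure event of \textsc{eXploration} (probability $\delta$) and the failure event internal to \textsc{OSLO}, re-parametrizing the two confidence levels so that the total failure probability is at most $\delta$. On the complementary event the total regret is $R(T,\textsc{X-OSLO}) \le \mathcal{O}(1) + \mathcal{O}(\sqrt{T}\log^2 T) = \mathcal{O}(\sqrt{T}\log^2 T)$, since the constant exploration cost is dominated by the $\sqrt{T}$ term. The main obstacle I anticipate is not the asymptotics but the bookkeeping at the seam between the two phases: one must argue that handing \textsc{OSLO} the non-zero initial state $x_\tau$ inherited from exploration (rather than $x_0 = 0$) perturbs its regret by only a constant — which follows because strong stability makes the closed-loop system forget the initial condition geometrically fast — and that the data-dependent credibility region underlying the stopping rule is valid precisely at the single random time $\tau$, as emphasized in \Cref{section:ALgorithm}, so that the $1-\delta$ guarantee on $K_0$ being stabilizing is genuinely available to seed \textsc{OSLO}.
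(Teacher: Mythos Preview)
Your proposal is correct and follows essentially the same approach as the paper: the paper's own argument is the one-line observation preceding the theorem that the cost suffered during \textsc{eXploration} is constant in $T$ (though possibly exponential in system parameters), after which the strong-stability lemmas supply the $(\kappa,\tfrac{1}{2\kappa^2})$ parameters needed to invoke the \textsc{OSLO} guarantee of \citet{Cohen2019SPDRelaxation}. Your write-up is in fact more careful than the paper's, spelling out the union bound and the seam issue with the inherited state $x_\tau$, neither of which the paper addresses explicitly.
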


\subsection{Initialization of \textsc{CEC}}
\label{section: Initialization of CEC}

Initialization of \textsc{CEC} requires only the stabilizing controller $K$. Hence we can directly state the theorem.

\begin{theorem}
Suppose the system matrices $A_*, B_*$ are stabilizable and regular, cost matrices $Q, R$ are positive definite, time horizon is $T$ and probability of failure is $\delta \in (0, \frac{1}{T})$. Then by first running \textsc{eXploration}, where we synthesize the controller with Robust SLS \eqref{subsection: Robust formulation from System Level Synthesis} or Robust LQR \eqref{subsection: Robust formulation from Semi-Definite Program}, using data dependent upper bounds from the Bayesian setting, and then \textsc{CEC} algorithm, the total regret we suffer is upper bounded with probability at least $1-\delta$ as:
\begin{align*}
    R(T, \textsc{X-CEC}) = \mathcal{O}\left(\sqrt{T\log T}\right).
\end{align*}
\end{theorem}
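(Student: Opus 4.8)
The plan is to treat \textsc{X-CEC} as a two-phase procedure, bound the regret contributed by each phase separately, and combine the two via a union bound over their failure events. First I would split the horizon into the \textsc{eXploration} phase, comprising steps $1$ through the random termination time $T_0$, and the subsequent \textsc{CEC} phase, which runs on the remaining $T - T_0$ steps using the controller $K_0$ returned by \textsc{eXploration} as its required stabilizing input. I would allocate the total failure budget between the two phases (say $\delta/2$ each) so that, by a union bound, both phases succeed simultaneously with probability at least $1-\delta$.

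For the exploration phase I would invoke \Cref{theorem: Algorithm termination}: with high probability \textsc{eXploration} terminates and returns a controller $K_0$ with $\rho(A_*+B_*K_0)<1$, and its termination time $T_0$ is $\widetilde{\mathcal{O}}(1)$ in the horizon, depending only on the fixed system parameters and on $\operatorname{polylog}(\delta)$. The key observation, already highlighted in the text, is that the cumulative cost incurred during this phase — although possibly exponentially large in quantities such as $\norm{A_*}_2$ — is a constant with respect to $T$. Subtracting the corresponding $T_0 J_*$ baseline, the exploration phase therefore contributes only an additive $\mathcal{O}(1)$ term to the regret.

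For the second phase I would apply the regret guarantee for \textsc{CEC} of \citet{simchowitz2020naive}, which needs precisely a stabilizing controller as input, supplied here by $K_0$. The only subtlety is the handoff state $x_{T_0}$: because $K_0$ stabilizes the true closed loop, the transient triggered by this initial condition decays geometrically and contributes at most an additional $T$-independent constant to the accumulated cost. The hypothesis $\delta \in (0, 1/T)$ is exactly the confidence regime under which the invoked guarantee delivers regret $\mathcal{O}(\sqrt{T\log T})$, the logarithmic factor entering through the self-normalized concentration bounds evaluated at that level.

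Assembling the pieces, on the intersection of the two success events — which holds with probability at least $1-\delta$ — the total regret is $\mathcal{O}(1) + \mathcal{O}(\sqrt{T\log T}) = \mathcal{O}(\sqrt{T\log T})$. The main obstacle I anticipate is the stitching step: carefully verifying that the exploration cost is genuinely $T$-independent, so that it is swallowed by the leading $\sqrt{T\log T}$ term, and that the \textsc{CEC} regret bound — stated for a fresh initialization — remains valid when warm-started from the possibly large state $x_{T_0}$ and the data-dependent controller $K_0$ rather than from an arbitrary fixed stabilizing controller.
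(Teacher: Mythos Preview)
Your proposal is correct and matches the paper's approach exactly: the paper's entire proof is the single sentence ``The proof follows directly from the Theorem 2 of \citet{simchowitz2020naive} and \Cref{theorem: Algorithm termination},'' which is precisely the two-phase decomposition you outline. If anything, you are more explicit than the paper about the stitching issues (the handoff state $x_{T_0}$ and the $T$-independence of the exploration cost), which the authors leave implicit.
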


The proof follows directly from the Theorem 2 of \citet{simchowitz2020naive} and \Cref{theorem: Algorithm termination}.
\newpage 
\section{Proof of \texorpdfstring{\Cref{theorem: Algorithm termination}}{Theorem 2}}
\label{appendix: Proof of Theorem 2}

With the notation $\normno{A_* - \widehat{A}}_2 \le \varepsilon_A, \normno{B_* - \widehat{B}}_2 \le \varepsilon_B$ \citet{Dean2018OnSampleComplexity} proved the following lemma: 

\begin{lemma}[Fulfilled Sufficient condition, Lemma 4.2 in \cite{Dean2018OnSampleComplexity}]
    \label{lemma:SufficientConditionToHaveFeasibleSolution}
    Let $K$ be a controller which stabilizes $(A_*, B_*)$.
    Assume that $\varepsilon_A, \varepsilon_B$ are small enough that for $\zeta$ defined as $\zeta = \left(\epsilon_A + \epsilon_B\norm{K}_2\right)\norm{(zI - A_*-B_*K)^{-1}}_{\mathcal{H}_\infty}$ we have $\zeta \le (1 + \sqrt{2})^{-1}$. Then $K$ satisfies the constraint given by \cref{equation:HInfinityConditionToStabilize}.
\end{lemma}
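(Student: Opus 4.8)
The plan is to verify \eqref{equation:HInfinityConditionToStabilize} by a resolvent-perturbation argument that transfers the hypothesis, which is phrased through the \emph{true} closed-loop resolvent $M(z)^{-1}$ with $M(z)=zI-A_*-B_*K$, to the \emph{nominal} resolvent $\widehat M(z)^{-1}$ with $\widehat M(z)=zI-\widehat A-\widehat B K$ that actually appears in the constraint. First I would record two elementary bounds. For any $(A,B)$ with $\norm{A-\widehat A}_2\le\varepsilon_A$ and $\norm{B-\widehat B}_2\le\varepsilon_B$ (in particular for the true system), writing $\Delta^\top=(A~B)-(\widehat A~\widehat B)$ gives the factorization $\Delta^\top\begin{pmatrix}I\\K\end{pmatrix}=(A-\widehat A)+(B-\widehat B)K$, so by the triangle inequality and submultiplicativity $\norm{\Delta^\top\begin{pmatrix}I\\K\end{pmatrix}}_2\le\varepsilon_A+\varepsilon_B\norm{K}_2$. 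Second, setting $E_*=(A_*-\widehat A)+(B_*-\widehat B)K$ (a constant matrix, $\norm{E_*}_2\le\varepsilon_A+\varepsilon_B\norm{K}_2$), the identity $\widehat M=M+E_*=M(I+M^{-1}E_*)$ yields $\widehat M^{-1}=(I+M^{-1}E_*)^{-1}M^{-1}$.

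The load-bearing step is the Neumann-series estimate. Since $\norm{M^{-1}E_*}_{\mathcal{H}_\infty}\le\norm{E_*}_2\norm{M^{-1}}_{\mathcal{H}_\infty}\le\zeta$ and $\zeta<1$ by hypothesis, the operator $I+M^{-1}E_*$ is invertible pointwise on $\partial\mathbb{D}$ with $\norm{(I+M^{-1}E_*)^{-1}}_{\mathcal{H}_\infty}\le(1-\zeta)^{-1}$. Because $K$ stabilizes $(A_*,B_*)$, i.e.\ $\rho(A_*+B_*K)<1$, the map $M^{-1}$ is stable; hence $\widehat M^{-1}$ is stable as well (incidentally showing $K$ also stabilizes $(\widehat A,\widehat B)$) and $\norm{\widehat M^{-1}}_{\mathcal{H}_\infty}\le\norm{M^{-1}}_{\mathcal{H}_\infty}/(1-\zeta)$. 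Combining this with the first bound and submultiplicativity of the $\mathcal{H}_\infty$-norm,
\[
\norm{\Delta^\top\begin{pmatrix}I\\K\end{pmatrix}\widehat M^{-1}}_{\mathcal{H}_\infty}\le(\varepsilon_A+\varepsilon_B\norm{K}_2)\,\frac{\norm{M^{-1}}_{\mathcal{H}_\infty}}{1-\zeta}=\frac{\zeta}{1-\zeta}.
\]
It then remains to check $\tfrac{\zeta}{1-\zeta}<1$, i.e.\ $\zeta<\tfrac12$, which holds comfortably since $(1+\sqrt2)^{-1}=\sqrt2-1<\tfrac12$; this already proves the lemma.

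The only genuinely delicate point is the bookkeeping that pins down the stated threshold $(1+\sqrt2)^{-1}$ rather than the cruder $\tfrac12$ produced above. For a static controller the two uncertainty blocks collapse via $\widehat\Phi_u=K\widehat\Phi_x$ into the single factor $\varepsilon_A+\varepsilon_B\norm{K}_2$, which is exactly the sum inside $\zeta$, so no extra constant appears. The sharper constant is inherited from the general system-response formulation of \citet{Dean2018OnSampleComplexity}: there $\widehat\Phi_x,\widehat\Phi_u$ are not tied by $\widehat\Phi_u=K\widehat\Phi_x$ but only by the affine constraint $(zI-\widehat A)\widehat\Phi_x-\widehat B\widehat\Phi_u=I$, and one bounds $\norm{(A-\widehat A)\widehat\Phi_x+(B-\widehat B)\widehat\Phi_u}_{\mathcal{H}_\infty}\le\varepsilon_A\norm{\widehat\Phi_x}_{\mathcal{H}_\infty}+\varepsilon_B\norm{\widehat\Phi_u}_{\mathcal{H}_\infty}$ through the stacked quantity using $a+b\le\sqrt2\sqrt{a^2+b^2}$, giving $\tfrac{\sqrt2\,\zeta}{1-\zeta}$.

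Concluding, I would invoke the algebraic identity $1-(1+\sqrt2)^{-1}=\sqrt2\,(1+\sqrt2)^{-1}$, which shows that $\tfrac{\sqrt2\,\zeta}{1-\zeta}=1$ precisely at $\zeta=(1+\sqrt2)^{-1}$; hence $\zeta\le(1+\sqrt2)^{-1}$ forces $\tfrac{\sqrt2\,\zeta}{1-\zeta}\le1$ and the strict inequality in \eqref{equation:HInfinityConditionToStabilize} follows whenever the inequality on $\zeta$ is strict. Either route closes the argument; the main obstacle throughout is simply ensuring $\zeta<1$, which is what legitimizes the Neumann inversion of $I+M^{-1}E_*$ and the stability of $\widehat M^{-1}$, everything else reducing to the triangle inequality and submultiplicativity of the $\mathcal{H}_\infty$-norm.
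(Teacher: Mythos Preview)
Your argument is correct. The paper itself does not supply a proof of this lemma; it merely quotes it as Lemma~4.2 of \citet{Dean2018OnSampleComplexity} and uses the conclusion. Your resolvent-perturbation computation is exactly the mechanism behind that cited result: write $\widehat M = M + E_*$ with $E_*=(A_*-\widehat A)+(B_*-\widehat B)K$, invoke the Neumann bound $\norm{(I+M^{-1}E_*)^{-1}}_{\mathcal H_\infty}\le (1-\zeta)^{-1}$, and multiply through. As you correctly note, for a \emph{static} $K$ the two response maps collapse via $\widehat\Phi_u=K\widehat\Phi_x$, so your direct route yields the sharper sufficient condition $\zeta<\tfrac12$; the constant $(1+\sqrt{2})^{-1}$ in the cited lemma comes from the general SLS parametrization, where $\widehat\Phi_x,\widehat\Phi_u$ are decoupled and a Cauchy--Schwarz step introduces the extra $\sqrt{2}$. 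Either bound suffices for the statement as written, and your discussion of where the $\sqrt{2}$ enters is accurate. One cosmetic point: at the exact boundary $\zeta=(1+\sqrt{2})^{-1}$ your first bound already gives $\zeta/(1-\zeta)=1/\sqrt{2}<1$, so the strict inequality in \eqref{equation:HInfinityConditionToStabilize} holds without needing strictness on $\zeta$.
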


From \cref{lemma:SufficientConditionToHaveFeasibleSolution} follows that if $(\varepsilon_A+\varepsilon_B\norm{K}_2)\norm{\mathfrak{R}_{A_*+B_*K}}_{\mathcal{H_\infty}} \le (1+\sqrt{2})^{-1}$ then the feasibility problem given by \cref{semi-definite constraint} has a solution. Denote by $\varepsilon = \varepsilon_A \lor \varepsilon_B$ and observe that for the systems $(A, B)\in \Theta$ we have $\normno{A - \widehat{A}}_2 \lor \normno{B - \widehat{B}}_2 \le \varepsilon = \frac{1}{\sqrt{\lambda_{\min}(D)}}$. Then a sufficient condition for feasibility of SDP given by \cref{convex sdp constraint} is:
\begin{align*}
    \frac{1}{\sqrt{\lambda_{\min}(D)}} \le \frac{1}{(1+\norm{K}_2)\norm{(zI - A_*-B_*K)^{-1}}_{\mathcal{H}_\infty}(1+\sqrt{2})},
\end{align*}
which is equivalent to:
\begin{align}
\label{sufficient condition from Dean}
    \left((1+\sqrt{2})(1+\norm{K}_2)\norm{(zI - A_*-B_*K)^{-1}}_{\mathcal{H}_\infty}\right)^2 \le \lambda_{\min}(D).
\end{align}

Next we will show that with probability at least $1 - \delta$ the smallest eigenvalue of Gramian matrix $D$ grows linearly with time. Due to the ease of the exposition we will neglect logarithmic terms. Since $D = \frac{1}{\sigma_w^2c_\delta}\left(\sum_{j=1}^iz_{j-1}z_{j-1}^\top + \lambda I\right)$ it is enough to show that the smallest eigenvalues of $V_i = \sum_{j=1}^iz_{j-1}z_{j-1}^\top$ grows linearly. From the analysis of \citet{Sarkar2018Identification} (e.g. Equation 116) follows that for any fixed regular system $A_*, B_*$ for the associated matrix $V_i$ exist a constant $C_{A_*, B_*}$, that we have: $C_{A_*, B_*} i \le \lambda_{min}(V_i)$ w.p. $1-\delta$, where the probability is taken over the randomness of noise $(w_{j})_{j\ge 1}$. In our setting also the system $A_*, B_*$ is random. We choose $A_*, B_*$ such that every entry is independently chosen from a normal distribution. We would like to say that with respect to the randomness of $A_*, B_*$ and $(w_{j})_{j\ge 1}$ with probability $1 -  \delta$ we have: $C i \le \lambda_{min}(V_i)$. Here $C$ would be a constant which depends on $\delta$.

In order to be able to say such statement we construct mapping $Y: (A_*, B_*) \mapsto C_{A_*, B_*}$. Since irregular systems have Lebesgue measure 0, we have $C_{A_*, B_*} > 0$ a.s. Hence we have:
\begin{align}
\label{irregular systems have measure zero}
    \lim_{C \to 0} \pr(Y(A_*, B_*) < C) = 0.
\end{align}
Therefore from \Cref{irregular systems have measure zero} follows that there exists constant $C$ (which depends on $\delta$) s.t.:
\begin{align}
\label{bad event for constant C}
    \pr(Y(A_*, B_*) < C) < \delta.
\end{align}
Now we compute the probability that $C i \le \lambda_{min}(V_i)$ also with respect to randomness of $A_*, B_*$:
\begin{align*}
    \pr(\lambda_{min}(V_i) < Ci) &= \pr(\lambda_{min}(V_i) < Ci, Y < C) + \pr(\lambda_{min}(V_i) < Ci, Y \ge C) \\
    &\le \pr (Y < C) + \pr(\lambda_{min}(V_i) < Ci, Y \ge C) \le \delta + \delta = 2\delta.
\end{align*}
To see that $\pr(\lambda_{min}(V_i) < Ci, X \ge C) \le \delta$ we compute:
\begin{align*}
    \pr(\lambda_{min}(V_i) < Ci, Y \ge C) &= \int \pr(\lambda_{min}(V_i) < Ci, Y \ge C|A_*, B_*)d\mu(A_*, B_*) \\
    & \le \int \delta d\mu(A_*, B_*) = \delta.
\end{align*}
Rescaling $\delta$  we arrive at the result: with probability $1 -  \delta$ we have: $C i \le \lambda_{min}(V_i)$. Applying the result to condition given by \Cref{sufficient condition from Dean} we obtain that \Cref{algorithm:FirstPart} will terminate with probability $1-\delta$ in time:
\begin{align*}
    \mathcal{O}\left((1 + \norm{K}_2)^2\norm{(zI - A_*-B_*K)^{-1}}_{\mathcal{H}_\infty}^2\right)
\end{align*}
\newpage
\section{Convex SDP formulation for MinMax Problem}
\label{appendix: Convex SDP formulation for MinMax}

First note that min max problem can be formulated as:
\begin{equation}
\label{SDP: minimize the spectral norm raw form}
    \begin{aligned}
        \min_{t \ge 0,K} t \\
    &\text{s.t. } \forall (A~B) \in \Theta: \quad
        \norm{A+BK}_2 \le t.
    \end{aligned}
\end{equation}

Next we will transform problem \cref{SDP: minimize the spectral norm raw form} to convex SDP using S lemma of \citet{LMIs2004}. To reformulate the problem in such a way observe first that the constraint $\norm{A+BK}_2 \le t$ can be rewritten using Schur complement lemma as:
\begin{align*}
    &\norm{A+BK}_2 \le t \\
    \iff
    &(A+BK)^\top(A+BK) \preceq t^2I \\
    \iff
    &
    \begin{pmatrix}
    tI & (A+BK)^\top \\
    A+BK & tI
    \end{pmatrix} \succeq 0
\end{align*}
Using the notation from the definition of high probability region given we reformulate the minimization problem given by \cref{SDP: minimize the spectral norm raw form} to:
\begin{equation}
\label{SDP: minimize the spectral norm X form}
    \begin{aligned}
        \min_{t \ge 0,K} t \\
    &\text{s.t. } \forall \Delta \text{ with } \Delta^\top D\Delta \preceq I: \\
    &\quad
        \begin{pmatrix}
        tI & (\widehat A + \widehat B K)^\top - \begin{pmatrix}I \\K \end{pmatrix}^\top \Delta \\
        \left((\widehat A + \widehat B K)^\top - \begin{pmatrix}I \\K \end{pmatrix}^\top \Delta\right)^\top & tI
        \end{pmatrix}
        \succeq 0
    \end{aligned}
\end{equation}
Applying S lemma we obtain that the \cref{SDP: minimize the spectral norm X form} is equivalent to:
\begin{equation}
\label{SDP: minimize the spectral norm inside hp region}
    \begin{aligned}
        \min_{t \ge 0,\lambda \ge 0,  K} t \\
    &\text{s.t. } 
        \begin{pmatrix}
            tI & (\widehat A + \widehat B K)^\top &             \begin{pmatrix}
            I \\ K
            \end{pmatrix}^\top \\
            \widehat A + \widehat B K & (t-\lambda) I & 0 \\
            \begin{pmatrix}
            I \\ K
            \end{pmatrix}
            & 0 & \lambda D
        \end{pmatrix}
        \succeq 0
    \end{aligned}
\end{equation}
which is a convex SDP. At the same time we can also use SDP \eqref{SDP: minimize the spectral norm inside hp region} to bound for a given controller $K'$ the norm of associated closed loop matrix:
\begin{equation}
\label{SDP: bound the controller performance}
    \begin{aligned}
        \min_{t \ge 0,\lambda \ge 0} t \\
    &\text{s.t. } 
        \begin{pmatrix}
            tI & (\widehat A + \widehat B K')^\top &             \begin{pmatrix}
            I \\ K'
            \end{pmatrix}^\top \\
            \widehat A + \widehat B K' & (t-\lambda) I & 0 \\
            \begin{pmatrix}
            I \\ K'
            \end{pmatrix}
            & 0 & \lambda D
        \end{pmatrix}
        \succeq 0
    \end{aligned}
\end{equation}
For the optimal $t$ which we obtain from the solution of SDP \eqref{SDP: bound the controller performance} we have that with probability at least $1-\delta$:
\begin{align*}
    \norm{A_* + B_*K'}_2 \le t.
\end{align*}
\newpage
\section{Additional Experiments}
\label{section: Additional Experiments}

First we provide sample run on the system given by \Cref{equation:SystemFromDean2017}. As we can see on \Cref{fig: ellipsoid ball comparison} if we use ellipsoid bounds we find the stabilizing controller sooner and we suffer considerably less cost. Further we see that with different probing we reduce the initial blow up. How to choose actions optimally during the \textsc{eXploration} is left for future work.

\begin{figure}[ht]
    \begin{subfigure}[$K_i = 0$]{
        \includegraphics[width=0.5\textwidth]{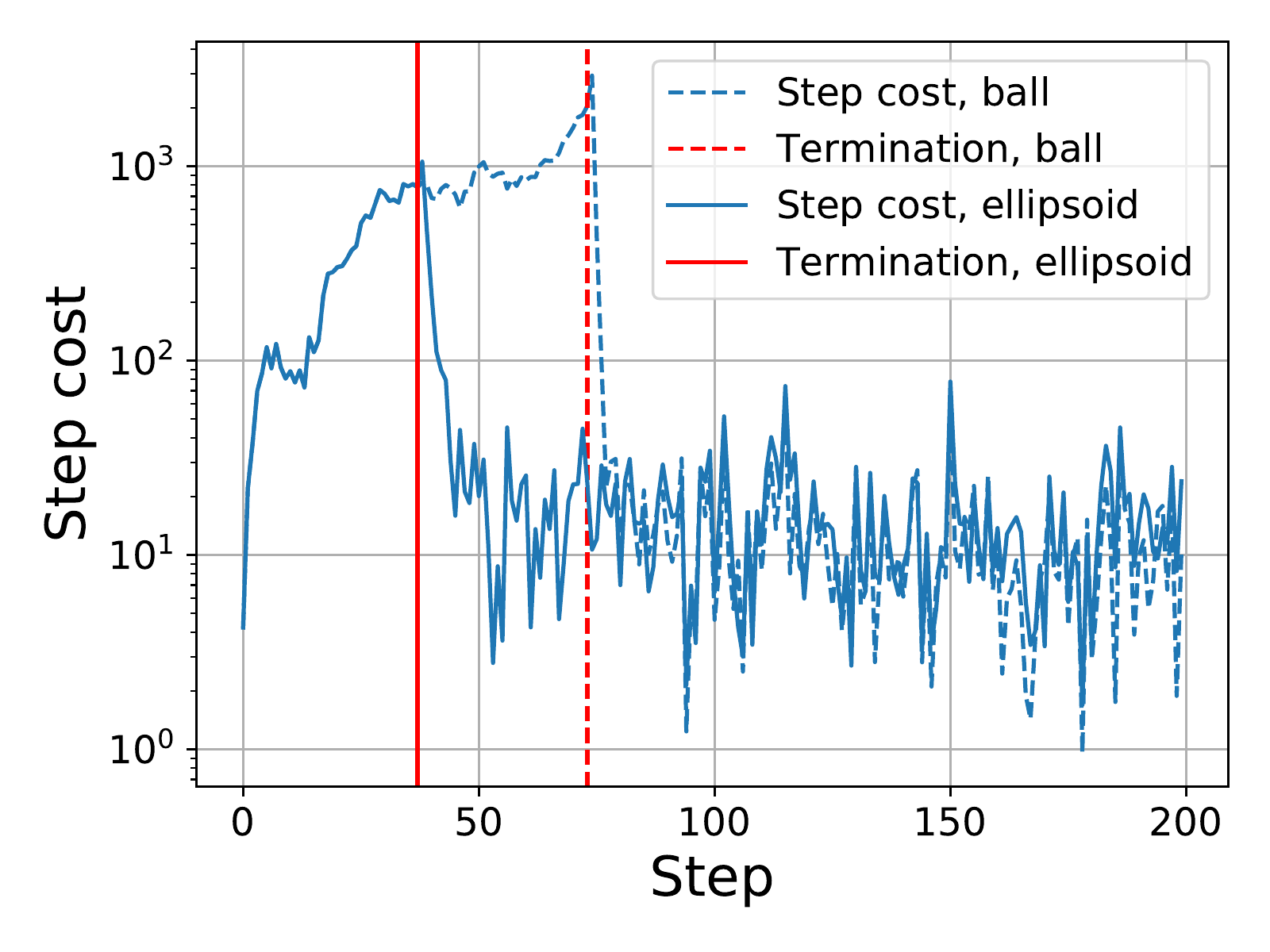}
        \label{fig: ellipsoid ball comparison}
    }
    \end{subfigure}
    \begin{subfigure}[$K_i$ as CE controller]{
        \includegraphics[width=0.5\textwidth]{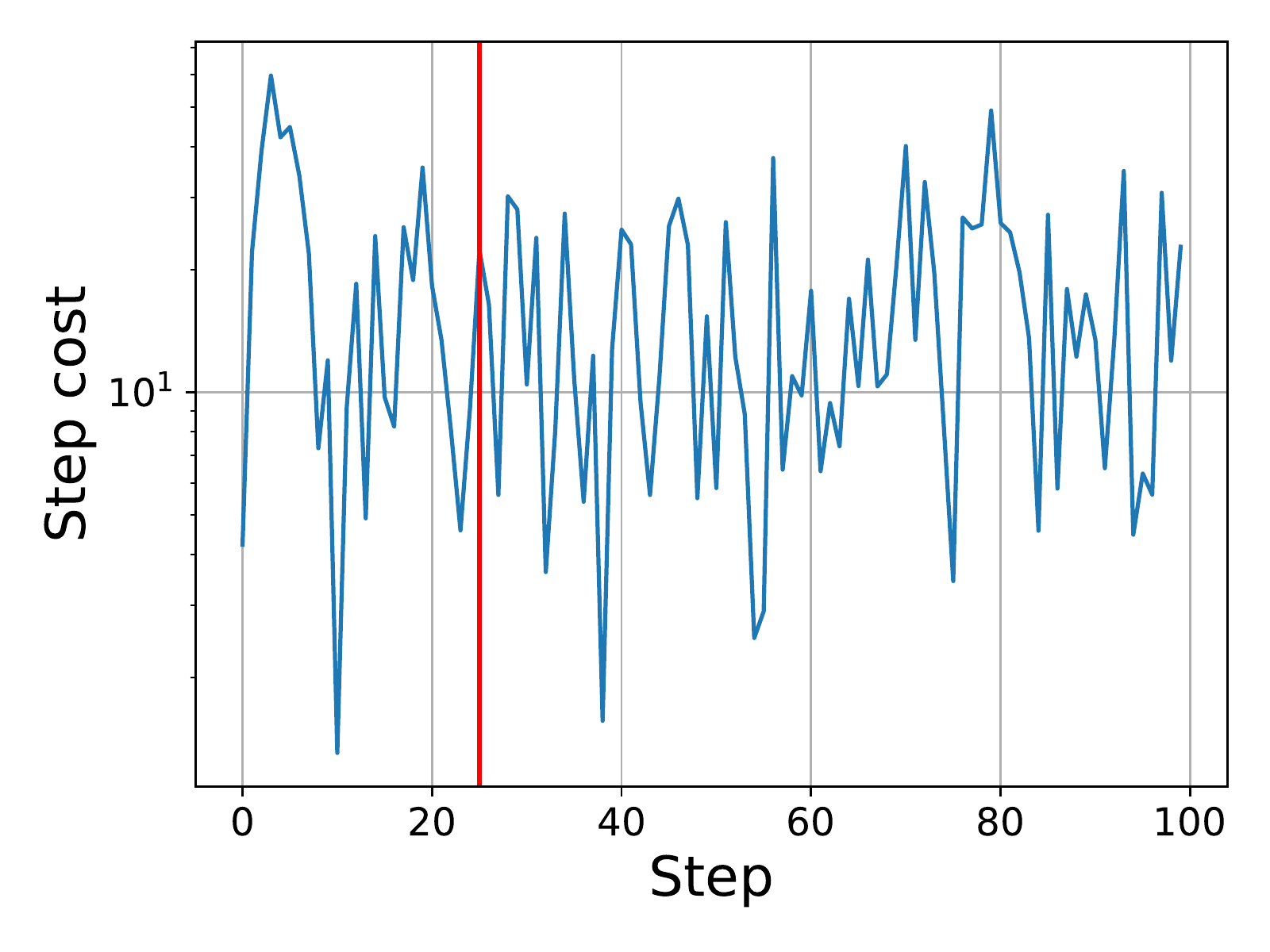}
    }  
    \end{subfigure}
    \begin{subfigure}[$K_i$ as \textsc{MinMax} controller]{
        \includegraphics[width=0.5\textwidth]{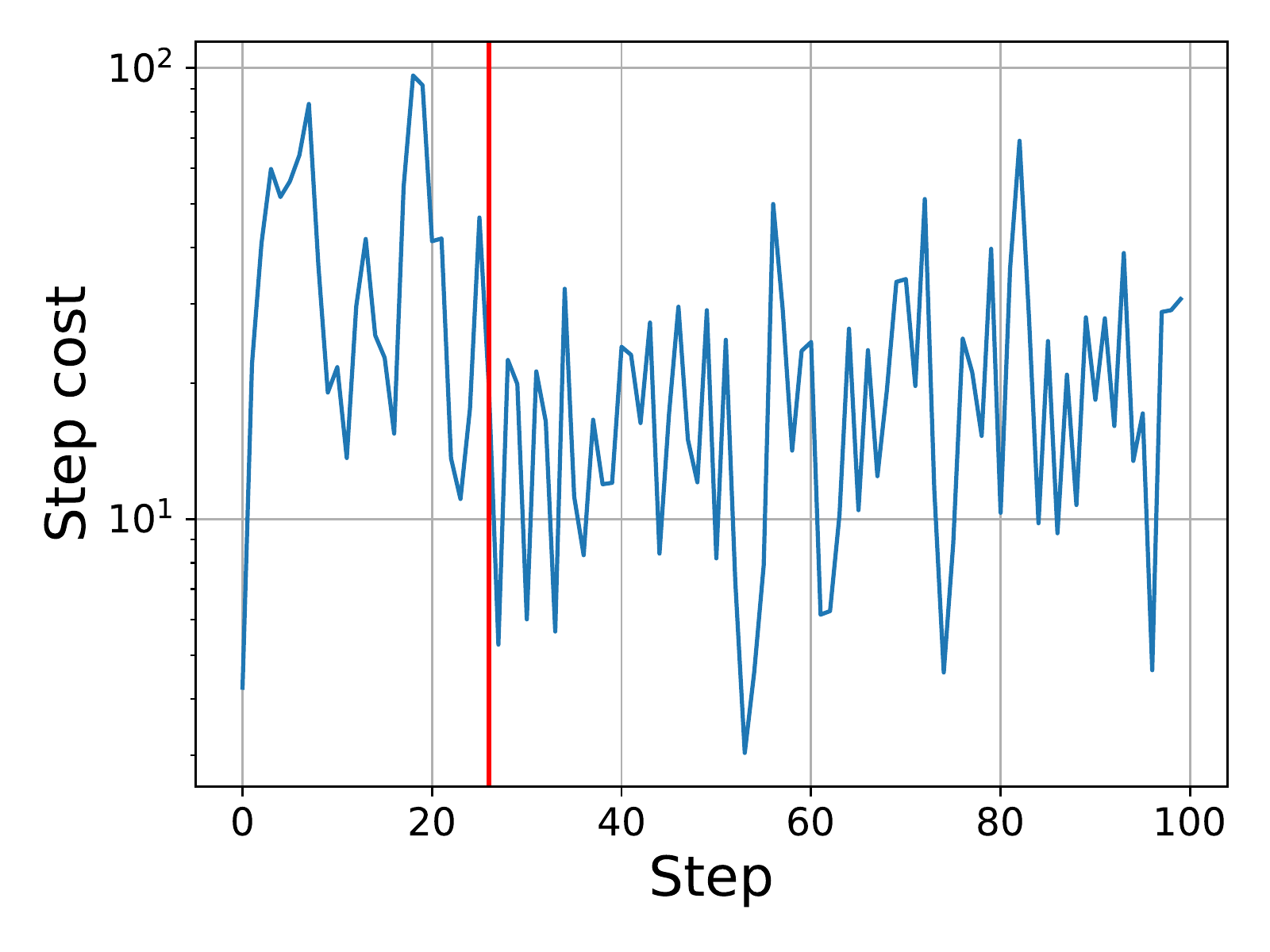}
    }   
    \end{subfigure}
    \begin{subfigure}[$K_i$ as \textsc{RelaxedSDP} controller]{
        \includegraphics[width=0.5\textwidth]{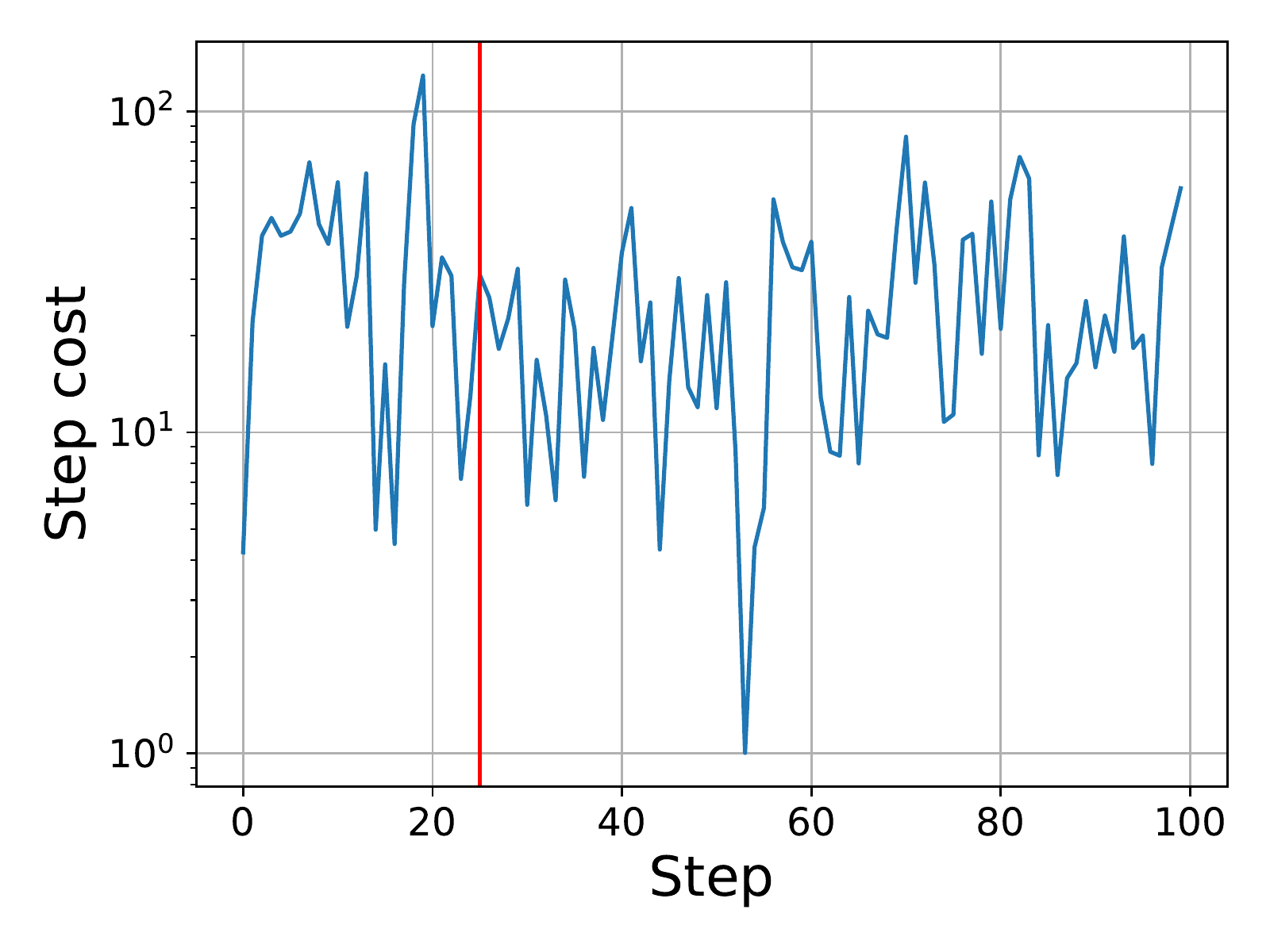}
    }   
    \end{subfigure}
    \caption{Using different controllers before we find a stabilizing controller reduces the initial blowup. When we do not choose zero mean actions we do not have a theoretical guarantee that \textsc{eXploration} will eventually terminate, however experiments show that using controller before stabilization does not harm the termination time of \textsc{eXploration}.}
    \label{fig: sample run with different controllers on system Dean et al}
\end{figure}

Next we provide a sample run on bit more explosive system:
\begin{align}
\label{equation:LargerAndMoreExplosiveSystem}
    A_* = 
    \begin{pmatrix}
        1.5 & 1.0 & 0.4 & 2.3 \\
        0.0 & 1.3 & 1.3 & 1.1 \\
        0.0 & 0.0 & 1.0 & 0.7 \\
        0.0 & 0.0 & 0.0 & 0.8
    \end{pmatrix}
    ,\quad
    B_* = 
    \begin{pmatrix}
        0.6 & 0.7 & 0.3 \\
        0.8 & 1.1 & 1.1 \\
        1.2 & 0.2 & 2.3 \\
        2.1 & 0.4 & 0.4 \\
    \end{pmatrix},
\end{align}
which we present on \Cref{fig: sample run with different controllers on explosive system}. We see that the time it takes to find a stabilizing controller is very short. Usually of the order of system dimension $d_x + d_u$ (this is the time it takes for Grammian matrix $V_i$ to be invertible). We also observe that the initial blow-up is considerably smaller if we use any of the proposed data-dependent controllers $K_i$.

\begin{figure}[ht]
    \begin{subfigure}[$K_i = 0$]{
        \includegraphics[width=0.5\textwidth]{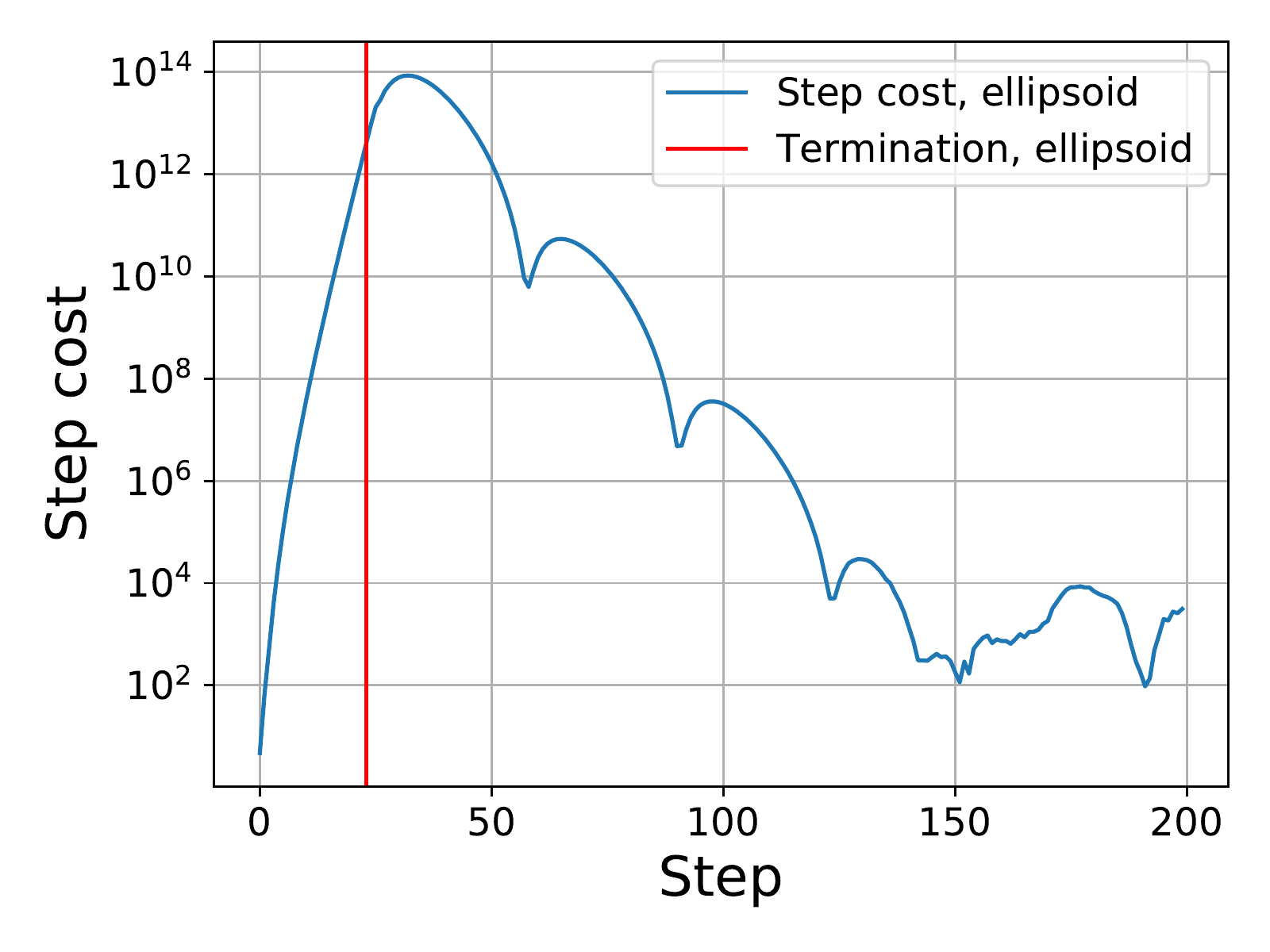}
    }
    \end{subfigure}
    \begin{subfigure}[$K_i$ as \textsc{CE} controller]{
        \includegraphics[width=0.5\textwidth]{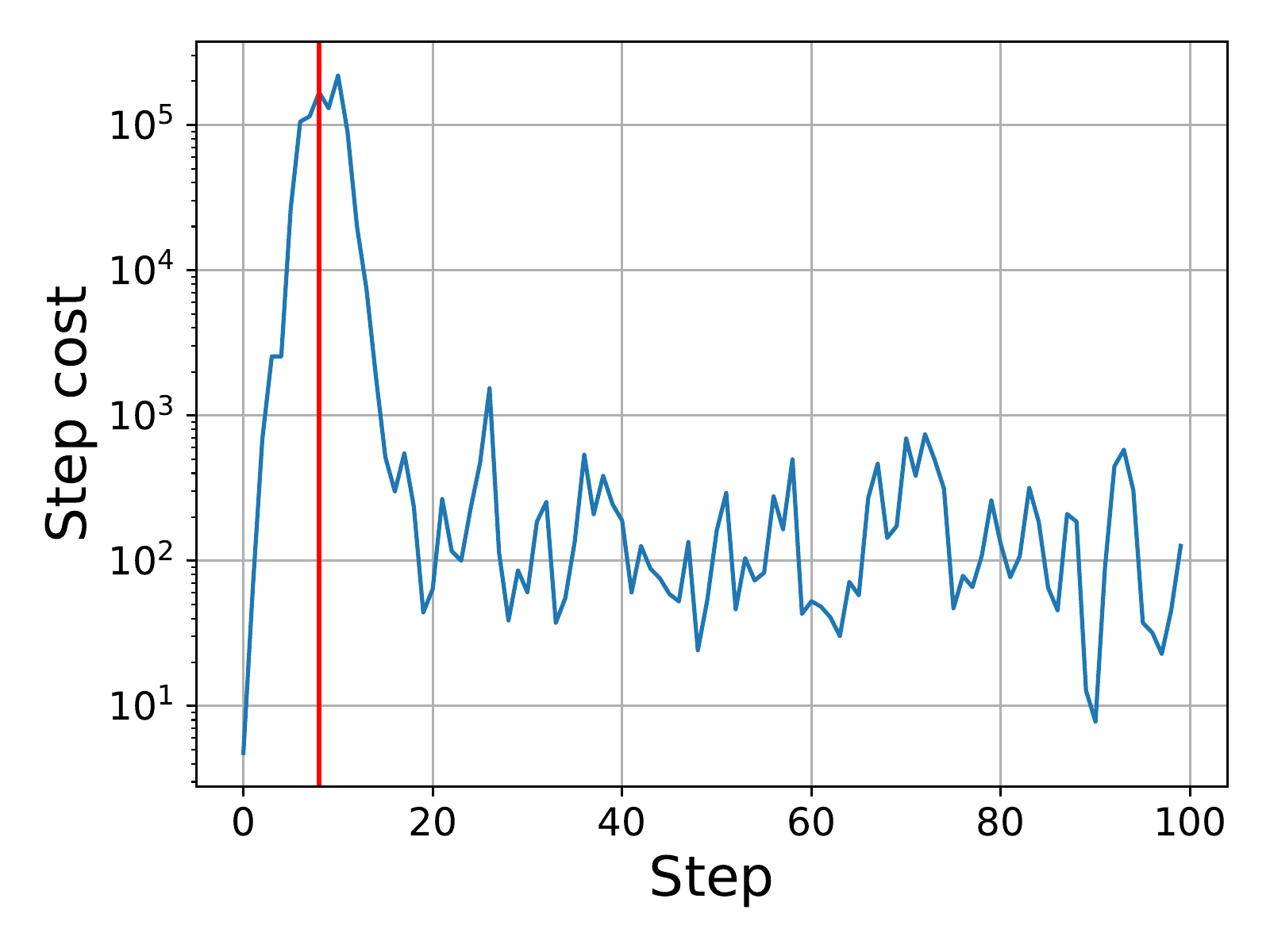}
    }  
    \end{subfigure}
    \begin{subfigure}[$K_i$ as \textsc{MinMax} controller]{
        \includegraphics[width=0.5\textwidth]{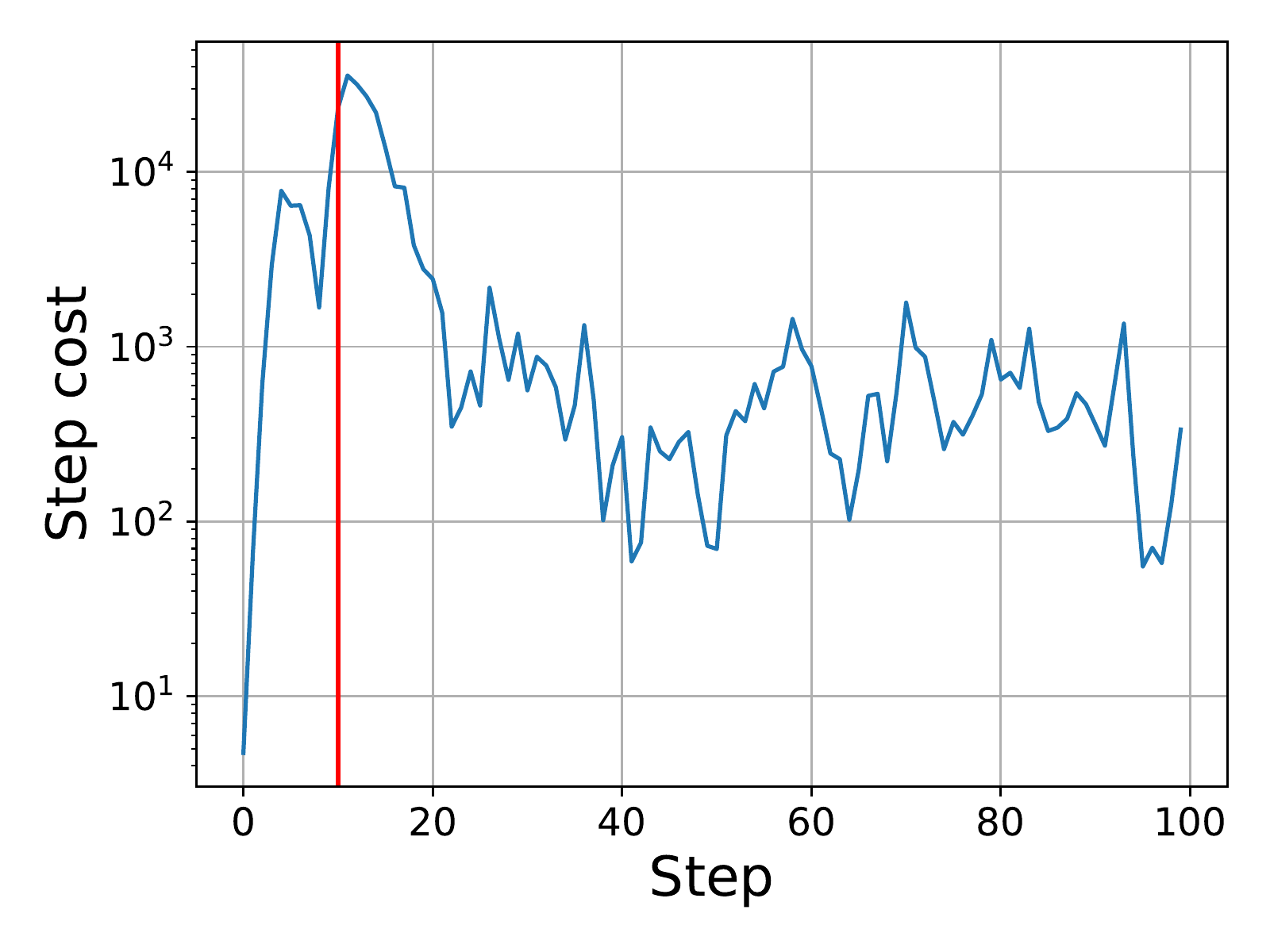}
    }   
    \end{subfigure}
    \begin{subfigure}[$K_i$ as \textsc{RelaxedSDP} controller]{
        \includegraphics[width=0.5\textwidth]{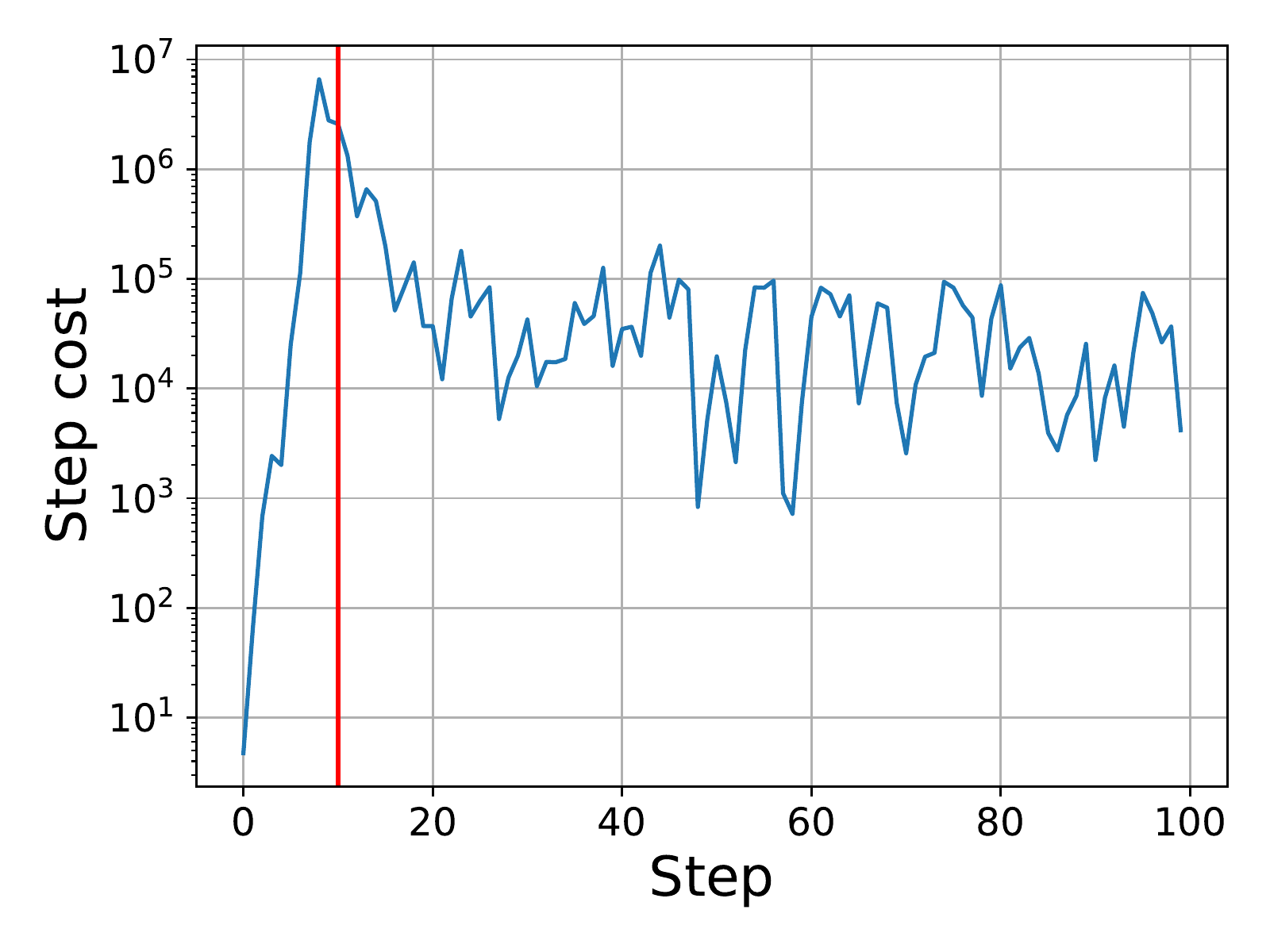}
    }   
    \end{subfigure}
    \caption{On more explosive systems the difference between the case when we use data dependent controllers $K_i$ or inject zero-mean Gaussian actions is more significant.}
    \label{fig: sample run with different controllers on explosive system}
\end{figure}

It is well known that in the classical LQR setting the optimal controller is robust to some extent i.e. it stabilizes also some region around the true estimates. In the next experiment we compare its robustness to the robust controllers described in \Cref{subsection: Robust formulation from System Level Synthesis} and \Cref{subsection: Robust formulation from Semi-Definite Program}. We sample uniformly at random 5 systems $(A, B) \in [-3,3]\times[-3,3]$. Then we analyze the performance of the optimal and robust controller on systems within some ball around them. For every radius we plot the largest infinite horizon cost of any system inside the ball if we played the proposed controllers. As we can see on \Cref{fig: comparison of robust and optimal controller}, the robust controller always stabilizes larger region, on System 2 on \Cref{fig: comparison of robust and optimal controller} it stabilizes region with almost twice as large radius as the optimal controller.

\begin{figure}[H]
        \includegraphics[width=0.5\textwidth]{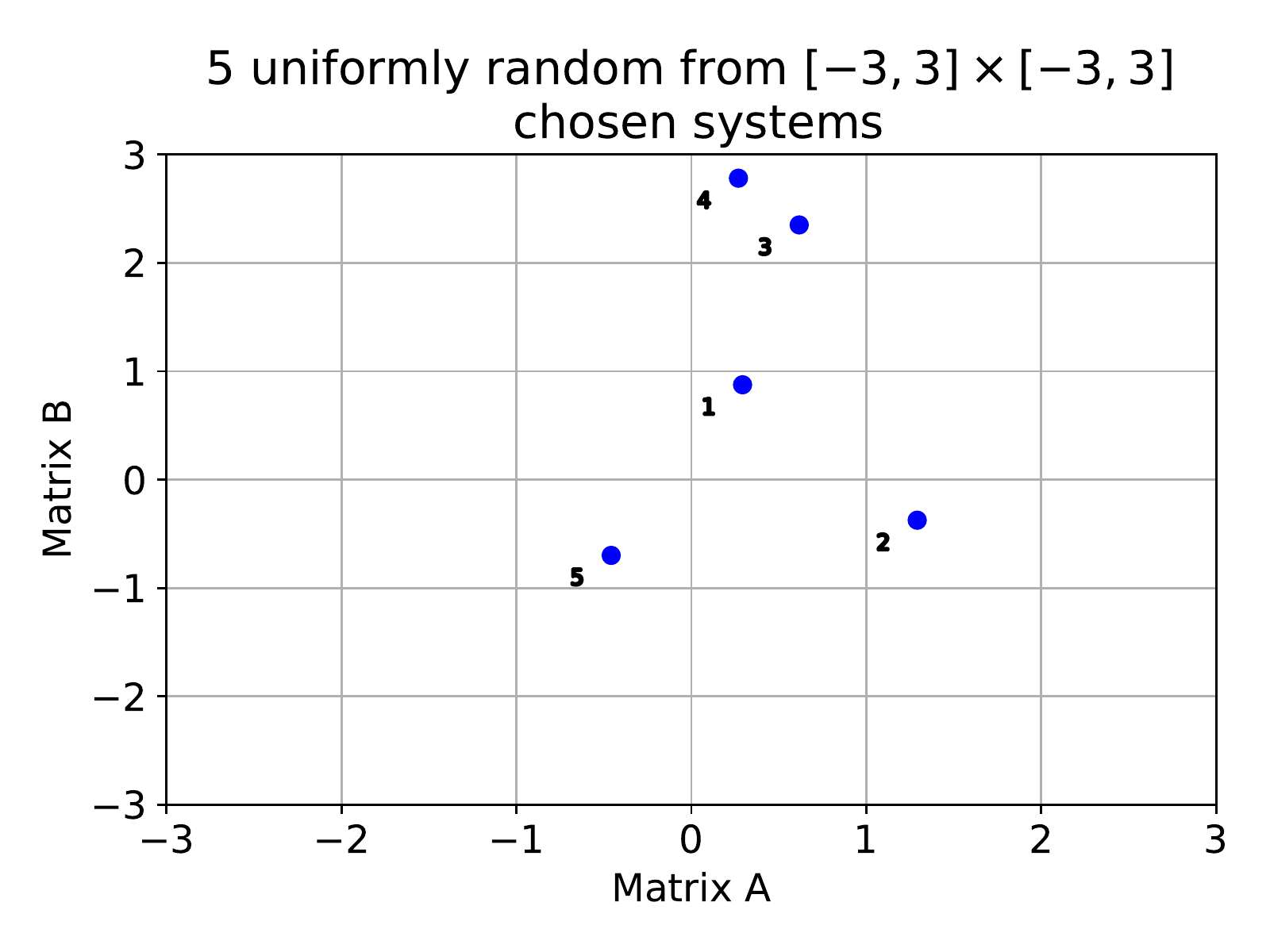}
        \includegraphics[width=0.5\textwidth]{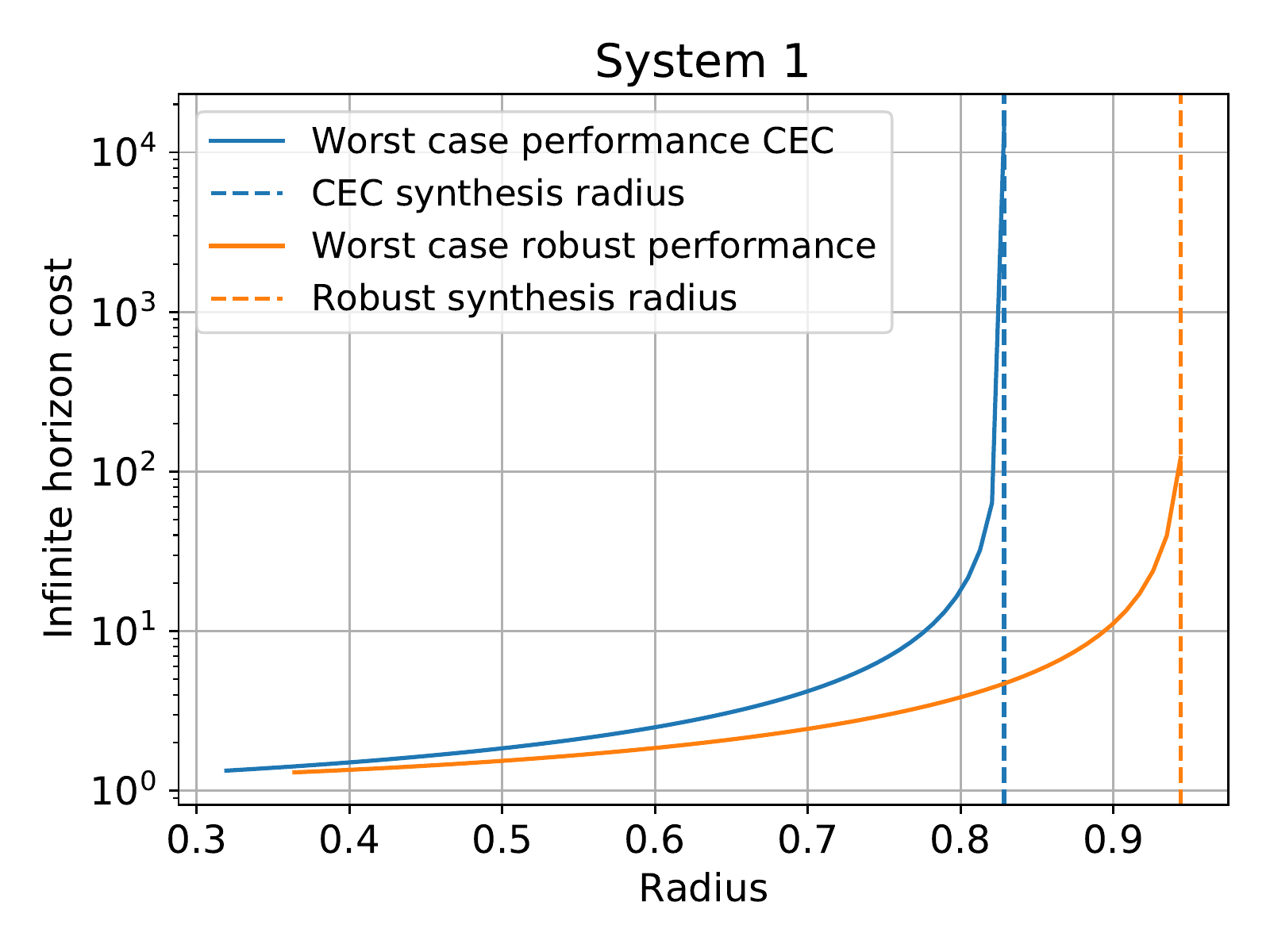}
        \includegraphics[width=0.5\textwidth]{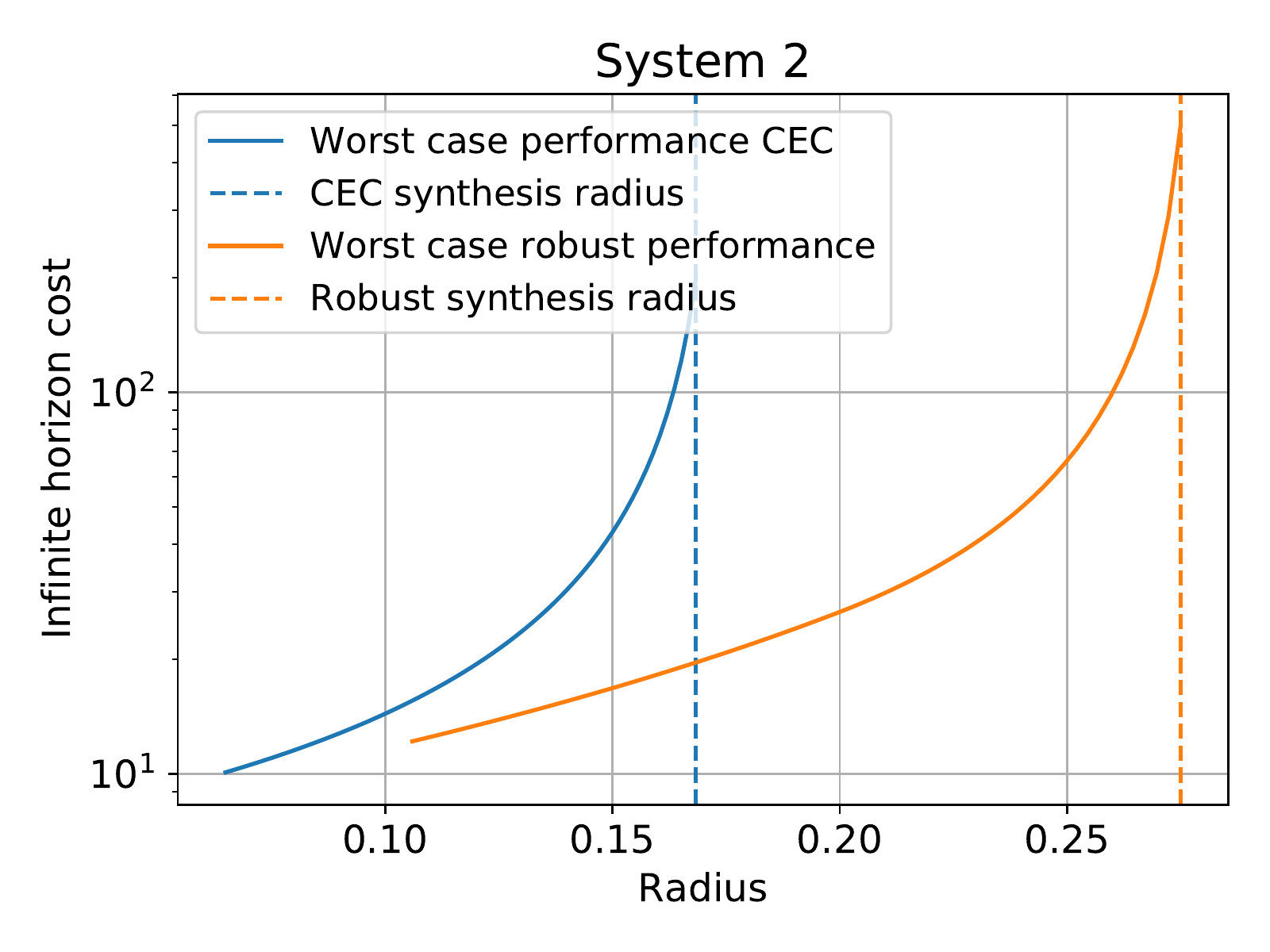}
        \includegraphics[width=0.5\textwidth]{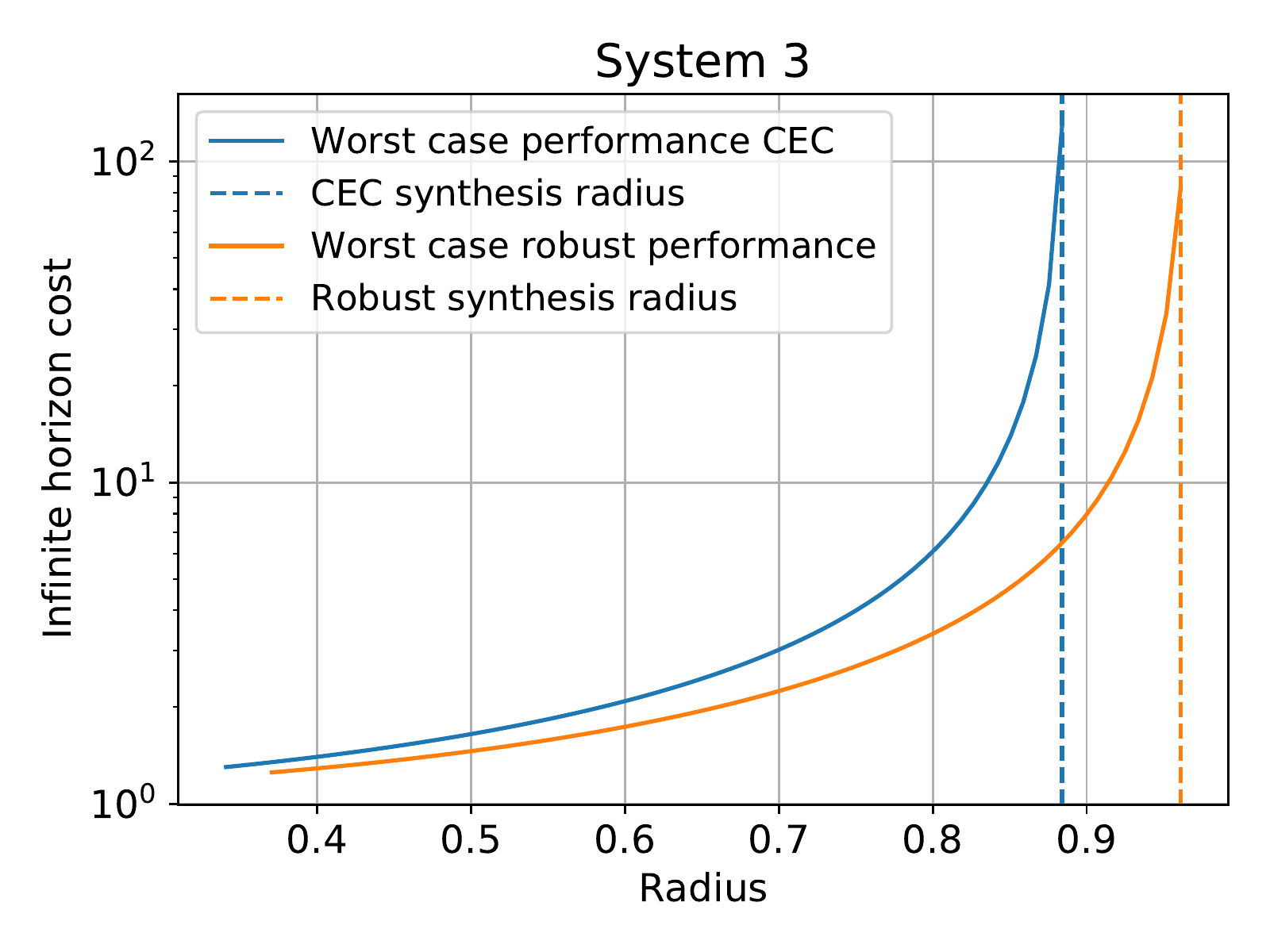}
        \includegraphics[width=0.5\textwidth]{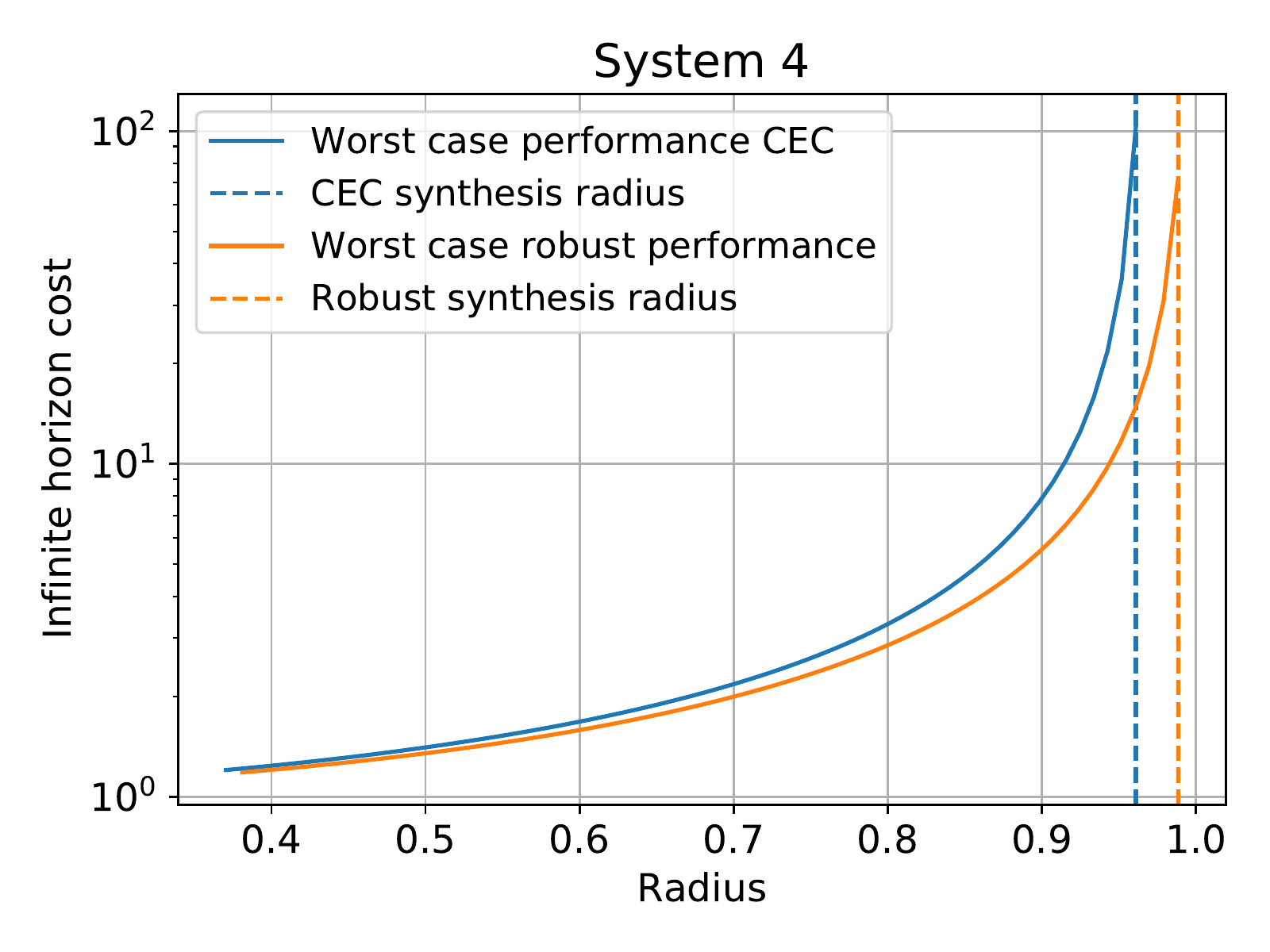}
        \includegraphics[width=0.5\textwidth]{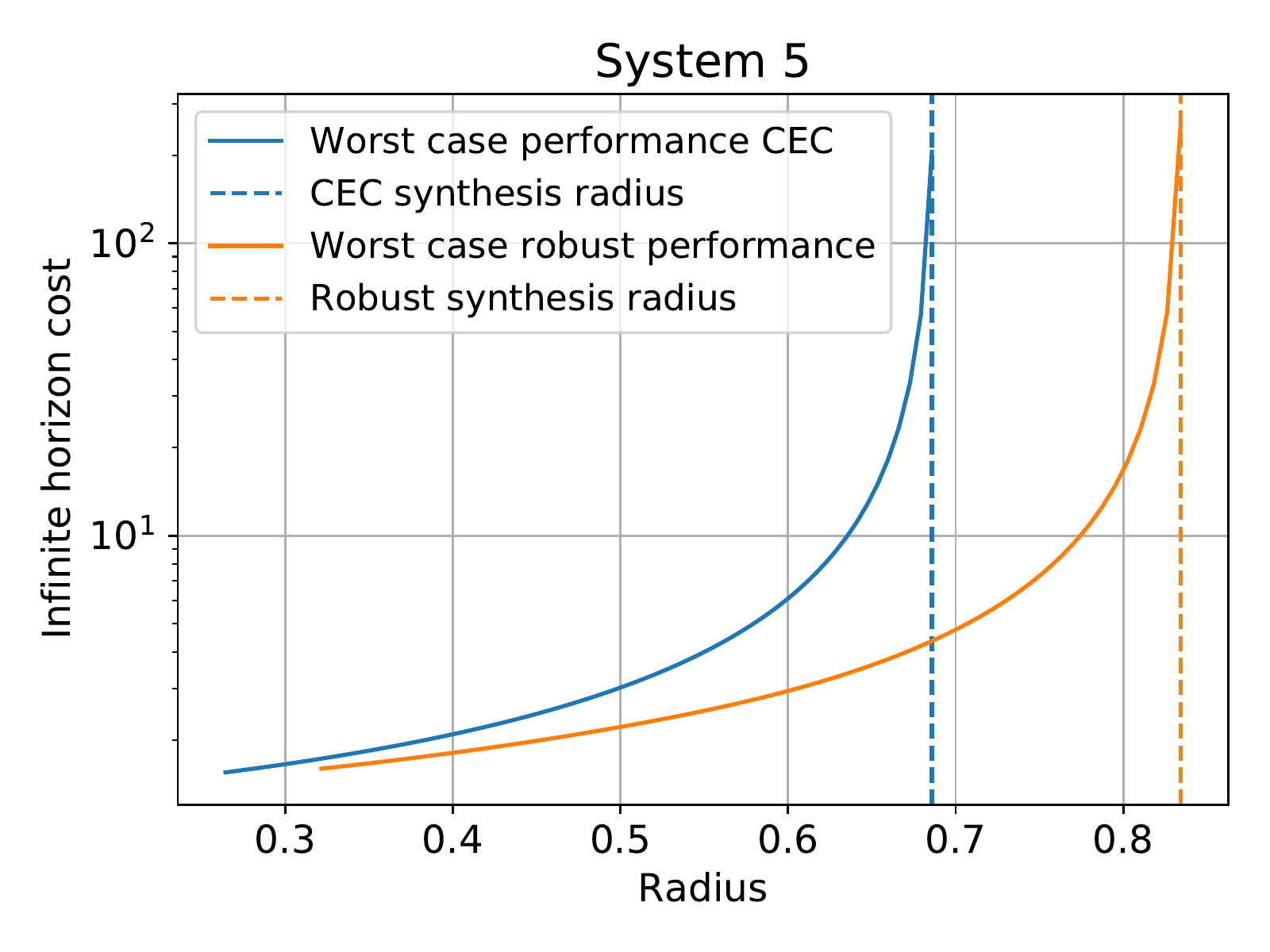}
    \caption{We sampled 5 systems and then computed the worse case performance on the systems within some radius around them. As we can see the robust controller can stabilize all systems within considerably larger radius around true system compared to CEC.}
    \label{fig: comparison of robust and optimal controller}
\end{figure}

\newpage
\section{Bounds discussion}
\label{section: Bounds failure}

We empirically observed that setting $\lambda = \frac{2\sigma_w^2\Gamma(\frac{n}{2}+1)}{C^2}$ yields that credibility regions, with $\delta$ probability of failure, contain systems with $\norm{(A_*, B_*)}_F \le C$ with empirical probability at least $1-\delta$. Next we will motivate the decision behind such $\lambda$ selection.

If the prior belief about the system is uniformly distributed over the set $I = \{(A, B)|\norm{(A~B)}_F \le C\}$, then in order for prior Gaussian probability density function to be smaller or equal on $I$ than prior uniform probability density function we need to choose $\lambda$ such that:
\begin{align*}
    \frac{1}{(2\pi)^{\frac{n}{2}}\left(\frac{\sigma_w^2}{\lambda}\right)^{\frac{n}{2}}} \le \frac{\Gamma(\frac{n}{2}+1)}{\pi^{\frac{n}{2}}C^n},
\end{align*}
which is satisfied if we select $\lambda = \frac{2\sigma_w^2\Gamma(\frac{n}{2}+1)}{C^2}$. As can be seen from \Cref{figure: lambda selection} this selection might not be the optimal one, however it seems that it at least captures the order of behavior $\lambda \approx 1/C^2$.

The ultimate goal would be to obtain data dependent consistent (for regular systems) estimation error upper bounds in the frequentist setting. If we take as the estimators $\widehat{A}, \widehat{B}$ the RLS with regularizing parameter $\lambda$ one can show that:
\begin{align*}
    \left((\widehat{A}~\widehat{B}) - (A_*~B_*)\right)^\top &= \left(V_i + \lambda I\right)^{-1}S_i -  \lambda\left(V_i + \lambda I\right)^{-1}(A_*~B_*)^\top,
\end{align*}
where $V_i = \sum_{j=0}^{i-1}z_jz_j^\top$ and $S_i = \sum_{j=0}^{i-1}z_jw_{j+1}^\top$, with $z_j = (x_j^\top u_j^\top)^\top$. \citet{Sarkar2018Identification} showed that for regular systems $\norm{\left(V_i + \lambda I\right)^{-1}S_i -  \lambda\left(V_i + \lambda I\right)^{-1}(A_*~B_*)^\top}_2 \stackrel{i \to \infty}{\to} 0$. However the term $S_i$ is not observed.
One could try and use the theory of Self-Normalizing Martingales and show that w.p. $1-\delta$:
 \begin{align}
 \label{self-normalizing martingales}
    \norm{(V_i + \lambda I)^{-\frac{1}{2}}S_i}^2_2 \le 8\sigma_w^2\log \left(\frac{\det(V_i + \lambda 
    I)^{\frac{1}{2}}}{\det(\lambda I )^{\frac{1}{2}}}\frac{5^d}{\delta}\right),    
 \end{align}
however, since the norm of states $x_i$ can grow exponentially, one could show that the right hand side of \Cref{self-normalizing martingales} can grow linearly and hence the upper bound on estimation error could be inconsistent. \citet{Sarkar2018Identification} used the theory of Self-Normalizing Martingales to show that OLS is consistent, however instead of $\lambda I$ in \Cref{self-normalizing martingales} they inserted $V_{dn}$ with $V_{dn} \preceq V_i$, however $V_{dn}$ significantly depends on the system $(A, B)$ and is to the best of or knowledge not known how to obtain it in the data-dependent setting.
How to get consistent data dependent upper bounds for the error term $\norm{\left(V_i + \lambda I\right)^{-1}S_i -  \lambda\left(V_i + \lambda I\right)^{-1}(A_*~B_*)^\top}_2$ is to the best of our knowledge also not known. However as we argued in \Cref{section: Experiments}, we observed empirically that with the right prior selection of $\lambda$, based on the knowledge of $C$ with $\norm{(A_*~B_*)}_F \le C$, we can use Bayesian credibility regions $\Theta$.

\begin{figure}[H]
    \centering
    \begin{subfigure}[$d_x =2, d_u = 1$]{
        \includegraphics[width=0.30\textwidth]{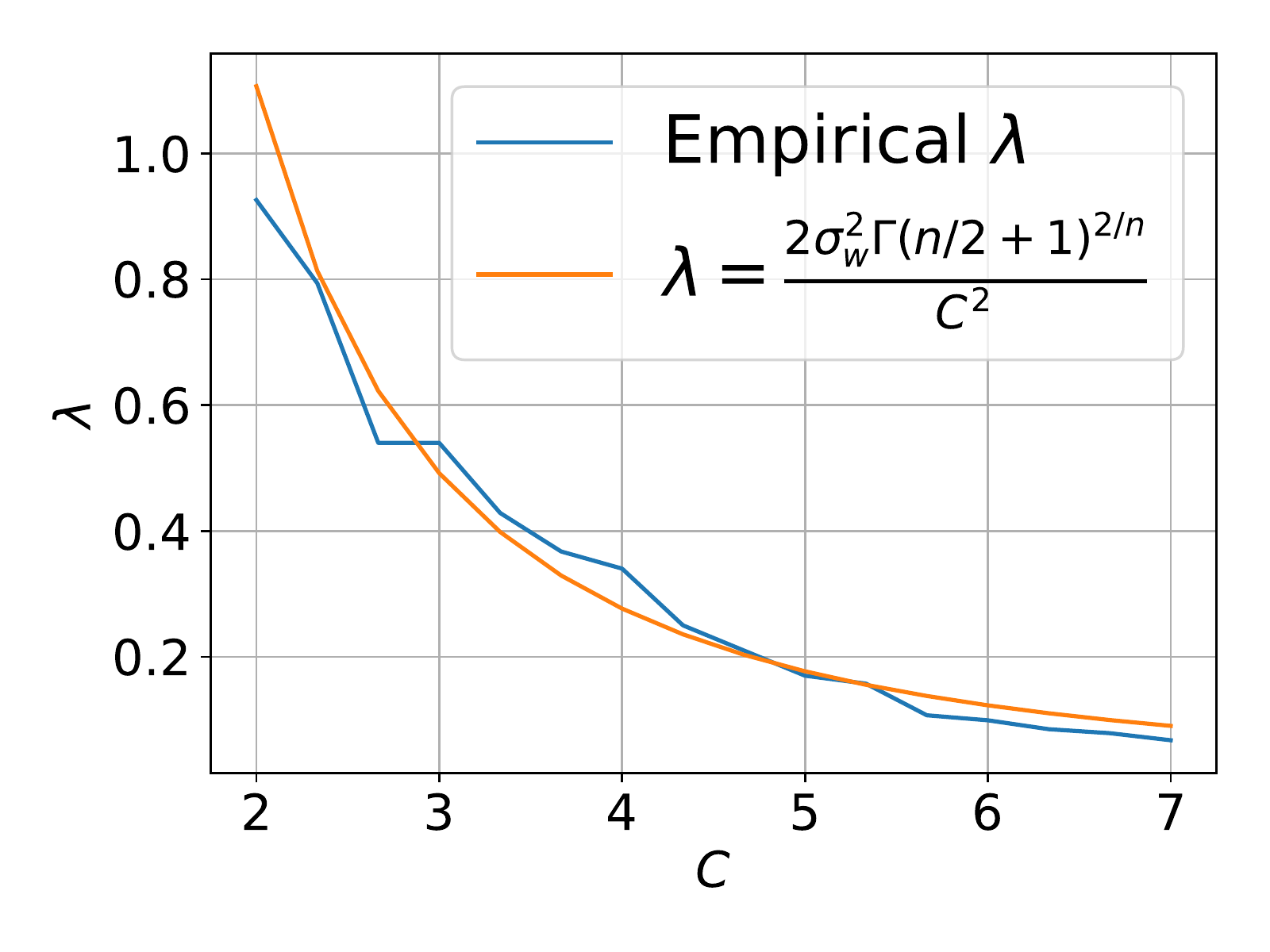}
        \label{figure: d_x = 2, d_u = 1}
    }
    \end{subfigure}
    \begin{subfigure}[$d_x=d_u=2$]{
        \includegraphics[width=0.30\textwidth]{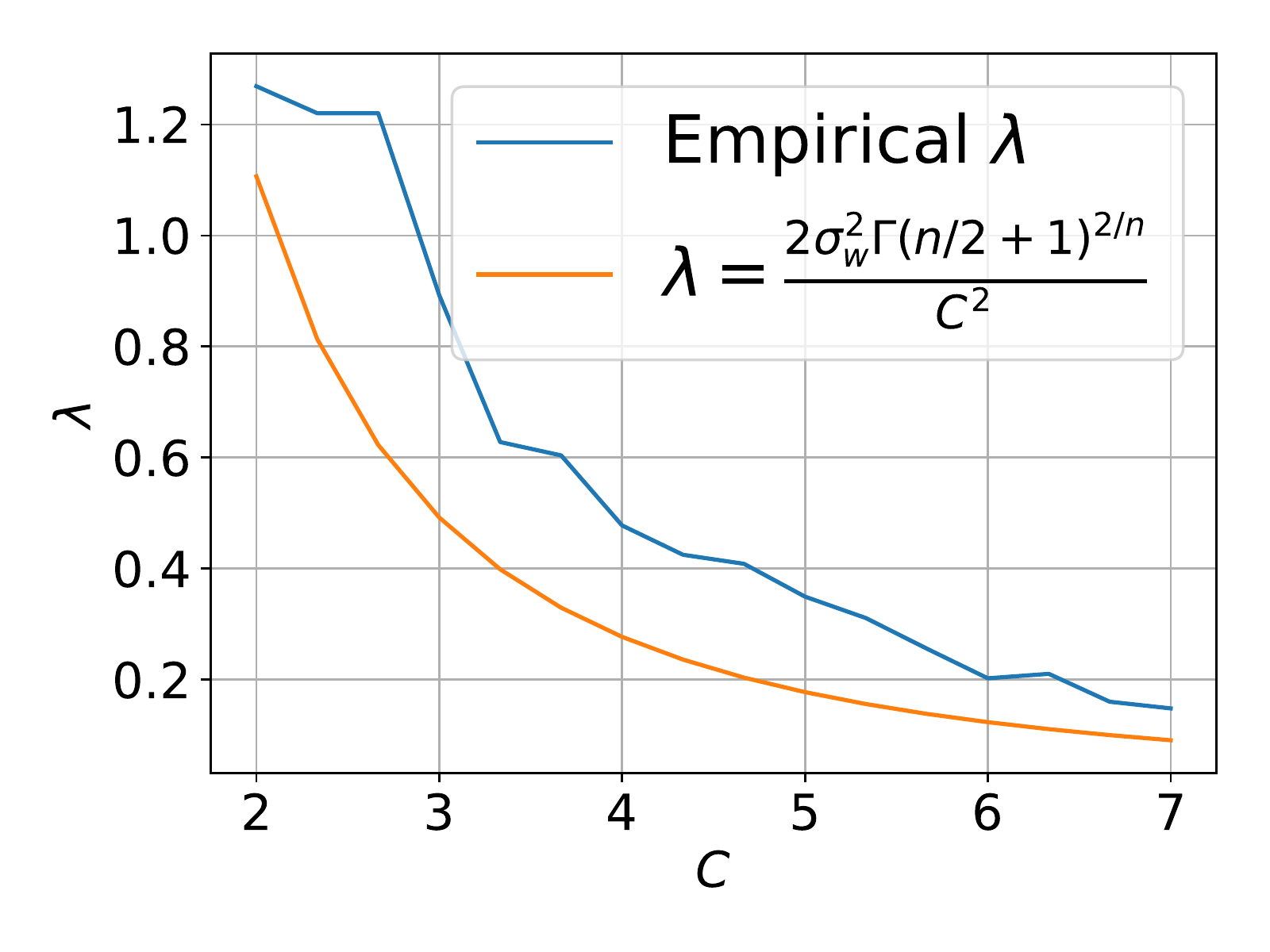}
        \label{figure: d_x, d_u = 2}
    }
    \end{subfigure}
    \begin{subfigure}[$d_x=d_u=3$]{
        \includegraphics[width=0.30\textwidth]{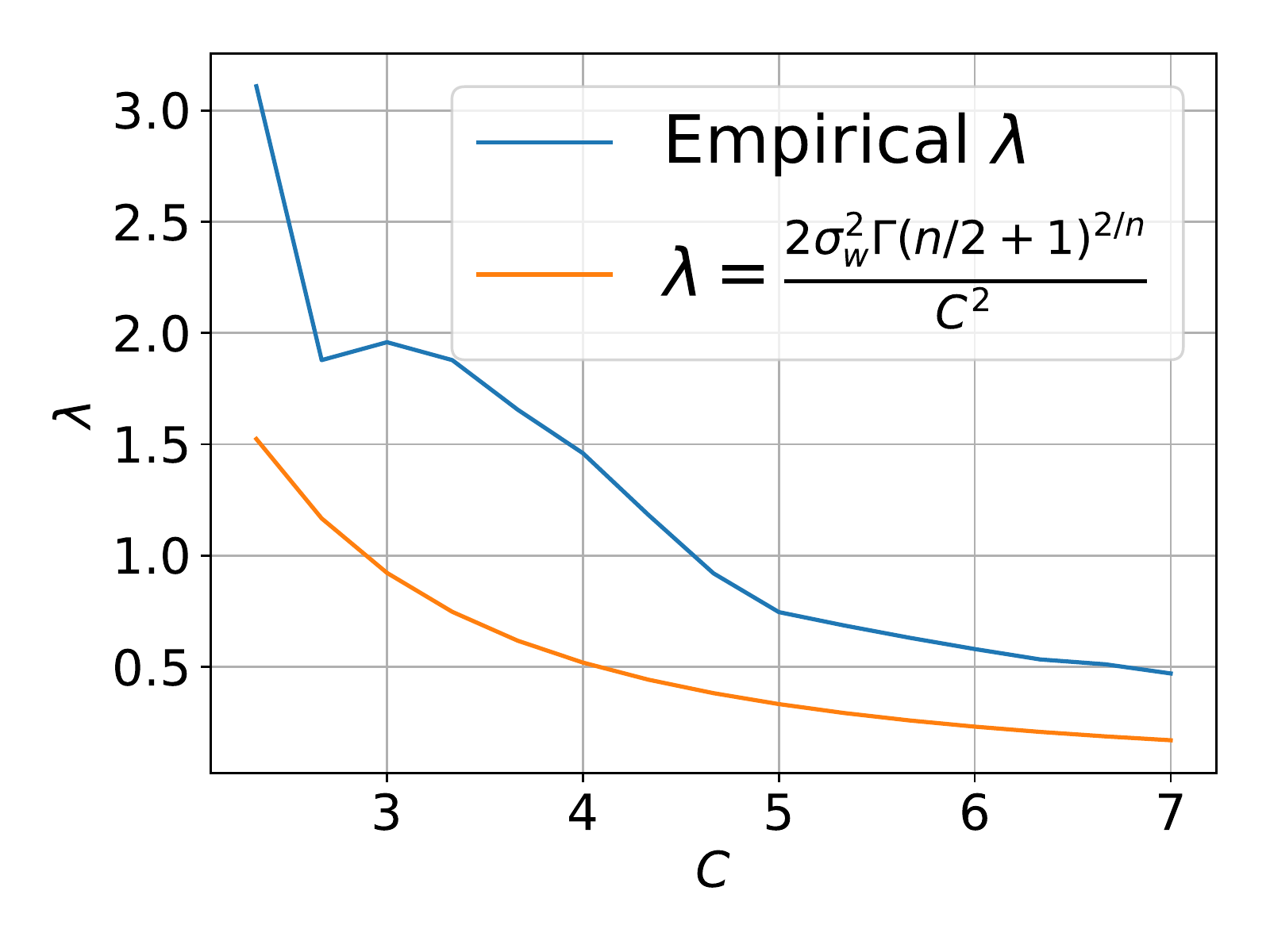}
        \label{figure: d_x=d_u=3}
    }
    \end{subfigure} 
    \caption{If we select $\lambda$ as described above, it empirically turns out to be small enough, such that region $\Theta$ contains $A_*, B_*$.}
    \label{figure: lambda selection}
\end{figure}

In the experiment presented in right most picture \Cref{figure: moving credibility regions} and \Cref{figure: lambda selection} we first selected different $C$-s for which we computed the "optimal" $\lambda$ ($\lambda$ presented on the figures as Empirical $\lambda$.) How we computed "optimal" $\lambda$? Consider fixed $C$ and $\lambda$. We sampled $200$ systems $(A ~B)$ uniformly at random from the set $\{(A~B)|\norm{(A~B)}_F = C\}$ and evolved each system for 10 steps. Then we computed the share of the steps when $(A~B)$ were inside region $\Theta$. The "optimal" $\lambda$ is the largest $\lambda$ where the share is larger than $1-\delta$.
\end{document}